\definecolor{qcblue}{rgb}{0.098, 0.282, 0.502} 
\definecolor{qcgreen}{rgb}{0.278, 0.604, 0.161}  
\newtheorem{proposition}{Proposition}
\newtheorem{lemma}{Lemma}
\newtheorem*{propositionunnum}{Proposition}
\newtheorem*{lemmaunnum}{Lemma}
\newtheorem{definition}{Definition}
\newtheorem{corollary}{Corollary}
\newcommand{\corrref}[1]{Corollary~\ref{#1}}
\newcommand{\defref}[1]{Definition~\ref{#1}}
\newcommand{\secref}[1]{\nameref{#1}}
\newcommand{\lemref}[1]{Lemma~\ref{#1}}
\newcommand{\propref}[1]{Proposition~\ref{#1}}
\newcommand{\tabref}[1]{Table~\ref{#1}}
\renewcommand{\eqref}[1]{equation~(\ref{#1})}
\newcommand{\figref}[1]{Fig.~\ref{#1}}
\newcommand{\appref}[1]{Supp.~Mat.~\ref{#1}}
\titleformat{\section}[block]{\normalfont\Large\bfseries}{}{0em}{}
\titleformat{\subsection}[block]{\normalfont\large\bfseries}{}{0em}{}
\titleformat{\subsubsection}[block]{\normalfont\normalsize\bfseries}{}{0em}{}
\def\orcid#1{\kern -0.4em\href{https://orcid.org/#1}{\includegraphics[keepaspectratio,width=0.7em]{orcid_logo.pdf}}} 
\title{Training-efficient density quantum machine learning}
\author[1, *]{Brian Coyle}
\author[1, 2]{Snehal Raj}
\author[1, 3]{Natansh Mathur}
\author[1]{El Amine Cherrat}
\author[1, 4]{Nishant Jain}
\author[1]{Skander Kazdaghli}
\author[1, 3]{Iordanis Kerenidis}
\affil[1]{QC Ware, Palo Alto, USA and Paris France.}
\affil[2]{LIP6, CNRS, Sorbonne Universit\'e, 4 Place Jussieu, 75005 Paris, France}
\affil[3]{IRIF, CNRS - University of Paris, France.}
\affil[4]{Indian Institute of Technology, Roorkee, India.}
\affil[*]{Corresponding author, brian.coyle@qcware.com}
\begin{document}
\doparttoc 
\faketableofcontents 

\date{}

\twocolumn[
  \begin{@twocolumnfalse}
    \maketitle
\begin{abstract}
\textbf{
Quantum machine learning (QML) requires powerful, flexible and efficiently trainable models to be successful in solving challenging problems. We introduce density quantum neural networks, a model family that prepares mixtures of trainable unitaries, with a distributional constraint over coefficients. This framework balances expressivity and efficient trainability, especially on quantum hardware. For expressivity, the Hastings-Campbell Mixing lemma converts benefits from linear combination of unitaries into density models with similar performance guarantees but shallower circuits. For trainability, commuting-generator circuits enable density model construction with efficiently extractable gradients. The framework connects to various facets of QML including post-variational and measurement-based learning. In classical settings, density models naturally integrate the mixture of experts formalism, and offer natural overfitting mitigation. The framework is versatile - we uplift several quantum models into density versions to improve model performance, or trainability, or both. These include Hamming weight-preserving and equivariant models, among others. Extensive numerical experiments validate our findings.
}
\end{abstract}
  \end{@twocolumnfalse}
]

\section{Introduction}

\begin{figure*}[!ht]
\centering
    \includegraphics[width=\linewidth]{Fig1_LCU_Density_Det_Density_Rand.pdf}
    \caption{
    \textsf{
    \textbf{Density quantum neural networks.} \\
    a) Linear combination of unitaries quantum neural networks (LCU QNNs) preparing the state $\sum_k\alpha_k U_k(\boldsymbol{\theta}_k)\ket{\boldsymbol{x}}$ via postselection on an ancilla register $\mathcal{A}$ which prepares the distribution $\boldsymbol{\alpha}$. b) shows corresponding density quantum neural network, implemented deterministically to prepare the state $\rho(\boldsymbol{\theta}, \boldsymbol{\alpha}, \boldsymbol{x})$. Finally, the instantiation of the density QNN state via randomisation is shown in (c) where sub-unitary, $U_k(\boldsymbol{\theta}_k)$ is only prepared with the probability $\alpha_k$ without the need for the multi controlled deep circuits and ancilla qubits. The deterministic density QNN, (b) is required if one wishes to make a true comparison of these networks to the dropout mechanism. From the Mixing lemma, the randomised version, (c), can distill the performance benefits of the more powerful LCU QNN, (a), into very short depth circuits. The probability loaders, $\mathsf{Load}\left(\sqrt{\boldsymbol{\alpha}}\right)$ are assumed to be unary data loaders which act on $K$ qubits within the register, $\mathcal{A}$ and have depth $\log(K)$~\cite{johri_nearest_2021}. One could also use binary \texttt{Prepare} and \texttt{Select} circuits acting on $\log(K)$ qubits as is more standard in LCU literature. The resulting functions from each network, $f(\boldsymbol{\theta}, \boldsymbol{\alpha}, \boldsymbol{x})$ result from the measurement of an observable, $\mathcal{O}$, via $f(\boldsymbol{\theta}, \boldsymbol{\alpha}, \boldsymbol{x}) = \Tr(\mathcal{O}\sigma(\boldsymbol{\theta}, \boldsymbol{\alpha}, \boldsymbol{x}))$, where $\sigma$ is the output state from each circuit. $V(\boldsymbol{x})$ is the $n$-qubit data loader acting on register, $\mathcal{B}$.
    }
    }
    \label{fig:density_qnn_mainfig}
\end{figure*}


Modern deep learning owes much of its success to the existence to three factors. The first is increasing resource availability, meaning data and compute power. Secondly, families of powerful and expressive models, such as multilayer perceptrons (MLP)~\cite{rosenblatt_perceptron_1958} convolutional and recurrent neural networks (CNN/RNNs)~\cite{lecun_deep_2015}, and multi-head attention mechanisms ~\cite{bahdanau_neural_2016,vaswani_attention_2017}. With these model primitives, one may build composites such as transformer, diffusion and mixer models which ultimately lead to specific state of the art models such as AlphaZero~\cite{silver_mastering_2017}, GPT-3~\cite{brown_language_2020} or Dalle~\cite{ramesh_zero-shot_2021}. Particularly for these vast models containing billions of parameters, and the enormous quantity of data required, an efficient training algorithm is essential. A cornerstone of many such protocols is the \emph{backpropagation} algorithm~\cite{rumelhart_learning_1986}, and its variants, which enables the computation of gradients throughout the entirety of the network, with minimal overhead beyond the network evaluation itself. 


It is reasonable to assume a similar trajectory for quantum machine learning. While rapid progress in quantum error correction~\cite{sivak_real-time_2023, acharya_quantum_2024, silva_demonstration_2024} is increasing the number and quality of effective qubits (compute power), quantum processing units (QPUs) will still likely remain significantly depth-limited for the foreseeable future. On the model side, \emph{quantum} neural networks (QNNs) are typically (but not exclusively) constructed from parametrised quantum circuits (PQCs)~\cite{benedetti_parameterized_2019, bharti_noisy_2022, cerezo_variational_2021, cerezo_challenges_2022}. However, in many cases these lack task-specific features and unfortunately do not \emph{generally} possess an efficient scaling for training in line with classical backpropagation~\cite{abbas_quantum_2023}. Therefore, it is essentially to develop quantum models which can be associated to specific interpretations (for example the convolutional operation on images), and which are efficient to train. The popular \emph{parameter-shift} rule for QNNs~\cite{mitarai_quantum_2018, crooks_gradients_2019, vidal_calculus_2018, schuld_evaluating_2019, sweke_stochastic_2020, kyriienko_generalized_2021}, extracts analytic gradients (i.e., not relying on approximate finite differences), but even the simplest instance of the rule, requires $\mathcal{O}(N)$ gradient circuits to be evaluated for $N$ parameters. To put this in perspective, it was estimated in Ref.~\cite{abbas_quantum_2023} that, if one is allowed only a single day of computation, the parameter-shift rule can only compute gradients on $n\sim 100$ qubit trainable circuits with \emph{only} $\sim 9000$ parameters, assuming reasonable quantum clock speeds. This also does not account for various other problem specific scalings, such as the data which needs to be iterated over for \emph{each} training iteration, or other obstacles such as barren plateaus~\cite{mcclean_barren_2018}. Scaling current quantum training approaches towards the size of billion or trillion-parameter deep neural networks, which have been so successful in the modern era, clearly will not be feasible with such methods. Additionally, since we are arguably in the boundary between the NISQ and ISQ eras, models should use circuits which are as compact, yet expressive as possible. On the other hand, they should also be complex enough to avoid classical simulation, surrogation or dequantisation~\cite{landman_classically_2023, rudolph_classical_2023, bermejo_quantum_2024} but not so complex to admit barren plateaus~\cite{cerezo_does_2024}. Clearly, satisfying all of these constraints is challenging task.

To partially address some of these challenges, in this work we introduce a framework of models dubbed \emph{density} quantum neural networks. Our primary aim is to showcase how these models add another dimension to the landscape of quantum learning models, giving practitioners a new toolkit to experiment with when tackling the above questions. Through the text, we demonstrate how one may construct density QNNs may be constructed which are more trainable, or more expressive than their pure-state counterparts. Our results are laid out as follows. First, we introduce the general form of the density framework, before discussing comparisons and relationships to other QML model families/frameworks in the literature. Then, we propose two methods of preparing such models on a quantum computer. Next, we prove our primary theoretical results - firstly relating to the gradient query complexity of such models, and secondly discussing the connection to non-unitary quantum machine learning via the Mixing lemma from randomised compiling. We then discuss two proposed connections between density networks and mechanisms in the classical machine learning literature. First, there has been suggestion in the literature that the density networks as we propose them may be a quantum-native analogue of the \emph{dropout} mechanism. We propose separate training and inference phases for density QNNs to bring this comparison closer to reality, but find it still lacking as a valid comparison. Secondly, we demonstrate a strong realisation of density networks within the \emph{mixture of experts} (MoE) framework from classical machine learning - density QNNs can be viewed as a `quantum mixture of experts'. Finally, we provide numerical results to demonstrate the flexibility of the model to improve performance, or improve trainability (or both). We test several QNN architectures on synthetic translation-invariant data, and Hamming weight preserving architectures on the MNIST image classification task. Finally, we show numerically how, in some capacity, density QNNs may prevent data overfitting using data reuploading as an example, despite not functioning as a true dropout mechanism.

\section{Results} \label{sec:results}

\subsection{Density quantum neural networks} \label{sec:density_qnn}
To begin, we explicitly define the framework (see~Supp.~Mat. A for a discussion) of \emph{density} quantum neural networks (density QNNs) as follows:

\begin{equation} \label{eqn:density_qnns}
    \rho(\boldsymbol{\theta}, \boldsymbol{\alpha}, \boldsymbol{x}) := \sum_{k=1}^K \alpha_k U_k(\boldsymbol{\theta}_k)\rho(\boldsymbol{x})U^\dagger_k(\boldsymbol{\theta}_k)
\end{equation}
$\rho(\boldsymbol{x})$ is a data encoded initial state, which is usually assumed to be prepared via a `data-loader' unitary, $\rho(\boldsymbol{x}) = \ketbra{\boldsymbol{x}}{\boldsymbol{x}}, \ket{\boldsymbol{x}} := V(\boldsymbol{x})\ket{0}^{\otimes n}$, a collection of sub-unitaries $\{U_{k}\}_{k=1}^K$, and a distribution, $\{\alpha_k\}_{k=1}^K$, which may depend on $\boldsymbol{x}$.

For now, we treat the density state above as an abstraction and later in the text we will discuss methods to prepare the state practically and actually use the model. The preparation method will have relevance for the different applications and connections to other paradigms. Once we have chosen a state preparation method for \eqref{eqn:density_qnns}, we must choose particular specifications for the sub-unitaries. In some cases, we may recast efficiently trainable models/frameworks within the density formalism to increase their expressibility. In others, we use the framework to improve the overall inference speed of models. In this work, we assume that the sub-unitary circuit structures, once chosen, are fixed, and the only trainability arises from the parameters, $\{\boldsymbol{\theta}_k\}_{k=1}^K$ therein, as well as the coefficients, $\{\alpha_k\}_{k=1}^K$. In other words, we do not incorporate variable structure circuits learned for example via quantum architecture search.

As a generalisation, one may consider adding a data dependence into the sub-unitary \emph{coefficients}, $\boldsymbol{\alpha} \rightarrow \boldsymbol{\alpha}(\boldsymbol{x})$, while retaining the distributional requirement for all $\boldsymbol{x}$, $\sum_k\alpha(\boldsymbol{x})_k=1$. This gives us the more general family of density QNN states:
\begin{equation} \label{eqn:density_qnn_with_data_alphas}
    \rho_{\mathsf{D}}(\boldsymbol{\theta}, \boldsymbol{\alpha}, \boldsymbol{x}) = \sum_{k=1}^K \alpha_k(\boldsymbol{x}) U_k(\boldsymbol{\theta}_k)\rho(\boldsymbol{x})U^\dagger_k(\boldsymbol{\theta}_k)
\end{equation}

In the QML world, overly dense or expressive single unitary models are known to have problems related to trainability via barren plateaus~\cite{holmes_connecting_2022}. Density QNNs and related frameworks may be a useful direction to retain highly parameterised models but via a combination of smaller, trainable models. We will demonstrate this through several examples in the remainder of the text. Before doing so, in the next section, we want to appropriately cast density QNNs within the current spectrum of quantum machine learning models.

\subsection{Connection to other QML frameworks} \label{ssec:other_qml_framework_connection}

Before proceeding, we first discuss the connection to other popular QML frameworks. For supervised learning purposes, each term in the density state, $U_k(\boldsymbol{\theta}_k)\rho(\boldsymbol{x})U^\dagger_k(\boldsymbol{\theta}_k)$ is expressive enough by itself to capture most basic models in the literature. This is due to the common model definition as $f(\boldsymbol{\theta, \boldsymbol{x}}) := \Tr(\mathcal{O}U(\boldsymbol{\theta})\rho(\boldsymbol{x})U^\dagger(\boldsymbol{\theta})) = \Tr(\mathcal{O}(\boldsymbol{\theta})\rho(\boldsymbol{x}))$ for some observable, $\mathcal{O}$, i.e. the overlap between a parameterised Hermitian observable and a data-dependent state. This unifies many paradigms in quantum machine learning literature such as kernel methods~\cite{schuld_supervised_2021} and data reuploading models via gate teleportation~\cite{jerbi_quantum_2023}.
Due to the linearity of the quantum mechanics, we can write it also in this form by inserting~\eqref{eqn:density_qnn_with_data_alphas} into the function evaluation:
\begin{multline} \label{eqn:density_qnn_with_data_as_linear_model}
    f_{\mathsf{D}}(\{\boldsymbol{\theta}, \boldsymbol{\alpha}\}, \boldsymbol{x}) = \Tr\left(\mathcal{O}(\boldsymbol{\theta}, \boldsymbol{\alpha}, \boldsymbol{x})\rho(\boldsymbol{x})\right),\\ \mathcal{O}(\boldsymbol{\theta}, \boldsymbol{\alpha}, \boldsymbol{x}) := \sum_{k=1}^K \alpha_k(\boldsymbol{x}) U^\dagger_k(\boldsymbol{\theta}_k) \mathcal{O}U_k(\boldsymbol{\theta}_k)
\end{multline}

Removing this data-dependence from the coefficients simply removes the data-dependence from the observable, $\mathcal{O}(\boldsymbol{\theta}, \boldsymbol{\alpha}, \boldsymbol{x}) \rightarrow \mathcal{O}(\boldsymbol{\theta}, \boldsymbol{\alpha})$. Finally, selecting the sub-unitaries to be identical and equal to the data loading unitary, with $K$ equal to the size of the training data leads to observable $\mathcal{O}(\{\boldsymbol{x}_k\}_{k=1}^M, \boldsymbol{\alpha}) := \sum_{k=1}^M \alpha_k U^\dagger(\boldsymbol{x}_k) \mathcal{O}U(\boldsymbol{x}_k) = \sum_{k=1}^M \alpha_k \rho(\boldsymbol{x}_k)$. This is an optimal family of models in a kernel method via the representer theorem~\cite{schuld_supervised_2021}.

Next, returning to~\eqref{eqn:density_qnn_with_data_as_linear_model} and replacing the data dependence from the \emph{state} with a parameter dependence, $\rho(\boldsymbol{x}) \rightarrow \rho(\boldsymbol{\theta})$, we fall within the family of \emph{flipped} quantum models~\cite{jerbi_shadows_2024}. These are a useful model family where the role of data and parameters in the model have been flipped. This insight enables the incorporation of classical shadows~\cite{huang_predicting_2020} for, e.g. quantum training and classical deployment of QML models.

Finally, we have the framework of \emph{post-variational} quantum models~\cite{huang_post-variational_2023}, originating from the  classical combinations of quantum states ansatz~\cite{huang_near-term_2021}. In motivation, these models are perhaps more similar to the `\emph{implicit}'~\cite{jerbi_quantum_2023} models such as quantum kernel methods, where the quantum computer is used only for specific \emph{fixed}, non-trainable, operations (e.g. evaluating inner products for kernels), rather than `\emph{explicit}'~\cite{jerbi_quantum_2023} models where trainable parameters reside within unitaries, $U(\boldsymbol{\theta})$. Post-variational models involve optimising coefficients $\alpha_{k, q}$, which are injected into the model via a linear or non-linear combination of observables, $\{\mathcal{O}_{q}\}_{q=1}^Q$,  applied to (non-trainable) unitary transformed states, $\{U_{k}\rho(\boldsymbol{x})U^{\dagger}_{k}\}_{k=1}^K$. In the linear case, the output of the model is:
\begin{multline}\label{eqn:post_variational_output}
    f_{\mathsf{PV}}(\boldsymbol{\alpha}, \boldsymbol{x}) = \sum_{kq} \alpha_{kq}\Tr(\mathcal{O}_q U_{k}\rho(\boldsymbol{x})U^{\dagger}_{k})  \\
    = \sum_{kq} \alpha_{kq}\Tr(\mathcal{O}_{kq} \rho(\boldsymbol{x})), \mathcal{O}_{kq} := U_{k}\mathcal{O}_q U^{\dagger}_{k}
\end{multline}
The major benefit of post-variational models is that, similar to quantum kernel methods, the optimisation over a convex combination of parameters \emph{outside} the circuit is in principle significantly easier than the non-convex optimisation of parameters \emph{within} the unitaries. However, just like kernel methods, this comes at the limitation of an expensive forward pass through the model, which requires $\mathcal{O}(KQ)$ circuits to be evaluated. In the worst case, this should also be exponential in the number of qubits in to enable arbitrary quantum transformations on $\rho(\boldsymbol{x})$, $KQ \leq 4^n$~\cite{huang_post-variational_2023}. To avoid evaluating an exponential number of quantum circuits, it is clearly necessary to employ heuristic strategies or impose symmetries to choose a sufficiently large yet expressive pool of operators $\mathcal{O}_{kq}$. In light of this, Ref~\cite{huang_post-variational_2023} proposes ansatz expansion strategies~\cite{huang_near-term_2021} or gradient heuristics to grow the pool of quantum operations. Such techniques may be also incorporated into our proposal, but we leave such investigations to future work.

\subsection{Preparing density quantum neural networks} \label{ssec:density_qnn_preparation}

As mentioned above, we have not yet described a method to prepare the density QNN state,~\eqref{eqn:density_qnns}. \figref{fig:density_qnn_mainfig} showcases two methods of doing so. For now, we do not assume any specific choice for the sub-unitaries. There are two methods to prepare the density state. The first is via a \emph{deterministic} circuit which exactly prepares $\rho(\boldsymbol{\theta}, \boldsymbol{\alpha}, \boldsymbol{x})$, and shown in~\figref{fig:density_qnn_mainfig}b. We prove the correctness of this circuit in~Supp.~Mat. I.2. The structure of the circuit can be related directly to the corresponding \emph{linear combination of unitaries} QNN~\cite{heredge_non-unitary_2024} which prepares instead the pure state, $\sum_k \alpha_k U_k(\boldsymbol{\theta}_k)\ket{\boldsymbol{x}}$, seen in~\figref{fig:density_qnn_mainfig}a. Notably, the deterministic density QNN removes the need for ancilla postselection on a specific state, ($(\ket{0}_{\mathcal{A}}^{\otimes n})$ in the figure). In other words, while a single forward pass through an LCU QNN will only succeed with some probability $p$, the deterministic density QNN state preparation succeeds with probability $p=1$. While the circuits in~\figref{fig:density_qnn_mainfig}a and b are conceptually simple, the controlled operation of the sub-unitaries may be very expensive in practice, which is a necessity without any further assumptions. In~Supp.~Mat. I.2 we discuss certain assumptions on the structure of the sub-unitaries which may simplify the resource requirements of this preparation mechanism, specifically assuming a Hamming-weight preserving structure allows the removal of the generic controlled operation.

The second method uses the \emph{distributional} property of $\boldsymbol{\alpha}$ to only prepare the density state $\rho(\boldsymbol{\theta}, \boldsymbol{\alpha}, \boldsymbol{x})$ \emph{on average}, depicted in~\figref{fig:density_qnn_mainfig}c. In this form, the forward pass completely removes the need for ancillary qubits, and complicated controlled unitaries. 

The effect of this is threefold:
\begin{enumerate}
    \item A forward pass through the randomised density QNN (\figref{fig:density_qnn_mainfig}c) requires time which is upper bounded by the execution speed of \emph{only} the most complex unitary, $U_{k^*}$. This is illustrated in~\figref{fig:density_qnn_mainfig}c. In the language of post-variational models measuring $Q$ observables on a randomised density QNN has complexity $\mathcal{O}(Q)$, a $K$-fold improvement. 
    \item Secondly, we will show that the gain in efficiency in moving from the LCU to randomised density QNN does not come at a significant loss in model performance. We prove this, under certain assumptions, using the Hastings-Campbell Mixing lemma from randomised quantum circuit compiling.
    \item Thirdly, and related to the first two points, one can view the randomised density QNN as an \emph{explicit} version of the post-variational (in the sense of~\cite{jerbi_quantum_2023}) framework. This may be an interesting direction to study given the series of hierarchies found by Ref.~\cite{jerbi_quantum_2023} between implicit, explicit and reuploading models.
\end{enumerate}

\subsection{Gradient extraction for density QNNs} \label{eqn:denqnn_gradient_scaling}

For density QNNs to be performant in practice, they must be efficiently trainable. In other words, it should not be exponentially more difficult to evaluate gradients from such models, compared to the component sub-unitaries. In the following, we describe general statements regarding the gradient extractability from density QNNs. By then choosing the sub-unitaries to themselves be efficiently trainable (in line with a so-called \emph{backpropagation} scaling, which we will define), the entire model will also be. We formalise this as follows:
%
%
\begin{proposition}[Gradient scaling for density quantum neural networks] \label{prop:gradient_scaling_dqnn}
Given a density QNN as in~\eqref{eqn:density_qnns} composed of $K$ sub-unitaries, $\mathcal{U} = \{U_k(\boldsymbol{\theta}_k)\}_{k=1}^K$, implemented with distribution, $\boldsymbol{\alpha} = \{\alpha_k\}$, an unbiased estimator of the gradients of a loss function, $\mathcal{L}$, defined by a Hermitian observable, $\mathcal{H}$:
\begin{equation} \label{eqn:density_qnn_loss_fn}
    \mathcal{L}(\boldsymbol{\theta}, \boldsymbol{\alpha}, \boldsymbol{x}) = \Tr\Big(\mathcal{H}\rho(\boldsymbol{\theta}, \boldsymbol{\alpha}, \boldsymbol{x})\Big)
\end{equation}
can be computed by classically post-processing $\sum_{l=1}^K\sum_{k=1}^K T_{\ell k}$ circuits, where $T_{\ell k}$ is the number of circuits required to compute the gradient of sub-unitary $k$, $U(\boldsymbol{\theta}_k)$ with respect to the parameters in sub-unitary $\ell$, $\boldsymbol{\theta}_{\ell}$. Furthermore, these parameters may be shared across the unitaries, $\boldsymbol{\theta}_k = \boldsymbol{\theta}_{k'}$ for some $k, k'$. 
\end{proposition}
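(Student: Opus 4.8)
The plan is to exploit the joint linearity of the density state in \eqref{eqn:density_qnns} and of the trace in the loss \eqref{eqn:density_qnn_loss_fn}, so that the gradient of the density model collapses to a classically weighted sum of gradients of the stand-alone sub-unitary models. First I would write $\mathcal{L}(\boldsymbol{\theta}, \boldsymbol{\alpha}, \boldsymbol{x}) = \sum_{k=1}^K \alpha_k\, \ell_k(\boldsymbol{\theta}_k, \boldsymbol{x})$ where $\ell_k(\boldsymbol{\theta}_k, \boldsymbol{x}) := \Tr\!\big(\mathcal{H} U_k(\boldsymbol{\theta}_k)\rho(\boldsymbol{x})U_k^\dagger(\boldsymbol{\theta}_k)\big)$ is precisely the loss of the pure-state model built from $U_k$ alone. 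Note this decomposition is independent of the preparation method: differentiating the deterministic circuit of \figref{fig:density_qnn_mainfig}b or sampling the randomised circuit of \figref{fig:density_qnn_mainfig}c both lead to the same identity, since $\Tr(\mathcal{H}\,\partial_{\boldsymbol{\theta}_\ell}\rho) = \sum_k \alpha_k \Tr(\mathcal{H}\,\partial_{\boldsymbol{\theta}_\ell}(U_k\rho(\boldsymbol{x})U_k^\dagger))$.

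Next I would differentiate term by term. For a parameter block $\boldsymbol{\theta}_\ell$, linearity gives $\nabla_{\boldsymbol{\theta}_\ell}\mathcal{L} = \sum_{k=1}^K \alpha_k \nabla_{\boldsymbol{\theta}_\ell}\ell_k$. When the parameters are disjoint across sub-unitaries, $\nabla_{\boldsymbol{\theta}_\ell}\ell_k = 0$ for $k\neq\ell$, so only $\alpha_\ell \nabla_{\boldsymbol{\theta}_\ell}\ell_\ell$ survives, costing $T_{\ell\ell}$ circuits; when $\boldsymbol{\theta}_\ell$ is shared among several sub-unitaries, the chain rule leaves exactly those $k$ in which $\boldsymbol{\theta}_\ell$ actually appears, each contributing $\nabla_{\boldsymbol{\theta}_\ell}\ell_k$ at a cost of $T_{\ell k}$ circuits by the definition of $T_{\ell k}$, with $T_{\ell k}=0$ whenever $\boldsymbol{\theta}_\ell$ does not occur in $U_k$. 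Summing these per-block costs over all $\ell$ yields the stated total $\sum_{\ell=1}^K\sum_{k=1}^K T_{\ell k}$. The residual classical post-processing is then just rescaling each sub-gradient estimate by the known coefficient $\alpha_k$ and summing; gradients with respect to the coefficients themselves, $\partial\mathcal{L}/\partial\alpha_k = \ell_k$, together with any softmax-type reparametrisation used to enforce $\sum_k\alpha_k=1$, are obtained by purely classical manipulation of quantities already produced in the forward pass and add no circuits.

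Finally I would establish unbiasedness. Each of the $T_{\ell k}$ circuits run with finitely many shots returns an unbiased estimate of the expectation value it computes (standard for parameter-shift or the backpropagation-scaling rule to be defined); and in the randomised preparation, drawing $k\sim\boldsymbol{\alpha}$ realises the weight $\alpha_k$ automatically via importance sampling. Since expectation is linear, the post-processed quantity $\sum_k \alpha_k \nabla_{\boldsymbol{\theta}_\ell}\ell_k$ is an unbiased estimate of $\nabla_{\boldsymbol{\theta}_\ell}\mathcal{L}$, and stacking over $\ell$ gives an unbiased estimator of the full gradient. As an immediate consequence, if each sub-unitary admits gradient extraction with backpropagation scaling, so does the density QNN.

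The main obstacle is the careful bookkeeping in the shared-parameter case: one must apply the chain rule so that a shared parameter picks up one contribution per sub-unitary in which it appears and none otherwise, and one must read $T_{\ell k}$ as an upper bound, since circuits reusable across different $(\ell,k)$ pairs only reduce the count. A secondary point to verify is that the gradient rule invoked inside each $\ell_k$ is itself linear and shot-wise unbiased, which is what licenses pulling the $\alpha_k$ weights outside the estimator in the first place.
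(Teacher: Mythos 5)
Your proposal is correct and follows essentially the same route as the paper's proof: both exploit the linearity of the trace and of the density state to write $\nabla_{\boldsymbol{\theta}_\ell}\mathcal{L} = \sum_k \alpha_k \nabla_{\boldsymbol{\theta}_\ell}\ell_k$, count the circuits as $\sum_{\ell}\sum_k T_{\ell k}$ (with off-diagonal terms vanishing when parameters are independent), and obtain unbiasedness from linearity of expectation applied to shot-wise unbiased per-circuit estimators. The paper merely packages the same bookkeeping into an explicit $K\times K\times N$ tensor of partial derivatives, so the two arguments are interchangeable.
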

%

The proof is given in~Supp.~Mat. B.1, but it follows simply from the linearity of the model. Now, there are two sub-cases one can consider. First, if all parameters between sub-unitaries are independent, $\boldsymbol{\theta}_k \neq \boldsymbol{\theta}_{\ell}, ~\forall k, \ell$. This gives the following corollary, also in~Supp.~Mat. B.1 and illustrated in~\figref{fig:density_gradients}. 
%
%
\begin{corollary}\label{corr:density_qnn_gradient_independent}
Given a density QNN as in~\eqref{eqn:density_qnns} composed of $K$ sub-unitaries, $\mathcal{U} = \{U_k(\boldsymbol{\theta}_k)\}_{k=1}^K$ where the parameters of sub-unitaries are independent, $\boldsymbol{\theta}_k \neq \boldsymbol{\theta}_{\ell}, \forall k, \ell$ an unbiased estimator of the gradients of a loss function, $\mathcal{L}$, \eqref{eqn:density_qnn_loss_fn}
can be computed by classically post-processing $\sum_{k=1}^K T_{k}$ circuits, where $T_{k}$ is the number of circuits required to compute the gradient of sub-unitary $k$, $U(\boldsymbol{\theta}_k)$ with respect to the parameters, $\boldsymbol{\theta}_{k}$.
\end{corollary}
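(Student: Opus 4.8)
The plan is to obtain this statement as an immediate specialisation of \propref{prop:gradient_scaling_dqnn}: the only extra input needed is that, when no parameters are shared \emph{between} sub-unitaries, the cross-derivatives vanish, so the double sum over gradient circuits collapses onto its diagonal. First I would invoke linearity of the trace to write the loss of \eqref{eqn:density_qnn_loss_fn} as the convex combination of per-sub-unitary losses,
\begin{equation}
    \mathcal{L}(\boldsymbol{\theta},\boldsymbol{\alpha},\boldsymbol{x}) = \sum_{k=1}^K \alpha_k\,\ell_k(\boldsymbol{\theta}_k), \qquad \ell_k(\boldsymbol{\theta}_k) := \Tr\!\big(\mathcal{H}\,U_k(\boldsymbol{\theta}_k)\rho(\boldsymbol{x})U^\dagger_k(\boldsymbol{\theta}_k)\big),
\end{equation}
so that $\mathcal{L}$ is literally a weighted sum of the losses one would assign to each sub-unitary in isolation, all with the same observable $\mathcal{H}$.

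Next I would differentiate with respect to the parameter block $\boldsymbol{\theta}_\ell$ of sub-unitary $\ell$. Under the hypothesis $\boldsymbol{\theta}_k\neq\boldsymbol{\theta}_\ell$ for all $k\neq\ell$, the term $\ell_k(\boldsymbol{\theta}_k)$ is constant in $\boldsymbol{\theta}_\ell$ whenever $k\neq\ell$, hence $\partial_{\boldsymbol{\theta}_\ell}\mathcal{L} = \alpha_\ell\,\nabla_{\boldsymbol{\theta}_\ell}\ell_\ell(\boldsymbol{\theta}_\ell)$. In the bookkeeping of \propref{prop:gradient_scaling_dqnn} this is precisely the statement $T_{\ell k}=0$ for all $\ell\neq k$: estimating the $\ell$-th block of the gradient requires only the circuits that differentiate sub-unitary $\ell$ with respect to its own parameters, and none of the off-diagonal gradient circuits ever appear. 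Consequently $\sum_{\ell}\sum_{k}T_{\ell k}$ reduces to $\sum_{k}T_{kk}$, which is the claimed count $\sum_k T_k$ upon setting $T_k := T_{kk}$ (internal parameter sharing within a single $U_k$, if present, is already absorbed into $T_k$).

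It then remains to exhibit the unbiased estimator and confirm the accounting. For each $\ell$, I would run the $T_\ell$ gradient circuits prescribed for sub-unitary $\ell$ with observable $\mathcal{H}$ — e.g.\ the parameter-shift circuits — take the usual linear combination of their shot-averaged expectation values to estimate $\nabla_{\boldsymbol{\theta}_\ell}\ell_\ell$, and multiply by the classically known scalar $\alpha_\ell$; linearity of expectation over shots makes each block unbiased, hence the assembled gradient vector is unbiased, at a total cost of $\sum_{\ell=1}^K T_\ell$ circuits. I would also remark on the randomised realisation of the density state: sampling $k\sim\boldsymbol{\alpha}$, returning the parameter-shift gradient of the realised sub-unitary in its own block and zero in all others, is again unbiased because the sampling probability $\alpha_\ell$ supplies exactly the prefactor — this trades shallower circuits for higher variance without altering the circuit count.

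The main obstacle here is not mathematical depth but consistency of bookkeeping: one must check that ``number of circuits to compute the gradient of sub-unitary $k$'' is read the same way in the proposition and in this corollary (identical observable $\mathcal{H}$, disjoint parameter blocks across sub-unitaries), and that the gradients with respect to the coefficients, $\partial_{\alpha_k}\mathcal{L} = \ell_k(\boldsymbol{\theta}_k)$, are folded into the same budget, since each is recovered from the unshifted evaluation already produced when computing $\nabla_{\boldsymbol{\theta}_k}\ell_k$. With those conventions pinned down, the result is a direct corollary of \propref{prop:gradient_scaling_dqnn} and needs no new estimation machinery.
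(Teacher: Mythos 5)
Your proposal is correct and follows essentially the same route as the paper: the paper's proof likewise observes that with independent parameter blocks all off-diagonal entries of the gradient tensor $[\Delta\mathcal{L}]_j$ from the proof of \propref{prop:gradient_scaling_dqnn} vanish, so only the diagonal terms $\partial_{jk}\mathcal{L}'(\boldsymbol{\theta}_k,\boldsymbol{x})$ need be estimated, collapsing $\sum_{\ell}\sum_{k}T_{\ell k}$ to $\sum_k T_k$. Your additional remarks on the coefficient gradients and the randomised realisation go beyond what the paper records here but do not change the argument.
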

%
%
\begin{figure*}[!ht]
    \centering
    \includegraphics[width=\linewidth]{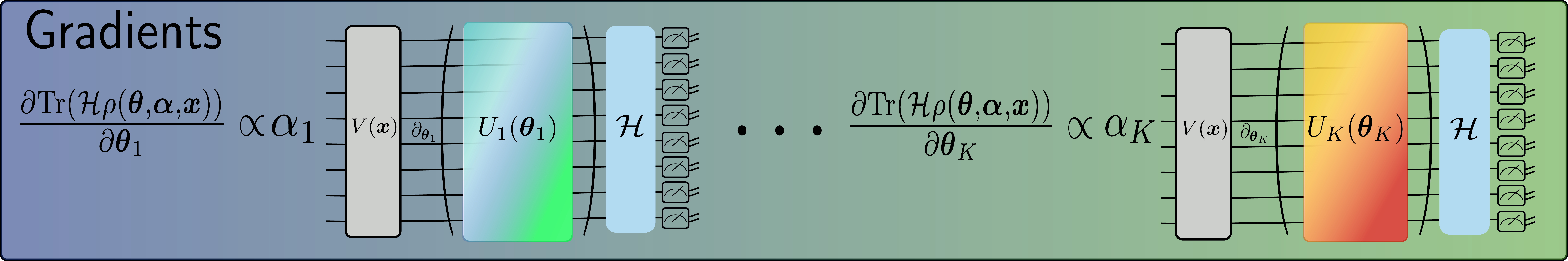}
  \caption{\textsf{\textbf{Illustration of~\corrref{corr:density_qnn_gradient_independent}.} \\ In the case where no parameters are shared across the sub-unitaries, the gradients of the density model in~\eqref{eqn:density_qnns} when measured with an observable $\mathcal{H}$ simply involves computing gradients for each sub-unitary individually. As a result, the full model introduces an $\mathcal{O}(K)$ overhead for gradient extraction. If $K=\mathcal{O}(\log(N))$ and each sub-unitary admits a backpropagation scaling for gradient extraction, the density model will also admit a backpropagation scaling.}}
\label{fig:density_gradients}
\end{figure*}

The second case is where some (or all) parameters are shared across the sub-unitaries. Taking the extreme example, $\theta_l^j = \theta_k^j =: \theta^j ~\forall k, l$ - i.e. all sub-unitaries from~\eqref{eqn:density_qnns} have the same number of parameters, which are all identical. In this case, for each sub-unitary, $l$, we must evaluate all $K$ terms in the sum so at most the number of circuits will increase by a factor of $K^2$ - we need to compute every term in the matrix of partial derivatives.

Note that this is the number of \emph{circuits} required, not the overall sample complexity of the estimate. For example, take the single layer commuting-block circuit (just a commuting-generator circuit) with $C$ mutually commuting generators. Also assume a suitable measurement observable, $\mathcal{H}$, such that the resulting gradient observables, $\{\mathcal{O}_c | \mathcal{O}_c := [G_c, \mathcal{H}]\}_{k=1}^C$, can be simultaneously diagonalized. To estimate these $C$ gradient observables each to a precision $\varepsilon$ (meaning outputting an estimate $\Tilde{o}_c$ such that $|\Tilde{o}_k - \bra{\psi}\mathcal{O}_c\ket{\psi}| \leq \varepsilon$ with confidence $1-\delta$) requires $\mathcal{O}\left(\varepsilon^{-2}\log \left(\frac{C}{\delta}\right)\right)$ copies of $\psi$ (or equivalently calls to a unitary preparing $\psi$). It is also possible to incorporate strategies such as shadow tomography~\cite{huang_predicting_2020}, amplitude estimation~\cite{brassard_quantum_2002} or quantum gradient algorithms~\cite{huggins_nearly_2022} to improve the $C$, $\delta$ or $\varepsilon$ parameter scalings for more general scenarios, though inevitably at the cost of scaling in the others.

\subsubsection{Efficiently trainable density networks} \label{sssec:backprop_density_comparison}

The results of the previous section state that moving to the density framework does not result in an exponential increase in gradient extraction difficulty, unless the number of sub-unitaries is exponential. However, what we really care for is that the models are end-to-end \emph{efficiently} trainable, meaning that overall their gradients can be computed with a \emph{backpropagation} scaling. This is the resource scaling which the (classical) backpropagation algorithm obeys, and which we ideally would strive for in quantum models. In the following, we can specialise the derived results to the cases where the component sub-unitaries have efficient gradient extraction protocols. This will render the entire density model also efficiently trainable in this regime.

This `backpropagation' scaling can be defined as follows. Specifically:
\begin{definition}[Backpropagation scaling~\cite{abbas_quantum_2023, bowles_backpropagation_2023}]\label{def:backpropagation_scaling}
    Given a parameterised function, $f(\boldsymbol{\theta}), \boldsymbol{\theta} \in \mathbb{R}^N$, with $f'(\boldsymbol{\theta})$ being an estimate of the gradient of $f$ with respect to $\boldsymbol{\theta}$ up to some accuracy $\varepsilon$. The total computational cost to estimate $f'(\boldsymbol{\theta})$ with backpropagation is bounded with:
    \begin{equation}
        \mathcal{T}(f'(\boldsymbol{\theta})) \leq c_t  \mathcal{T}(f(\boldsymbol{\theta}))
    \end{equation}
    and
    \begin{equation}
         \mathcal{M}(f'(\boldsymbol{\theta})) \leq c_m \mathcal{M}(f(\boldsymbol{\theta}))
    \end{equation}
    where $c_t, c_m = \mathcal{O}(\log(N))$ and $ \mathcal{T}(g)/\mathcal{M}(g)$ is the time/amount of memory required to compute $g$.
\end{definition}
In plain terms, a model which achieves a backpropagation scaling according to~\defref{def:backpropagation_scaling}, particularly for quantum models, implies that it does not take significantly more effort, (in terms of number of qubits, circuit size, or number of circuits) to compute gradients of the model with respect to all parameters, than it does to evaluate the model itself. 

One family of circuits which does obey such a scaling are the so-called \emph{commuting-block} quantum neural networks, defined in Ref.~\cite{bowles_backpropagation_2023}, and which contain $B$ blocks of unitaries generated by operators which all mutually commute within a block. We discuss the specific circuits in~\secref{sec:methods} but for now we specialise~\propref{prop:gradient_scaling_dqnn} to these commuting-block unitaries as follows:
%
\begin{corollary}[Gradient scaling for density commuting-block quantum neural networks] \label{corr:grad_scaling_density_commuting_block}
  Given a density QNN containing $k$ sub-unitaries, each acting on $n$ qubits. Each sub-unitary, $k$, has a commuting-block structure with $B_k$ blocks. Assume each sub-unitary has different parameters, $\boldsymbol{\theta}_{k}\neq \boldsymbol{\theta}_{\ell}~, \forall k, \ell$. Then an unbiased estimate of the gradient can be estimated by classical post-processing $\mathcal{O}(2\sum_k B_k - K)$ circuits on $n+1$ qubits.
\end{corollary}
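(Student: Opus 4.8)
The plan is to combine the model-level reduction already established in \corrref{corr:density_qnn_gradient_independent} with the per-sub-unitary gradient cost of a single commuting-block circuit from Ref.~\cite{bowles_backpropagation_2023}. By \corrref{corr:density_qnn_gradient_independent}, since the parameters are not shared across sub-unitaries, an unbiased estimate of $\nabla\mathcal{L}(\boldsymbol{\theta},\boldsymbol{\alpha},\boldsymbol{x})$ for the loss in \eqref{eqn:density_qnn_loss_fn} is obtained by classically post-processing $\sum_{k=1}^K T_k$ circuits, where $T_k$ is the number of circuits needed to estimate the gradient of the $k$-th term $\Tr\!\big(\mathcal{H}\,U_k(\boldsymbol{\theta}_k)\rho(\boldsymbol{x})U_k^\dagger(\boldsymbol{\theta}_k)\big)$ with respect to $\boldsymbol{\theta}_k$ alone (the $\boldsymbol{\alpha}$-gradients are classical post-processing). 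So the whole task reduces to bounding $T_k$ for a commuting-block unitary with $B_k$ blocks.

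Second, I would recall the commuting-block structure and its gradient gadget. Write $U_k=\prod_{b=1}^{B_k} W_{k,b}$ with each block $W_{k,b}=\exp\!\big(-i\sum_c \theta^{(b)}_{k,c}G^{(b)}_{k,c}\big)$ generated by mutually commuting Hermitian generators $\{G^{(b)}_{k,c}\}_c$. The derivative of the $k$-th term with respect to $\theta^{(b)}_{k,c}$ is a single expectation value $-i\,\Tr\!\big([G^{(b)}_{k,c},\widetilde{\mathcal{H}}_{k,b}]\,\widetilde\rho_{k,b}\big)$, where $\widetilde{\mathcal{H}}_{k,b}$ is $\mathcal{H}$ conjugated by the blocks after $b$ and $\widetilde\rho_{k,b}$ is $\rho(\boldsymbol{x})$ conjugated by the blocks before $b$. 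Because the generators within block $b$ commute, they are simultaneously diagonalisable, so all commutator observables for a fixed block can be read out together from a constant number of state preparations using one ancilla qubit in a Hadamard-test-type circuit. This is exactly the content of the commuting-block backpropagation result of Ref.~\cite{bowles_backpropagation_2023}: the gradient of such a unitary with $B_k$ blocks is obtained from $2B_k-1$ distinct circuits on $n+1$ qubits (one ancilla; the ``$-1$'' accounts for the boundary block needing only one of the two halves). Hence $T_k=2B_k-1$.

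Third, sum over $k$: $\sum_{k=1}^K T_k=\sum_{k=1}^K (2B_k-1)=2\sum_{k=1}^K B_k - K$, which is the claimed circuit count. The ancilla used inside each sub-unitary's gradient circuits is the same single qubit, and the $K$ families of circuits are executed independently with their outputs combined classically (weighted by the $\alpha_k$), so the circuit width stays at $n+1$ rather than growing with $K$. This also plainly preserves a backpropagation scaling when $K=\mathcal{O}(\log N)$ and each $B_k$ is at most logarithmic, matching \figref{fig:density_gradients}.

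The main obstacle is not the arithmetic but importing the commuting-block gradient gadget cleanly: one must check that the observable $\mathcal{H}$ defining the density-QNN loss plays the role of the measurement observable in Ref.~\cite{bowles_backpropagation_2023} — so that the per-block commutator observables are indeed simultaneously diagonalisable together with the block generators — and that the ``$2B_k-1$'' count, stated there for a pure-state input and a fixed observable, is unaffected when the input is the fixed data state $\rho(\boldsymbol{x})$ (this follows since the sub-unitaries and $\rho(\boldsymbol{x})$ are the only objects entering each term, exactly as in the single-circuit case). Finally, one should emphasise, as after \corrref{corr:density_qnn_gradient_independent}, that this counts \emph{circuits}, not shots: each of the $2\sum_k B_k-K$ circuits must still be sampled to the target accuracy $\varepsilon$, with an extra logarithmic factor from a union bound over blocks.
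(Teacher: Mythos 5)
Your proposal is correct and follows essentially the same route as the paper: reduce to $\sum_k T_k$ via the independent-parameter case of \propref{prop:gradient_scaling_dqnn} (i.e.\ \corrref{corr:density_qnn_gradient_independent}), then invoke the $2B_k-1$ circuit count for a single commuting-block circuit from Ref.~\cite{bowles_backpropagation_2023} (with the final block needing only one circuit) and sum to get $2\sum_k B_k - K$. Your added remarks on the single shared ancilla, the circuits-versus-shots distinction, and the conditions for importing the commuting-block gadget are consistent with the paper's discussion and do not change the argument.
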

%
%
\begin{proof}
This follows immediately from~\propref{prop:gradient_scaling_dqnn} and Theorem 5. from Ref.~\cite{bowles_backpropagation_2023}. Here, the gradients of a single $B$-block commuting block circuit can be computed by post-processing $2B-1$ circuits, where $2$ circuits are required per block, with the exception of the final block, which can be treated as a commuting generator circuit and evaluated with a single circuit.
\end{proof}

\begin{table*}[!ht]
\centering
\captionsetup{justification=raggedright}
\begin{tabular}{l|c@{\hspace{5mm}}c@{\hspace{5mm}}c@{\hspace{5mm}}c}
\toprule
\textbf{QNN ansatz} & $N_{\textsf{params}}$  & $N_{\textsf{grad}}$ & $N_{\textsf{params}}$ & $N_{\textsf{grad}}$ 
\\ & \textbf{Original}  &  \textbf{Original} & \textbf{Density} & \textbf{Density}  \\
\midrule
$D$ layer hardware efficient ~\cite{kandala_hardware-efficient_2017} & $\mathcal{O}(nD)$ & $\mathcal{O}(nD)$ &  $\mathcal{O}(nD)$  & $\mathcal{O}(D)$ \\
Equivariant XX~\cite{bowles_backpropagation_2023} & $\mathcal{O}(G)$ & $\mathcal{O}(1)$ & $\mathcal{O}(KG)$ & $\mathcal{O}(K)$ \\
HW pres.~\cite{landman_quantum_2022} - pyramid & $\mathcal{O}(n^2)$ & $\mathcal{O}(n^2)$ &  $\mathcal{O}(n)$ & $\mathcal{O}(1)$ \\
HW pres.~\cite{cherrat_quantum_2022} - butterfly & $\mathcal{O}(n\log(n))$ & $\mathcal{O}(n\log(n))$ & $\mathcal{O}(n\log(n))$ & $\mathcal{O}(\log(n))$ \\
HW pres.~\cite{hamze_parallelized_2021} - round-robin & $\mathcal{O}(n^2)$ & $\mathcal{O}(n^2)$ & $\mathcal{O}(n^2)$ & $\mathcal{O}(n)$ \\
\bottomrule
\end{tabular}
\caption{\textsf{\textbf{Summary of gradient scalings for training density quantum neural networks.}\\
Number of gradient circuits ($N_{\textsf{grad}}$) required to estimate full gradient vector for original quantum neural networks versus their density QNN counterparts each with $N_{\textsf{params}}$ parameters acting on $n$ qubits. The equivariant XX ansatz~\cite{bowles_backpropagation_2023}, an example of a commuting-generator circuit contains $G$ commuting unitaries ($G$ depends on the maximum locality chosen), $K$ versions of which can be combined to give a density version. For the HW preserving, we refer to the versions created exclusively from commuting-block unitaries, not those which take $K$ versions of the original circuit - see~\figref{fig:round_robin_density_decomposition} for the distinction with the round-robin circuit. We suppress precision factors of $\mathcal{O}(\varepsilon^{-2})$ and $\mathcal{O}(\log(\delta^{-1}))$, and we assume a direct sampling method to evaluate gradients. }}
\label{tab:summary_density_comparison}
\end{table*}

At this stage, we showcase two possibilities when constructing density networks. It should be noted that these in some sense represent extreme cases, and should not be taken as the exclusive possibilities. Ultimately, the successful models will likely exist in the middle group. The first path allows us to increase the \emph{trainability} of certain QML models in the literature. In~\tabref{tab:summary_density_comparison}, we show some results if we were to do so for some popular examples. The first step is to dissect commuting-block components from each `layer' of the respective model, then treat these components as sub-unitaries within the density formalism and then apply~\corrref{corr:grad_scaling_density_commuting_block}. In the following sections, we describe this strategy for the models in the table, beginning with the hardware efficient ansatz.

Secondly, we may simply use it as a means to increase overall model expressibility - where the component sub-unitaries, $U_k(\boldsymbol{\theta}_k)$ are any generic trainable circuits (which are independent for simplicity). Further assume each $U_k(\boldsymbol{\theta}_k)$ has identical structure with $N$ parameters acting on $n$ qubits and requiring $T_{n, N}$ gradient circuits each. Then, from~\corrref{corr:density_qnn_gradient_independent}, a density model will require $KT_{n, N}$ parameter-shift circuits. In many cases, $K$ will be a constant independent of $N$, or $n$, and furthermore this evaluation over the $K$ sub-unitaries can be done in parallel.

In the next section, and in~\figref{fig:hardware_efficient_density} we illustrate these paths using hardware efficient quantum neural networks. For the examples in the following sections (the other models referenced in~\tabref{tab:summary_density_comparison} and others), we demonstrate both of these directions.

\begin{figure*}[!ht]
    \centering
    \includegraphics[width=\linewidth]{Fig3_hardware_efficient_new.pdf}
  \caption{\textsf{\textbf{Decomposing a hardware efficient ansatz for a density QNN.} \\ $D$ layers of a hardware efficient (HWE) ansatz with entanglement generated by CNOT ladders and trainable parameters in single qubits $R_x, R_y, R_z$ gates.\\
  (bottom left) $D$ layers extracted into $D$ sub-unitaries with probabilities, $\{\alpha_d\}_{d=1}^D$ for a density QNN version. Applying the commuting-generator framework to the density version, $\rho^{\mathsf{HWE}}(\boldsymbol{\theta}, \boldsymbol{\alpha}, \boldsymbol{x})$, enables parallel gradient evaluation in $2D$ circuits versus $2nD$ as required by the pure state version, $\ket{\psi^{\mathsf{HWE}}(\boldsymbol{\theta}, \boldsymbol{\alpha}, \boldsymbol{x})}$. TO illustrate potential differences between sub-unitaries, we arbitrarily reverse CNOT directions in subsequent layers and partially accounting for low circuit depth. \\
  (bottom right) Alternatively, we can simply create a more expressive version of the hardware efficient QNN within the density framework by duplicating across $K$ sub-unitaries with probabilities $\{\alpha_k\}_{k=1}^K$ retaining $D$ layers each. In this case, the model requires $2nDK$ circuits for gradient extraction, but each sub-unitary can have independent parameters learning different features, especially if each contains different entanglement structures.}
  }
\label{fig:hardware_efficient_density}
\end{figure*}

\subsubsection{Hardware efficient quantum neural networks} \label{ssec:hardware_efficient_qnn}

To illustrate the two possible paths for model construction, we use a toy example (shown in~\figref{fig:hardware_efficient_density}) - the common but much maligned \emph{hardware efficient}~\cite{kandala_hardware-efficient_2017} quantum neural network. These `problem-independent' ans\"atze were proposed to keep quantum learning models as close as possible to the restrictions of physical quantum computers, by enforcing specific qubit connectivities and avoiding injecting trainable parameters into complex transformations. These circuits are extremely flexible, but this comes at the cost of being vulnerable to barren plateaus~\cite{mcclean_barren_2018} and generally difficult to train.

A $D$ layer hardware efficient ansatz on $n$ qubits is usually defined to have $1$ parameter per qubit (located in a single qubit Pauli rotation) per layer. The parameter-shift rule with such a model would require $2nD$ individual circuits to estimate the full gradient vector, each for $M$ measurements shots. Given such a circuit, we can construct a density version with $D$ sub-unitaries and reduce the gradient requirements from $2nD$ to $2D$ as the gradients for the single qubit unitaries in each sub-unitary can be evaluated in parallel, using the commuting-generator toolkit from~\secref{sec:methods} and \corrref{corr:grad_scaling_density_commuting_block}. This example is relatively trivial as the resulting unitaries are shallow depth (which also likely increases the ease of classical simulability) and training each corresponds only to learning a restricted single qubit measurement basis. In the~\figref{fig:hardware_efficient_density}, we take a variation of the common CNOT-ladder layout - entanglement is generated in each layer by nearest-neighbour CNOT gates. Typically, an identical structure is used in each layer, however in the figure we allow each sub-unitary extracted from each layer to have a varying CNOT control-target directionality and different single qubit rotations in each layer. This is to increase differences between each ``expert'' (see below) as each sub-unitary can generate different levels of (dis)entanglement. Secondly, as illustrated in~\figref{fig:hardware_efficient_density} one can also define a density version which is \emph{not} more trainable than the original version - in this case, we have $K$ depth-$D$ hardware efficient circuits, which according to the parameter shift rule would now require $\mathcal{O}(2KDn)$ circuits. However, the density model contains more parameters ($K$-fold more) than the original single circuit version, and possibly is more expressive as a QML model.

\subsection{LCU and the Mixing lemma} \label{ssec:the_mixing_lemma_density}

The second feature of the density framework is the relationship to linear combination of unitaries (LCU) quantum machine learning. Above, we discussed two methods of preparing the density state,~\eqref{eqn:density_qnns} and illustrated in~\figref{fig:density_qnn_mainfig}.  Now, we will demonstrate how one may translate performance guarantees from families of LCU QNNs (\figref{fig:density_qnn_mainfig}a) to the randomised version of the density QNN (\figref{fig:density_qnn_mainfig}c).

Specifically, we will show that, in at least one restricted learning scenario, we will show that if one can construct and train an LCU QNN (\figref{fig:density_qnn_mainfig}a) which has a better learning performance (in terms of e.g. classification accuracy) than any component unitary, this improved performance can be transferred to a density QNN without performance loss. This transference has an important consequence - due to the minimal requirements of implementing a randomised density QNN (\figref{fig:density_qnn_mainfig}c) on quantum hardware, relative to the LCU QNN, we can implement the more performant model much more cheaply. To do so, we will prove a result using the Hastings-Campbell mixing lemma~\cite{hastings_turning_2017, campbell_shorter_2017} from the field  compiling of complex unitaries onto sequences of simpler quantum operations.

In this context, we will adapt the Mixing lemma as follows. Assume one trains $K$ sub-unitaries $\{U_k(\boldsymbol{\theta}_k)\}$ each to be `good' models, in that they each achieve a low prediction error, $\delta_1$, to some ground truth function. Next, with the trained sub-unitaries fixed, one learns a linear combination, $\sum_k \alpha_k U_k$, with (distributional) coefficients, $\{\alpha_k\}$, by training \emph{only} the coefficients. Assume this more powerful QNN model (LCU QNN) achieves a `better' prediction error, $\delta_2 < \delta_1$. However, despite better performance, the LCU QNN is far more expensive to implement than any individual $U_k$ (as can be seen in~\figref{fig:density_qnn_mainfig}a). The logic of the Mixing lemma implies that instead of this deep circuit, we may randomise over the unitaries - create a randomised density QNN - and achieve the same error as the LCU QNN but with the same overhead as the most complex $U_k$. We formalise this as the following:
\begin{lemma}[Mixing lemma for supervised learning]\label{lemma:supervised_mixing_corr}
Let $h(\boldsymbol{x})$ be a target ground truth function, prepared via the application of a fixed unitary, $V$, $h(\boldsymbol{x}) := \Tr(\mathcal{O}V\rho(\boldsymbol{x}) V^{\dagger})$ on a data encoded state, $\rho(\boldsymbol{x})$ and measured with a fixed observable, $\mathcal{O}$. Suppose there exists $K$ unitaries $\{U_k(\boldsymbol{\theta})\}_{k=1}^K$ such that these each are $\delta_1$ good predictive models of $h(\boldsymbol{x})$:
    \begin{equation} \label{eqn:supervised_mixing_corrollary_condition_1}
        \mathbb{E}_{\boldsymbol{x}}|h(\boldsymbol{x}) - f_k(\boldsymbol{\theta}, \boldsymbol{x})| \leq 
        \delta_1, \forall k
    \end{equation}
    and a distribution $\{\alpha_k\}_{k=1}^K$ such that predictions according to the LCU model $f_{\textsf{LCU}}(\boldsymbol{\theta}, \boldsymbol{\alpha}, \boldsymbol{x}) := \Tr(\mathcal{O}\Big(\sum_k \alpha_k U_k(\boldsymbol{\theta})\Big) \rho(\boldsymbol{x})\Big(\sum_k \alpha_k U^{\dagger}_k(\boldsymbol{\theta})\Big))$ have error  bounded as:
    \begin{equation} \label{eqn:supervised_mixing_corrollary_condition_2}
        \mathbb{E}_{\boldsymbol{x}}|h(\boldsymbol{x}) - f_{\textsf{LCU}}(\boldsymbol{\theta}, \boldsymbol{\alpha}, \boldsymbol{x})| \leq \delta_2 
    \end{equation}
    for some $\delta_1, \delta_2 >0$. Then, the corresponding density QNN, $f(\boldsymbol{\theta}, \boldsymbol{\alpha}, \boldsymbol{x}) = \Tr(\mathcal{O}\rho(\boldsymbol{\theta}, \boldsymbol{\alpha}, \boldsymbol{x})),  \rho(\boldsymbol{\theta}, \boldsymbol{\alpha}, \boldsymbol{x}) = \sum_{k=1}^K\alpha_k U_k^\dagger(\boldsymbol{\theta}) \rho(\boldsymbol{x})U_k(\boldsymbol{\theta})$ can generate predictions for $h(\boldsymbol{x})$ with error:
    \begin{equation} \label{eqn:supervised_mixing_corollary}
         \mathbb{E}_{\boldsymbol{x}}|h(\boldsymbol{x}) - f(\boldsymbol{\theta}, \boldsymbol{\alpha}, \boldsymbol{x})| \leq \frac{\delta_1^2}{4\|\mathcal{O}\|_{\infty}} + 2\delta_2
    \end{equation} 
\end{lemma}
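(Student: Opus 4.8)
The plan is to adapt the Hastings--Campbell mixing lemma~\cite{hastings_turning_2017, campbell_shorter_2017}, whose proof bounds the distance between a mixed-unitary channel and an ideal unitary channel, to the present supervised-learning setting. I would take the ideal channel to be $\mathcal{V}(\rho) = V\rho V^{\dagger}$ (which realises $h$) and the mixed channel to be exactly the density QNN $\mathcal{E}(\rho) = \sum_k \alpha_k U_k(\boldsymbol{\theta})\rho U_k^{\dagger}(\boldsymbol{\theta})$ as in~\eqref{eqn:density_qnns}. Writing $W := \sum_k \alpha_k U_k(\boldsymbol{\theta})$ for the sub-normalised LCU operator, $A_k := U_k(\boldsymbol{\theta}) - V$, and $\bar{A} := W - V = \sum_k \alpha_k A_k$, the first step is to substitute $U_k = V + A_k$ and expand, obtaining the exact operator identity $\mathcal{E}(\rho) - \mathcal{V}(\rho) = \bar{A}\rho V^{\dagger} + V\rho\bar{A}^{\dagger} + \sum_k \alpha_k A_k\rho A_k^{\dagger}$. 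This is the heart of the mixing lemma: the contributions \emph{linear} in the $A_k$ have been collected into $\bar{A}$, so only the averaged error $\bar{A}$ enters linearly while the individual errors $A_k$ enter quadratically. Pairing with $\mathcal{O}$, using $|\Tr(\mathcal{O}X)| \leq \|\mathcal{O}\|_{\infty}\|X\|_1$ and submultiplicativity ($\|X_1 X_2 X_3\|_1 \leq \|X_1\|_{\infty}\|X_2\|_1\|X_3\|_{\infty}$ with $\|\rho(\boldsymbol{x})\|_1 = \|V\|_{\infty} = 1$), this gives, for every data point $\boldsymbol{x}$, $|f(\boldsymbol{\theta},\boldsymbol{\alpha},\boldsymbol{x}) - h(\boldsymbol{x})| \leq \|\mathcal{O}\|_{\infty}\big(2\|\bar{A}\|_{\infty} + \max_k\|A_k\|_{\infty}^2\big)$.

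Next I would feed in the learning hypotheses, translating ``each $U_k$ is a $\delta_1$-good predictor'' into the operator bound $\max_k\|U_k(\boldsymbol{\theta}) - V\|_{\infty} \leq \delta_1/(2\|\mathcal{O}\|_{\infty}) =: a$, and ``the LCU model is $\delta_2$-good'' into $\|W - V\|_{\infty} \leq \delta_2/\|\mathcal{O}\|_{\infty} =: b$. Granting these, substitution gives $|f - h| \leq \|\mathcal{O}\|_{\infty}(a^2 + 2b) = \delta_1^2/(4\|\mathcal{O}\|_{\infty}) + 2\delta_2$ uniformly in $\boldsymbol{x}$, and since the bound does not depend on $\boldsymbol{x}$ it immediately survives taking the expectation $\mathbb{E}_{\boldsymbol{x}}$, which is exactly~\eqref{eqn:supervised_mixing_corollary}. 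A related computation that uses~\eqref{eqn:supervised_mixing_corrollary_condition_2} in its literal form rests instead on the scalar identity $f - h = (f_{\textsf{LCU}} - h) - \langle\bar{\eta}|\mathcal{O}|\bar{\eta}\rangle + \sum_k\alpha_k\langle\eta_k|\mathcal{O}|\eta_k\rangle$, where $\ket{\eta_k} = (U_k - V)\ket{\boldsymbol{x}}$ and $\ket{\bar{\eta}} = \sum_k\alpha_k\ket{\eta_k}$, bounding the first term by $\delta_2$ and the two quadratic residuals by $\|\mathcal{O}\|_{\infty}$ times mean-squared state distances via Jensen's inequality; it reaches the same qualitative form (quadratic in $\delta_1$, linear in $\delta_2$), and one picks whichever bookkeeping matches the stated constants.

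The main obstacle is precisely the translation step, and it is the reason the statement is confined to ``at least one restricted learning scenario.'' The hypotheses~\eqref{eqn:supervised_mixing_corrollary_condition_1}--\eqref{eqn:supervised_mixing_corrollary_condition_2} control only \emph{differences of expectation values} of a \emph{single} observable on the \emph{data} distribution, and such a quantity does not in general control $\|U_k - V\|_{\infty}$ (nor even the state distances $\|(U_k - V)\ket{\boldsymbol{x}}\|$): it vanishes identically when $\mathcal{O}\propto\mathds{1}$, and more generally whenever each error vector $(U_k - V)\ket{\boldsymbol{x}}$ is $\mathcal{O}$-orthogonal to $V\ket{\boldsymbol{x}}$. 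The \emph{converse} implication is the easy one --- if $\|U_k - V\|_{\infty}\leq \delta_1/(2\|\mathcal{O}\|_{\infty})$ then $|h(\boldsymbol{x}) - f_k(\boldsymbol{\theta},\boldsymbol{x})| = |\langle\boldsymbol{x}|(V^{\dagger}\mathcal{O}V - U_k^{\dagger}\mathcal{O}U_k)|\boldsymbol{x}\rangle| \leq 2\|\mathcal{O}\|_{\infty}\|U_k - V\|_{\infty} \leq \delta_1$, and likewise for $W$ --- so the restricted scenario must be one in which this implication is (approximately) reversible: either the operator-norm/state-distance bounds are posited directly as the working hypotheses, or one assumes something like a tomographically complete data distribution together with an observable bounded away from a multiple of the identity, invoking $\|M\|_{\infty} = \sup_{\ket{\psi}}|\langle\psi|M|\psi\rangle|$ for Hermitian $M$ to upgrade the averaged prediction bound to a worst-case operator bound at the cost of a scenario-dependent constant. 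Once this regime is pinned down the rest is the routine bookkeeping above, so I would expect the bulk of the supporting argument to go into making the restricted scenario precise and tracking the constants.
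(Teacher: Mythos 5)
Your proposal is correct and follows essentially the same route as the paper's proof in Supp.~Mat.~B.2: invoke the Hastings--Campbell mixing lemma (you re-derive its linear-plus-quadratic expansion explicitly, the paper cites the diamond-norm form), translate between operator-norm and prediction-error bounds via H\"older's inequality with $\|\rho(\boldsymbol{x})\|_1 = 1$, and substitute to obtain $\delta_1^2/(4\|\mathcal{O}\|_{\infty}) + 2\delta_2$. The ``translation gap'' you flag is genuine and is resolved in the paper exactly as you anticipate: the written proof takes $\|U_k(\boldsymbol{\theta}) - V\| \leq \varepsilon_1$ and $\|\sum_k \alpha_k U_k(\boldsymbol{\theta}) - V\| \leq \varepsilon_2$ as the operative hypotheses, derives the pointwise prediction-error conditions from them as the easy direction, and then \emph{defines} $\delta_1 := 2\varepsilon_1\|\mathcal{O}\|_{\infty}$ and $\delta_2 := 2\varepsilon_2\|\mathcal{O}\|_{\infty}$, so the lemma's stated conditions are effectively shorthand for the operator-norm scenario rather than assumptions from which those norms are recovered.
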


We prove this lemma in~Supp.~Mat. B.2. In the above, $\mathbb{E}_{\boldsymbol{x}}|h(\boldsymbol{x}) - g_{\boldsymbol{\theta}}(\boldsymbol{x})|$ is the expected prediction error admitted by the model, $g_{\boldsymbol{\theta}}(\boldsymbol{x})$, where the expectation is taken over the distribution the data is drawn from. Take the common predictor observable to be $\mathcal{O} = Z$, $\|\mathcal{O}\|_{\infty} = 1$ and set $\delta_1 = \delta$. Assume we find a distribution for the  LCU QNN which quadratically reduces this error $\delta_2 = \delta^2$. Then according to~\lemref{lemma:supervised_mixing_corr}, the density QNN will also be an $\mathcal{O}(\delta^2)$ good predictor, but at the same implementation cost as a single sub-unitary QNN. 

In the rest of this work, we do not primarily take the QNN $\rightarrow$ LCU QNN $\rightarrow$ density QNN route as implied by~\lemref{lemma:supervised_mixing_corr}, instead directly train the parameters of the density QNN indicating the viability of bypassing the expensive LCU directly. Though, for the sake of completeness, in~Supp.~Mat.~C we do provide one example where the LCU QNN outperforms the single-circuit QNN numercially. It may be that non-unitary (pure LCU) quantum machine learning~\cite{heredge_non-unitary_2024} and the density quantum neural networks we present here have different difficulties \emph{in practice} to achieve good model performance. Also, the above result is clearly restrictive to the setting where the model to be learned is itself a QNN, and we also know the correct observable to measure. However, it may be generalised to stronger statements where the data is generated by an arbitrary classical function. We leave both of these investigations to future work.

In the quantum compiling problem, one endeavours to produce a sequence of `simple' operations which approximate the behaviour of a target unitary or quantum channel on input states. Randomised compiling (informally via the Hastings-Campbell mixing lemma~\cite{hastings_turning_2017, campbell_shorter_2017}) allows one to carry a quadratic suppression in compiling error from a channel composed of a linear combination of unitaries, into the corresponding \emph{randomised} channel which has the same execution overhead as an individual unitary in the linear combination. If we treat the compilation task as a learning problem, as has been done in several works, spawning the subfield of \emph{variational} quantum compiling~\cite{khatri_quantum-assisted_2019, sharma_noise_2020}, one can intuitively see how the Mixing lemma may be directly applied. 

However, to argue for the benefit of density networks over individual QML circuits, we must go further and generalise to other learning tasks, for example supervised data classification/regression (which is the primary task we use density QNNs for in this work).

\subsection{Connection to classical mechanisms}  \label{sssec:classical_mechnanisms}

We have discussed features of density QNNs, and elucidated their position within the spectrum of existing quantum models. In this section, we discuss relationships or analogies between density QNNs to \emph{purely classical} mechanisms and models. We summarise these relationships in the following three observations:

\begin{itemize}
    \item \textbf{Observation 1:} The randomised state preparation method may be used as the training mode, and the deterministic state preparation method may be used as the inference mode for density QNNs, if analogies are to be made with the classical dropout mechanism.

    \item \textbf{Observation 2:} Unlike classical dropout, density QNNs do not combine an exponential number of sub-networks at inference time.

    \item \textbf{Observation 3:} Density QNNs are a quantum analogue of Mixture of Experts (MoE) models, which are subtly different from ensemble methods.
    
\end{itemize}

We expand on these observations in the following.

\subsubsection{Quantum dropout} \label{sssec:quantum_dropout}
In the previous section, we have demonstrated how density QNNs have the capacity to mitigate overfitting via more strategic parameter allocation. In light of this, one may make an analogy with the randomised density QNN and the \emph{dropout}~\cite{srivastava_dropout_2014, baldi_understanding_2013} mechanism in classical neural networks. Dropout is an effective method of combining the predictions of exponentially many classical neural networks, and also mitigates overfitting. This analogy has indeed been remarked in recent works~\cite{nguyen_theory_2022} due to the random `removal' of $K-1$ sub-unitaries in each forward pass, which, on the surface, appears similar to the randomised removal of neurons in a neural network. it has therefore been conjectured that randomised density QNNs as we have defined them may be less prone to overfitting because of this analogy. However, in the following, we will argue that this is an incorrect, or at least an incomplete comparison.

This incompleteness of the comparison arises from (at least) two sources. Firstly, there is a distinct difference (and important) between the  \emph{training} and \emph{evaluation} modes in classical dropout. In the training mode, a dropout mechanism applies a random zeroing of the nodes of a neural network, which removes outgoing weights from those nodes. This means, on any forward pass through the loss function, only a sub-network is actually activated. However, in the \emph{inference} mode, all sub-networks are effectively present, where the outgoing weights of a dropped node are re-weighted by the probability of dropping that node.

Regarding different training and inference operations, we propose the randomised implementation of the density QNN (\figref{fig:density_qnn_mainfig}c) as the state preparation method for the training phase of the model. To align with classical dropout, we then propose the \emph{deterministic} density QNN (\figref{fig:density_qnn_mainfig}b) as the inference mode. Due to the linearity of the model, this has the effect of weighting the contributions of each sub-unitary QNN, $U_k(\boldsymbol{\theta}_k)$ by the corresponding coefficients, $\alpha_k$, exactly as in classical dropout. 

However, dropout also has the key feature of efficiently combining an \emph{exponential} number of effective sub-networks at inference, which is a critical feature in boosting model performance. However, in order to maintain training efficiency (within~\corrref{corr:grad_scaling_density_commuting_block}), a density QNN may have at most $K = \mathcal{O}(\log(N))$ `sub-networks', where $N$ is the total number of circuit parameters. As such, it is unclear if at a fundamental level a density QNN can function fully as a quantum analogue for dropout. Nevertheless, we will demonstrate that the density QNN still can, in some capacity, mitigate overfitting, and perform the effective \emph{action} of dropout.

\begin{figure*}[!ht]
\includegraphics[width=\linewidth]{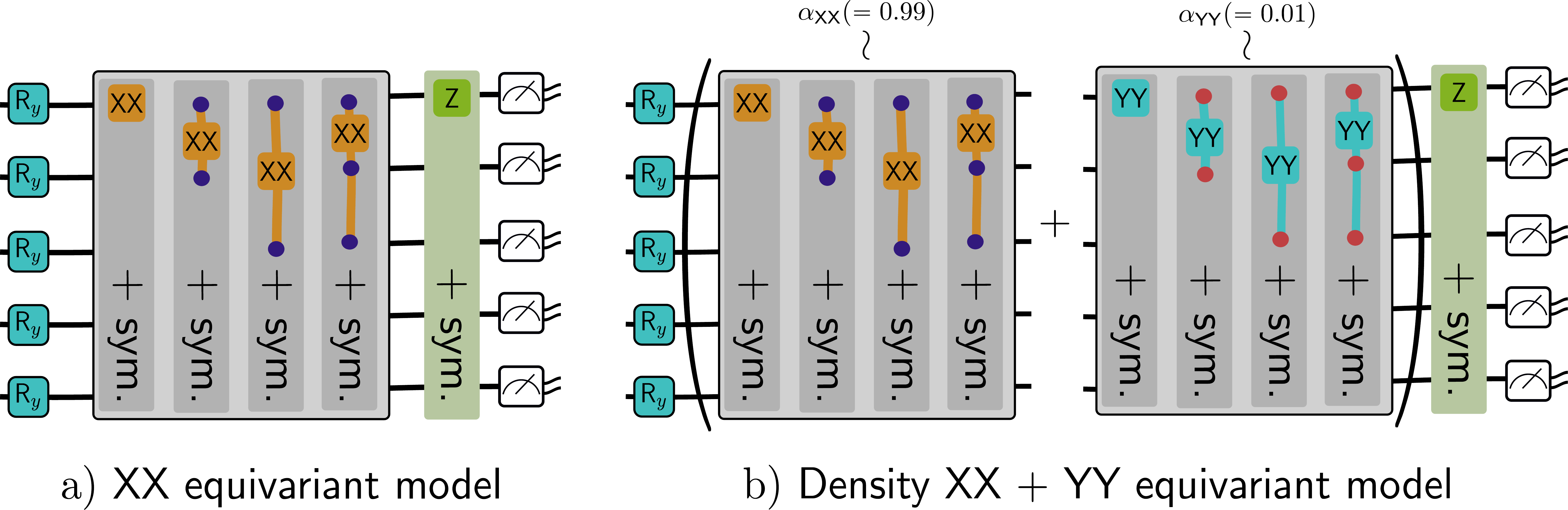}
  \caption{
  \textsf{
  \textbf{Equivariant density QNNs.} \\
    a) \textbf{The commuting-generator XX model} and a b) \textbf{XX+YY density QNN model}. The former contains up to three-body Pauli-X generated operations with twirling applied to enforce equivariance. The latter contains two sub-unitaries $U_{XX}$ (circuit (a)) and $U_{YY}$ which has the same structure but replacing Pauli-$X$ operations with Pauli-$Y$. $U_{XX}/U_{YY}$ are applied with probabilities $\alpha_{XX/YY}$. Each sub-unitary in b) are commuting-generator circuits, so each has efficiently extractable gradients.
  }
  }
  \label{fig:equivariant_models}
\end{figure*}

\begin{figure*}[!ht]
\centering
    \includegraphics[width=\linewidth]{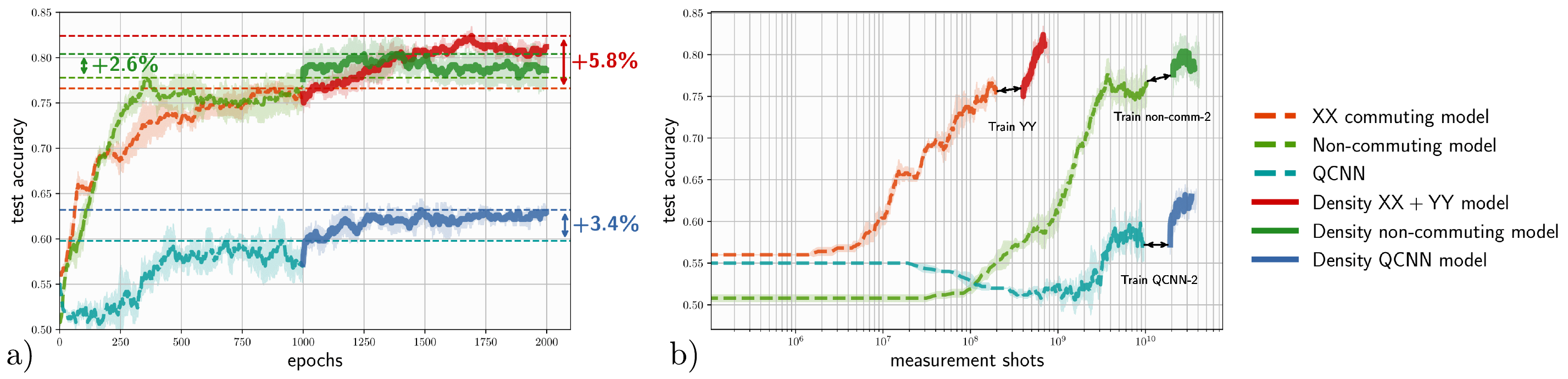}
  \caption{
  \textsf{
  \textbf{Numerics on noisy bars and dots dataset.} \\
    We create density QNN versions of the non-trivial models from Ref.~\cite{bowles_backpropagation_2023}; 1) the commuting-generator circuit in \figref{fig:equivariant_models}, 2) a `non-commuting' equivariant QNN and 3) the quantum convolutional neural network~\cite{cong_quantum_2019}, all on $10$ qubits. In all cases the density QNN is initialised from (separately) pretrained (for $1000$ epochs) base versions, and training continues for another $1000$. For the non-commuting and QCNN density models the sub-unitaries have identical structure but trained independently.  We show mean and standard deviation in test accuracy vs. a) \textbf{training epochs} and b) \textbf{number of overall shots} over $5$ independent training runs, from the same initialisation. In all cases, after base model performance saturation, the density QNN improves the final result. The gaps in (b) are to account for the extra measurement overhead to initialise and train the second sub-unitaries. This is $U_{YY}$ for the density model and $U_2$ (non-comm-2/QCNN-2) for the other two models. In all cases, the density version is initialised with $\alpha_1/\alpha_{XX} = 0.99$ $\alpha_2/\alpha_{YY} = 0.01$ which are also trainable.
  }
  }
  \label{fig:equivariant_model_numerics}
\end{figure*}

\subsubsection{Density quantum neural networks as a mixture of experts} \label{ssec:moe_density_qnn}

The next, and final, comparison to classical methods we discuss is an interpretation of the density QNN framework as a \emph{mixture of experts} (MoE) model~\cite{jacobs_adaptive_1991, jordan_hierarchical_1993}. MoE models have achieved success even at the level of large scale machine learning models such as Mixtral~\cite{jiang_mixtral_2024}, particularly where sparsity is required. The MoE framework is flexible and powerful as it allows an increase in the capacity of a model with minimal increase in computational effort. This same methodology drives the density QNN framework of this work. A MoE contains a set of `\emph{experts}', $\{f_1, \dots, f_K\}$, each of which is `responsible' for a different training case. A \emph{gating} network, $\mathcal{G}$, decides which expert should be used for a given input. In the simplest form, the MoE output, $F(\boldsymbol{x})$, is a weighted sum of the experts, according to the gating network output $F(\boldsymbol{x}) = \sum_k \mathcal{G}_k(\boldsymbol{x})f_k(\boldsymbol{x})$ where $\mathcal{G}_k(\boldsymbol{x})$ is the $k^{th}$ output of the gating network, given input $\boldsymbol{x}$. A typical difference between MoE models and ensemble models is that, for the latter every element of the ensemble is evaluated for every input, while for an MoE, only a subset are activated (e.g. the top-$k$ experts for a given input). We elaborate on this in~Supp.~Mat. H. 

We can interpret the density QNN exactly as a form of MoE as follows. By allowing the sub-unitary coefficients to be \emph{data-dependent} ($\alpha_k \rightarrow \alpha_k(\boldsymbol{x})$), as in \eqref{eqn:density_qnn_with_data_as_linear_model}, we can predict them using a gating network in an MoE. Now, the sub-unitaries become the `experts' and the overall model has increased capacity to learn which sub-unitary (expert) is more relevant for the task at hand (the given input $\boldsymbol{x}$). If the data were quantum states, we could use another quantum neural network as a gating network, or perhaps classical shadows~\cite{huang_predicting_2020} with a classical neural network. We choose a simple yet traditional model for the data-dependent gating mechanism we describe in~\secref{sec:methods}. There has been extensive investigation into variations of the MoE models both for deep~\cite{eigen_learning_2014, shazeer_outrageously_2017} and shallow models~\cite{cao_support_2003, lima_hybridizing_2007} though we leave thorough investigation of different approaches in the quantum world to future work. For example, one could employ techniques regulating the dependence on the overall mixture on any individual subsets of unitaries~\cite{eigen_learning_2014} which can occur when the distribution tends to become quite peaked, meaning the gating network chooses to rely only on a small fraction of experts for a given input.

\subsection{Numerical results} \label{ssec:numerics}
Now, we demonstrate that density networks can be successful \emph{in practice}, through three examples. The first is an demonstration where an efficiently trainable model can be made more expressive (see~\figref{fig:hardware_efficient_density}, right), and in the second we choose an example of an interpretable model family for which we construct \emph{both} efficiently trainable versions  (\figref{fig:hardware_efficient_density}, left) and more expressive (\figref{fig:hardware_efficient_density}, right). The third, and final, main example demonstrates the ability of density quantum neural networks to mitigate overfitting compared to standard single unitary QML models. We summarise the takeaways of each set of numerics as follows:

\begin{enumerate}
    \item \textbf{Model}: Equivariant quantum neural networks.
    \begin{itemize}
       \item Takeaways: Backpropagation scaling models can be made more expressive and performant. Model initialisations via sub-unitary pre-training is possible and successful.
     \end{itemize}
    \item  \textbf{Model}: Hamming-weight preserving (orthogonal) quantum neural networks.
    \begin{itemize}
       \item Takeaways: Non-backpropagation scaling models can be made trainable with minimal or no performance loss. Increasing sub-unitary expressivity monotonically improves overall performance.
    \end{itemize}
    \item  \textbf{Model}: Data reuploading quantum neural networks.
    \begin{itemize}
       \item Takeaways: Distributing parameters over shallower sub-unitaries mitigates overfitting to training data.
    \end{itemize}
\end{enumerate}

\subsubsection{Equivariant quantum neural networks} \label{ssec:numerics_equivariant_qnn}

\begin{figure}[!ht]
\centering
\includegraphics[width=0.95\linewidth]{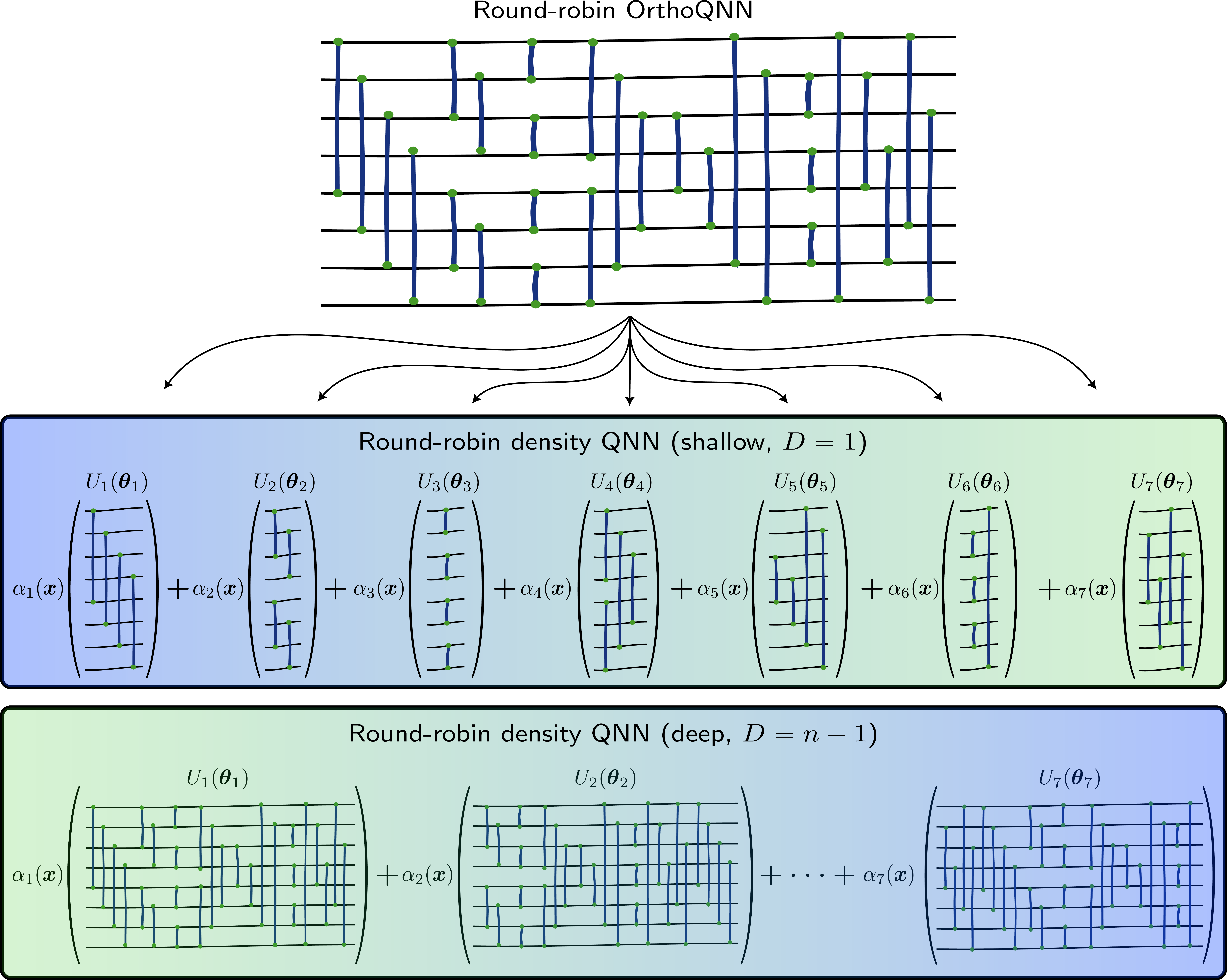}
  \caption{
  \textsf{
  \textbf{Hamming weight preserving density QNN with round-robin structure.} \\
  Illustration of extraction possibilities from an orthogonal QNN with round-robin connectivity, on $8$ qubits. We can extract a minimally expressive `shallow' (depth $D=1$)  or a maximally expressive `deep' (depth $D=\frac{n(n-1)}{2}$) sub-unitary for each `expert'. Both cases have $K=n-1$ sub-unitaries, which have $\frac{n}{2}$ (shallow) or $\frac{n(n-1)}{2}$ (deep) parameters. The shallow version has gradients which can be evaluated in parallel using $n-1$ circuits with gradient complexity stated in~\tabref{tab:summary_density_comparison}. The deep version assumes a cycled connectivity as seen in the figure with gradient extraction complexity $\mathcal{O}(n^3)$. We can interpolate between the two regimes for different sub-unitary depths, $D$, between $1\leq D\leq n-1$, but efficient gradient extraction is lost for $D\geq 2$. An MoE gating network can be used to predict the probability coefficients, $\{\alpha_k(\boldsymbol{x})\}$.
 }
  }
  \label{fig:round_robin_density_decomposition}
\end{figure}

For the first example, we build a density version of a commuting-generator (a special case of commuting-block circuit) ansatz on the simplified classification problem posed by Ref.~\cite{bowles_backpropagation_2023}. Here, the challenge is classifying \emph{bars} vs.~\emph{dots}, in a noisy setting, described in~\secref{sec:methods}.

For the base QNN, we use the original `XX' ansatz (denoted $U_{XX}(\boldsymbol{\theta})$) of Ref.~\cite{bowles_backpropagation_2023}, see~\figref{fig:equivariant_models}a) and compare against a density QNN version with two sub-unitaries, $\{U_1, U_2\}$ and coefficients $\{\alpha_1, \alpha_2\}$. We define  $U_1 := U_{XX}(\boldsymbol{\theta})$. The second sub-unitary, $U_2 := U_{YY}(\boldsymbol{\theta})$, has the exact same structure as $U_{XX}(\boldsymbol{\theta})$, but with the $XX$ generators replaced by Pauli-$Y$ generators. The measurement operator is chosen to be the translation invariant operator $\sum_i Z_i$ to suit the translation symmetry in the synthetic problem. The output function of the model is then:

\begin{equation}
f(\boldsymbol{\theta},\boldsymbol{\alpha}, \boldsymbol{x}) := \sum_{i=1}^d \sum_{k=1}^K \Tr(Z_i\rho(\boldsymbol{\theta}, \boldsymbol{\alpha}, \boldsymbol{x}))
\end{equation}
with 

\begin{multline}
    \rho(\boldsymbol{\theta}, \boldsymbol{\alpha}, \boldsymbol{x}) := 
    \alpha_1 U_{XX}(\boldsymbol{\theta})\Big[\bigotimes_{j=1}^d \ket{\boldsymbol{x}}\bra{\boldsymbol{x}}^{ry}_{j} \Big]U^{\dagger}_{XX}(\boldsymbol{\theta}) + \\
     \alpha_2 U_{YY}(\boldsymbol{\theta})\Big[\bigotimes_{j=1}^d \ket{\boldsymbol{x}}\bra{\boldsymbol{x}}^{ry}_{j}\Big] U^{\dagger}_{YY}(\boldsymbol{\theta})
\end{multline}

Each generator that appears in $U_{XX/YY}(\boldsymbol{\theta})$ is of the form $\textsf{sym}(X_1)$ or $\textsf{sym}(X_1\dots X_k)$ ($\textsf{sym}(Y_1)$ and $\textsf{sym}(Y_1\dots Y_k)$ respectively) for some $k\leq K$. The operation $\textsf{sym}$ is the twirling operation used to generate equivariant quantum circuits (in this case, equivariance with respect to translation symmetry). For example, $\textsf{sym}(X_1X_2)$ is a sum of all pairs of $X$ operators on the state with no intermediate trivial qubit, $\textsf{sym}(X_1X_3)$ is a sum of all pairs on the state separated by exactly \emph{one} trivial qubit (in either direction, visualising the qubits as a $1\textnormal{D}$ chain with closed boundary conditions) and so on. However, note that in this case since the $YY$ ansatz ($U_{XX/YY}(\boldsymbol{\theta})$) is in the same basis as the data encoding ($R_y$), we have classical simulatability for computing expectation values since the initial circuit is effectively Clifford~\cite{bowles_backpropagation_2023}. The density circuits are visualised in~\figref{fig:equivariant_models}b which we adapt from~\cite{bowles_backpropagation_2023}, and the results of the experiment can be seen in~\figref{fig:equivariant_model_numerics}a and~\figref{fig:equivariant_model_numerics}b. We also construct density versions of the two other non trivial models considered in Ref.~\cite{bowles_backpropagation_2023}, the quantum convolutional neural network (QCNN)~\cite{cong_quantum_2019} and a `non-commuting` equivariant circuit (see Ref.~\cite{bowles_backpropagation_2023}, Fig. 5c). In both these latter cases, we again consider two sub-unitaries $U_1$ and $U_2$ which have identical structure but different learned parameterisations, $U_1 = U_2$, $\boldsymbol{\theta}^*_1 \neq \boldsymbol{\theta}^*_2$. In initialising the density QNN, we bias the model towards the `first' sub-unitary with probability $99\%$ (as in ~\figref{fig:equivariant_models}b)). The density QNN outperforms its base counterpart in all cases, and the best tradeoff between efficiency and model performance is found for the density $XX+YY$ model. We describe the experiment details in~\secref{sec:methods}.

\subsubsection{Orthogonal quantum neural networks} 
\label{ssec:numerics_ortho_qnn}

For the second example, we turn to the Hamming weight (HW) preserving quantum neural network ($U(1)$ equivariant), and specifically their orthogonal quantum neural networks variants. As mentioned above, these models have desirable properties from a machine learning point of view - they are interpretable and can stabilise training. While we focus on the easiest to simulate version of HW preserving models here, all of the below is applicable to the more difficult to classically simulate compound QNNs~\cite{cherrat_quantum_2022} (Supp.~Mat. E.2) and general Hamming weight preserving unitaries. 

The data encoded state for orthogonal QNNs on $n$ qubits, $\rho(\boldsymbol{x}) := \ketbra{\boldsymbol{x}}{\boldsymbol{x}}, \ket{\boldsymbol{x}} := \sum_j x_j \ket{\mathbf{e}_j}$, is a unary amplitude encoding of the vector $\boldsymbol{x}$ ($\mathbf{e}_j$ is a basis vector with a single $1$ in position $j$ and zeros otherwise). The unitary that acts on this state consists of two qubit gates known as reconfigurable beam splitter (RBS) gates (see~\secref{sec:methods}). There are various configurations for these gates which parameterise some subset of $SO(n)$ matrices. To begin, we choose the \emph{round-robin} architecture, shown in \figref{fig:round_robin_density_decomposition} (top), composed of $\mathcal{O}(n)$ layers of where in each layer qubit is connected to only one other, alternating so eventually each qubit is connected to every other. Evaluating gradients for this QNN specification via the parameter shift rule requires $\mathcal{O}(n^2)$ circuits. We construct the `shallow' version of a round-robin density QNN by simply extracting each layer into the sub-unitaries, with $K=\mathcal{O}(n)$. Then each $U_k$ is a depth $D=1$ commuting generator circuit whose gradients can be evaluated in parallel with $\mathcal{O}(n)$ circuits via~\corrref{corr:grad_scaling_density_commuting_block}. In the spirit of~\figref{fig:hardware_efficient_density} (right), we also take $K=\mathcal{O}(n)$ copies of the \emph{full} round robin circuit to construct a more expressive model. This model will require $\mathcal{O}(Kn^2) = \mathcal{O}(n^3)$ circuits for gradient evaluation, but also has $\mathcal{O}(n)$ more parameters than the `vanilla' round-robin model. We can also interpolate between these two regimes by scaling the complexity of each sub-unitary with a round-robin layer depth between $1 \leq D \leq n-1$. In all cases, we demonstrate the MoE formalism by introducing a gating network to predict the coefficients, $\{\alpha(\boldsymbol{x})\}$, details given in~\secref{sec:methods}.

In all cases, since the unitaries are Hamming weight preserving, the output states, $\ket{\boldsymbol{y}^{k}}$ from each sub-unitary are of the form $\ket{\boldsymbol{y}} = \sum_j y_j \ket{\mathbf{e}_j}$ for some vector $\boldsymbol{y}$. This output is related to the input vector via some orthogonal matrix transformation $O^U$, $\boldsymbol{y} = O^U\boldsymbol{x}$ where the elements of $O^U$ can be computed via the angles of the RBS gates in the circuit (see~\secref{sec:methods}). The typical output of such a layer is the vector $\boldsymbol{y}$ itself, for further processing in a deep learning pipeline. As a result the output of a `orthogonal' density QNN is a linear combination of orthogonal transformations, $\boldsymbol{y} = \sum_k\alpha_k \boldsymbol{y}_k =  \sum_k\alpha_k O^{U_k}\boldsymbol{x}$, benefitting from both stable orthogonal training within each sub-unitary, and the generality afforded by the linear combination. In contrast, adding more (e.g. RBS) gates directly to the pure state version circuit (i.e. the `vanilla' round-robin QNN) would not increase the expressivity as the output would only correspond to another (different) orthogonal matrix. The results of these experiments are shown in~\figref{fig:round_robin_results}.

\subsubsection{Mitigating overfitting with density QNNs} \label{ssec:density_overfitting}

\begin{figure*}[!ht]
\centering
    \includegraphics[width=0.9\linewidth]{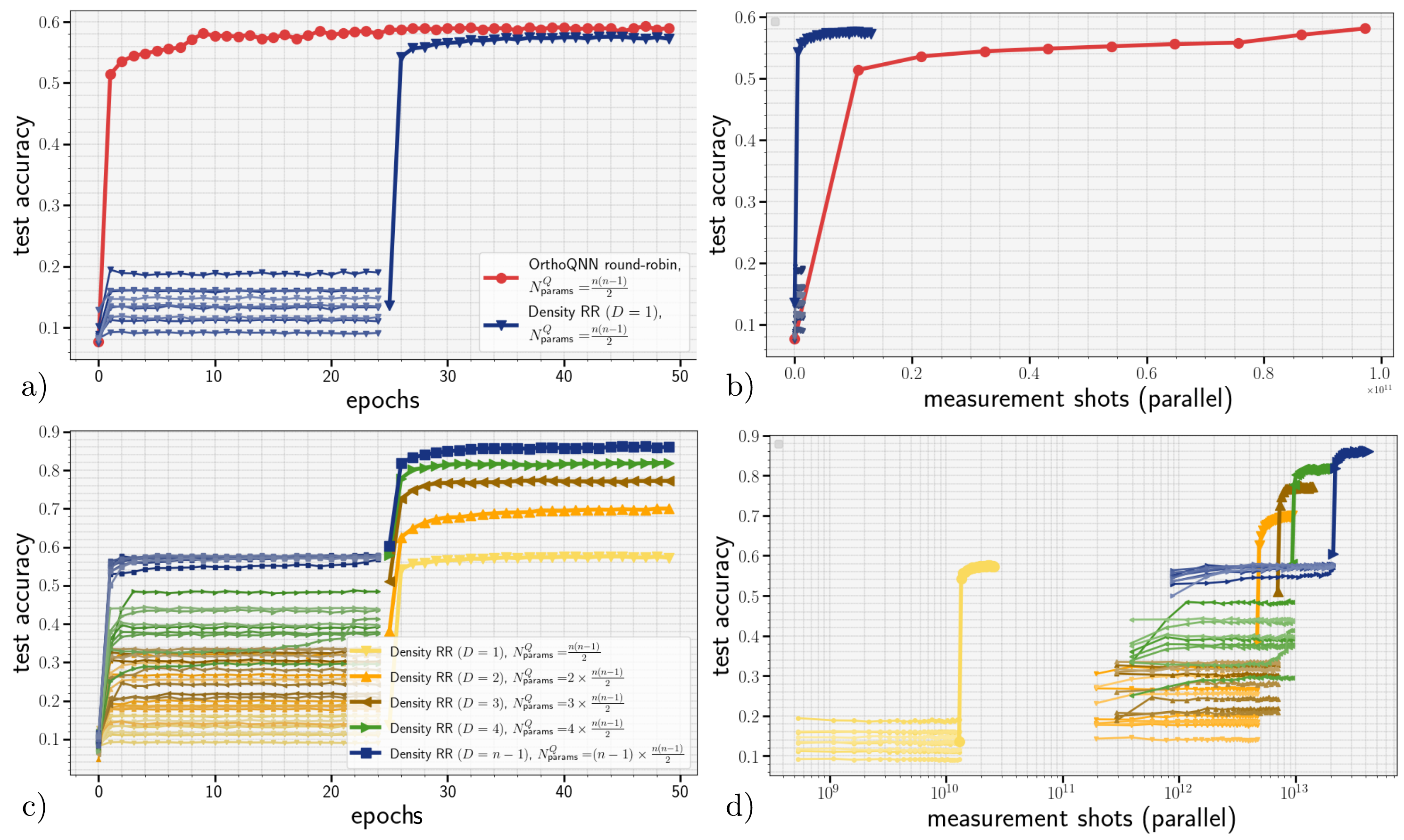}
\caption{
  \textsf{
  \textbf{Round-robin Hamming weight preserving density QNN results.} \\
  \textbf{Top row)} Round-robin OrthoQNN versus shallow ($D=1$) round-robin density QNN, as a function of training epochs (a) and measurement shots (b) on $10$ qubits. For $D=1$ the density QNN is a commuting generator circuit and so requires only $n-1$ circuit evaluations for each iteration whereas the OrthoQNN requires $\frac{4n(n-1)}{2}$ circuits as per the parameter-shift rule for gradient evaluation. \\
  \textbf{Bottom row)} increasing complexity of density QNN by increasing round robin depth, $D$, from $D \in \{1, 2, 3, 4, (n-1)\}$ as a function of epochs (c) and measurement shots (b, log scale). In all cases for the density QNN, $n-1 = 9$ sub-unitaries are trained \emph{in parallel} for $25$ epochs ($9$ thin lines) before initialising the density QNN, and training continues for a further 25. Performance of density QNN increases monotonically with sub-unitary depth, $D$.
  }
  }
  \label{fig:round_robin_results}
\end{figure*}

We have argued above that the density QNN framework possesses similarities and differences to the dropout mechanism. Nevertheless, we can demonstrate that indeed, density QNNs have the capacity to do what dropout does - which is to mitigate overfitting. To demonstrate this, we use \emph{data reuploading}~\cite{perez-salinas_data_2020} quantum neural networks. Data reuploading is a poweful QML technique which allows the construction of universal quantum classifiers, even with single qubits. We compare a single, deep (`vanilla') data reuploading QNN to a density version with approximately the same number of parameters, and test which model is more likely to overfit training data, generated by Chebyshev polynomials - see~\secref{sec:methods} for details. In summary, we find that given a budget of $N$ parameters, distributing these across a density QNN with shallow sub-unitaries will lead to better results (from the perspective of overfitting) than increasing the circuit depth of a single quantum classifier.
See~\secref{sec:methods} for a definition of data reuploading and~Supp.~Mat. G.3 for a discussion of data reuploading with density QNNs.

For both density and vanilla reuploading models, we use arbitrary single qubit rotations with $3$ parameters as the trainable operations. For the vanilla version, these are repeated for $L$ `reuploads' so the model has $3L$ parameters. For the density QNN, we choose $K=5$ sub-unitaries, the depth of each ($D$) is chosen such that $5\times 3D \approx 3L$. Both models are illustrated in~\figref{fig:data_reuploading_overfitting}a, and we also shown the truncated Fourier series learned by each of the $5$ sub-unitaries in the density case. The results can be seen in~\figref{fig:data_reuploading_overfitting}b which shows the test error, as measured by mean squared error (MSE) and the generalisation gap (difference between train and test errors).

We see as the vanilla reuploading model becomes more expressive (with increasing $L$), it overfits and test error grows. In contrast, the density model (with approximately the same number of parameters) does not overfit and the test error remains small. This means that if, without any prior, one would choose an overparameterised model (see~\cite{larocca_theory_2023, peters_generalization_2023} for further discussions on overparameterisation which may be incorporated in future work), the density model would perform better. In~Supp.~Mat. J we provide further detail on this, and supplementary numerics.

\begin{figure*}[ht!]
\centering
    \includegraphics[width=\linewidth]{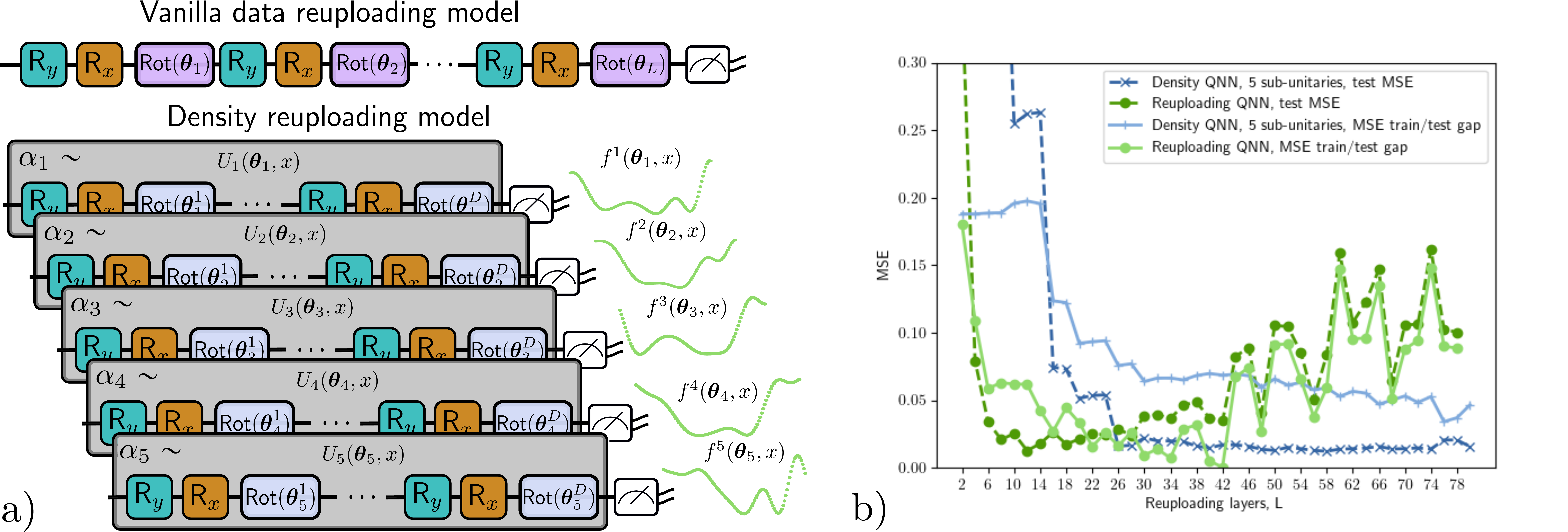}
  \caption{
  \textsf{
  \textbf{Overfitting mitigation with density QNNs - data reuploading.} \\
  a) A ``vanilla'' single qubit data reuploading QNN with $L$ reuploading layers versus a density QNN with $5$ sub-unitaries, each with $D$ data reuploads. Also shown is the partial Fourier series representations of each sub-unitary, $\{f^k(\boldsymbol{\theta}_k, x)\}_{k=1}^5$ when learning the Chebyshev polynomial $T_2(x)$.
  b) Generalisation gap between train and test mean square errors (MSE) in a regression problem for learning the Chebyshev polynomial $T_2(x)$. The number of parameters between both models is kept approximately the same for each $L$, so that $5\times 3D \approx 3L$. For a fixed number of parameters, the test error and train/test gap (generalisation error) for the vanilla data reuploading model diverges, while the density QNN test error remains small.
  }
  }
  \label{fig:data_reuploading_overfitting}
\end{figure*}

\subsection{Drawbacks and limitations} \label{ssec:comparison_to_other_methods}

To conclude our results, we discuss some potential drawbacks and limitations with the density framework, specifically those which it does not solve relative to single-unitary variational learning models. A major problem with the latter is the existence of barren plateaus, or regions of problematic gradients more generally. As discussed in Ref.~\cite{bowles_backpropagation_2023} the general relationship between commuting generator circuits, and barren plateaus is still an open question - particularly in finding a circuit architecture which has a suitable relationship between input states, circuit generators and measurement observable to render the dynamical Lie algebra polynomial, the circuit non-classically simulatable and gradients non-vanishing. This is not a question resolved via the density framework either, although if such an architecture was to be found, it could similarly be uplifted as other specific models studied in this work. However, an interesting research direction provided by the density framework is to allow a combination of different sub-unitaries, which may have different and independent gradient behaviour. However, in this case it is possible that the sub-unitary with most favourable gradients would dominate the training (a feature known as representation or expert collapse in the mixture of expert literature~\cite{lee-thorp_sparse_2022, riquelme_scaling_2021, chi_representation_2022}). However, it has also been demonstrated how barren plateau issues can be mitigated or eliminated via clever initialisation strategies

Besides the barren plateau issue, there are still a number of open questions with variational quantum learning models which are also potentially inherited by non-unitary (LCU, dissipative or density) learning models including; the effect of quantum or measurement noise, lack of data or problem-dependent circuit architectures and training dynamics related to hyperparameter tuning and loss function choices. All of these are fruitful areas for future study.

\section{Discussion} \label{sec:discussion}

Efficiently trainable quantum machine learning models, particularly those operable on quantum hardware, are crucial for the field's advancement. Here, we introduce density quantum neural networks (density QNNs), a generalisation of traditional pure parameterised quantum circuit models, incorporating classical randomisation. We demonstrate that the gradient complexity of these models hinges on the sub-unitaries' gradient evaluation. By employing commuting-block circuits for these sub-unitaries, the model requires a constant number of gradient circuits while potentially enhancing expressivity. Additionally, investigating the density formalism's impact on the efficient classical simulability of the model in specific scenarios is essential. We also highlight the connections between density QNN formalism and other primitives in quantum and classical machine learning, such as quantum-native dropout, the mixture of experts formalism, and measurement-based and post-variational quantum machine learning. Exploring the intersections among these distinct learning mechanisms presents promising research opportunities. Most significantly, we provide a direct connection between linear combination of unitaries (LCU) QML and density QNNs via the Hastings-Campbell Mixing lemma from randomised quantum compiling. This indicates that performance benefits from the LCU can be directly translated into the density framework, for a dramatically reduced cost.

Inspired by interpretable models such as orthogonal quantum neural networks, and efficiently trainable circuits such as commuting generator QNNs, we developed density versions of these families for numerical experimentation. The density versions demonstrate increased learning performance, or more efficient trainability, or possibly both, over their pure state counterparts. Future research should explore the expressivity of density QNNs with various sub-unitary families and seek other efficiently trainable sub-unitary types. Successful examples can be promptly applied to the density model, as illustrated with hardware-efficient QNNs.

This work underscores the transformative potential of density quantum neural networks in advancing quantum machine learning, laying a robust foundation for future innovations.

\section{Methods} \label{sec:methods}

\subsection{Commuting-block quantum neural networks} \label{ssec:commuting_block_circuits}

Commuting-block QNNs are circuits decomposed into $B$ \emph{blocks} or layers, so the ansatz has the following form:
\begin{equation} \label{eqn:commuting_block_circuits}
    \mathcal{U}(\boldsymbol{\theta}) = \prod_{b=1}^B\prod_{j=1}^{N_b} U_b(\boldsymbol{\theta}^b_j) = \prod_{b=1}^B\prod_{j=1}^{N_b}  e^{i\theta^b_j G^b_j}
\end{equation}
Each block, $b\in [1, \dots B]$, with $\prod_{j=1}^{N_b}  e^{i\theta^b_j G^b_j}$ contains generators, $G^b_j$ such that $[G^b_i, G^b_j] = 0~\forall i, j$. Increasing the number of blocks was shown to increase the dimension of the dynamical Lie algebra (DLA)~\cite{larocca_diagnosing_2022}, which is a concept useful in probing barren plateaus and expressibility in quantum neural networks. Assume $b=1$ for simplicity. Then, given a measurement operator, $\mathcal{H}$, each mutually commuting generator defines a gradient observable, $\mathcal{O}_k := [G_k, \mathcal{H}]$, such that $i \bra{\psi(\boldsymbol{\theta}, \boldsymbol{x})}[G_k, \mathcal{H}]\ket{\psi(\boldsymbol{\theta}, \boldsymbol{x})} = \partial_{\boldsymbol{\theta}_k}\mathcal{L}$. If all generators commute or anticommute with $\mathcal{H}$, i.e. $[G_k, \mathcal{H}] = 0$ or $\{G_k, \mathcal{H}\} = 0$ for all $k$, one can show that the gradient operators, $\mathcal{O}_k$, all mutually commute~\cite{bowles_backpropagation_2023}. As a result, $\mathcal{O}_k$ are simultaneously diagonalizable and their statistics can be extracted in parallel, using a single circuit query, which is diagonalized by appending a `diagonalizing' unitary to it. One can also extract higher order gradient information (second derivatives etc.) but with a decreasing precision~\cite{bowles_backpropagation_2023}. If $G_i~\forall i$ is a tensor product of Pauli operators, implemented in depth $T(n)$, the gradient circuit has depth $T(n) + \mathcal{O}\big(\frac{n}{\log(n)}\big)$ in general, but the overhead can be constant if $G_i$ is supported on a constant number of qubits, admitting a backpropagation scaling.

\subsection{Hamming weight preserving quantum neural networks} \label{sec:ortho_qnns}
\emph{Hamming weight} preserving unitaries, or $U(1)$ equivariant circuits~\cite{cerezo_does_2024} are a useful family of QNNs which admit favourable properties such as interpretability. Such circuits can act on states which have a fixed ($k$) Hamming weight ($k$ is equal to the number of $1$'s in a computational basis state) input or on superpositions of different Hamming weight states. In the latter case, the unitary can be written in a block diagonal form with the $k^{th}$ block, acting independently on the Hamming weight $k$ subspace~\cite{cherrat_quantum_2022, cherrat_quantum_2023, monbroussou_trainability_2023}. The dimension of the Hamming weight $k$ subspace is $\binom{n}{k}$. However, the dimension of the DLA (and hence the behaviour of barren plateaus etc.) depends on the type of operation performed within this space~\cite{monbroussou_trainability_2023}. However, for small $k$ even brute force subspace simulation is sufficient for efficient classical simulation. Though we note that DLA or brute force simulation are not the only possible classical simulation methods for these or generic circuits. For example, one could consider Clifford proximity or entanglement content in the resulting states and tailor simulation appropriately. Focusing on the $k=1$ HW space, we get \emph{orthogonal} quantum neural networks (OrthoQNNs)~\cite{landman_quantum_2022, cherrat_quantum_2022, cherrat_quantum_2023}, which have been useful in mitigating unstable gradient dynamics for classical models and may offer a cubic to quadratic speedup in parameterising and training (classical) orthogonal neural networks~\cite{landman_quantum_2022}. 
 
Hamming weight preserving unitaries (OrthoQNNs as a special case)~\cite{landman_quantum_2022, kiani_projunn_2022, hamze_parallelized_2021} can be built from \emph{reconfigurable beam splitter} (RBS) gates, or Givens' rotations, which have the following (two qubit) form:
\begin{multline} \label{eqn:rbs_gate_definition}
    RBS(\theta) := e^{-i\frac{\theta}{2}G_{RBS}}  :=  e^{-i\frac{\theta}{2}\left(Y \otimes X - X \otimes Y\right)} \\ 
    = \left(
    \begin{array}{cccc}
    1 & 0 & 0 & 0 \\
    0 & \cos(\theta) & -\sin(\theta) & 0 \\
    0 & \sin(\theta) & \cos(\theta) & 0 \\
    0 & 0 & 0 & 1 
    \end{array}
    \right)
\end{multline}
Different forms of OrthoQNNs can be defined, each parameterising a (potentially restricted) orthogonal matrix shown in~\figref{fig:ortho_qnn_architectures}. These are the explicit `pyramid', `X' and  `butterfly' ans\"atze used above and shown in~\figref{fig:ortho_qnn_architectures}. Replacing the $RBS$ gates in these circuits with a so-called \emph{fermionic} beam splitter ($FBS$) gate gives \emph{compound} QNNs, which we discuss in~Supp.~Mat. E.2. These compound QNNs can be represented by compound matrices acting on higher Hamming weight ($k>1$) or superpositions thereof.

\begin{figure}
    \centering
    \includegraphics[width=\columnwidth, height=0.6\columnwidth]{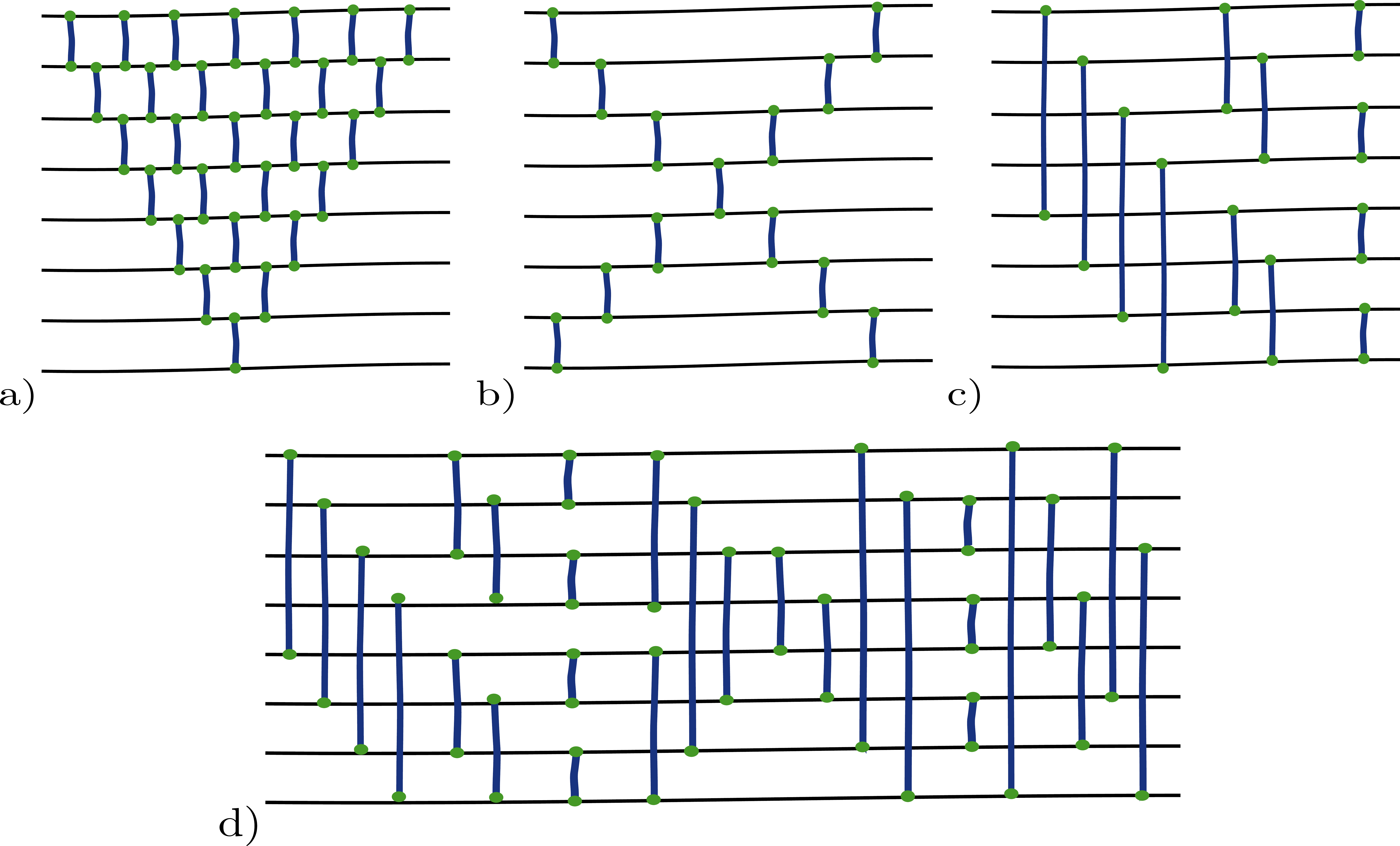}
    \caption{\textsf{
    \textbf{Ans\"atze for orthogonal quantum neural networks~\cite{landman_quantum_2022, cherrat_quantum_2022}}. a) Pyramid circuit, b) X circuit, c) Butterfly circuit and d) Round-robin circuit. Each gate corresponds to an $RBS$ gate with (potentially different) parameter $\theta$. With respect to the number of qubits, $n$, the depths of each of these layers is $2n-1, n-1, \log(n)$ and $n-1$ respectively.}
    }
    \label{fig:ortho_qnn_architectures}
\end{figure}

\subsection{Data reuploading} \label{sec:data_reuploading}
Data reuploading~\cite{perez-salinas_data_2020} quantum neural networks incorporate unitaries which inject the \emph{same} data into the parameterised state multiple times. This allows the learned functions to be highly non-linear in the data, and can produce universal classifiers. Further, it is well known that such models can be expressed as \emph{partial Fourier series} in the data,~\cite{schuld_effect_2021, landman_classically_2023, mhiri_constrained_2024}. A single reuploading layer, $\ell$, can be written as follows:
\begin{equation}\label{eqn:single_reup_layer}
    \mathcal{E}^{\ell}_{\boldsymbol{\theta}^{\ell}, \boldsymbol{x}}(\rho) := U(\boldsymbol{\theta}^{\ell})V_{\ell}(\boldsymbol{x}) \rho V_{\ell}^{\dagger}(\boldsymbol{x})U^\dagger(\boldsymbol{\theta}^{\ell})
\end{equation}
where $\rho$ is some data-independent initial state, usually $\rho = \ketbra{0}{0}^{\otimes n}$.
Then, the function output from $L$ such uploads is: 
\begin{align}\label{eqn:data_reuploading_full}
\mathcal{E}^{L}_{\boldsymbol{\theta}, \boldsymbol{x}}(\rho) &:=   \mathcal{E}^{L}_{\boldsymbol{\theta}^L, \boldsymbol{x}} \left( \cdots \mathcal{E}^{2}_{\boldsymbol{\theta}^{2}, \boldsymbol{x}}\left(\mathcal{E}^{1}_{\boldsymbol{\theta}^{1}, \boldsymbol{x}}(\rho)\right)\right)\\
    f(\boldsymbol{\theta}, \boldsymbol{x}) &= \Tr\left(\mathcal{O}\mathcal{E}^{L}_{\boldsymbol{\theta}, \boldsymbol{x}}(\rho)\right) = \sum_{\boldsymbol{\omega}\in \Omega} c_{\boldsymbol{\omega}}(\boldsymbol{\theta}) e^{i \boldsymbol{\omega}^\top \boldsymbol{x}}
\end{align}
which is a sum of Fourier coefficients with frequencies, $\Omega := \{\boldsymbol{\omega}\}$. Here, it is assumed that the encoding unitaries are \emph{Hamiltonian encodings} of the data, $V_{\ell}(\boldsymbol{x}) = \prod\nolimits_{d=1}^{D} \exp\left( i \boldsymbol{x}_d H^\ell_d\right)$ for some (potentially highly non-local) Hermitian operators, $\{H^\ell_d\}_{d, \ell}$.

Now, it is known that the expressivity of such models are determined by the eigenspectrum of the Hamiltonians used in the encoding unitaries. These generate the Fourier frequencies available to the model, $\Omega$, while the trainable parameters in the unitaries, $\boldsymbol{\theta}$, determine the coefficients $c_{\boldsymbol{\omega}}(\boldsymbol{\theta})$ which dictate how these frequencies are combined. It is straightforward to generalise this model into a density counterpart, but we explicitly do so in Supp. Mat. G.3 for completeness.

\subsection{Experiment details} 
\subsubsection{Equivariant density QNN} 
\label{ssec:equivariant_methods}
The bars and dots dataset is as follows. Each datapoint is a $d$-dimensional vector (bar or dot) with either alternating $+1$ or $-1$ values (dot) or sequential periods of $+1$ or $-1$ of length $\lfloor\frac{d}{2}\rfloor$ (bar). Gaussian noise with mean $=0$ and variance $\sigma^2$ is added to each vector. In Ref.~\cite{bowles_backpropagation_2023}, the translation invariance of the data enables the application of a commuting-generator equivariant ansatz, where each generator consists of a symmetrised Pauli-$X$ string containing up to $K$-body terms. The measurement observable is also a symmeterised Pauli  $Z$  string with $K=1$, meaning $\mathcal{H} = \sum_{i=1}^d Z_i$. Each bar/dot, $\boldsymbol{x} := [x_1, \dots, x_d]^\top \in \mathbb{R}^d$, is encoded as a Pauli Y rotation per qubit, $x_i \mapsto R_y(\frac{\pi x_i}{2})\ket{0} = \cos(\frac{\pi x_i }{4} )\ket{0} + \sin(\frac{\pi x_i }{4})\ket{1} =: \ket{\boldsymbol{x}}^{ry}_i$. This angle encoding uses the same number of qubits as the unary amplitude encoding in the OrthoQNN above. 

For the numerics in~\figref{fig:equivariant_model_numerics}, we use $10$ qubits in simulation. We increase the noise to $\sigma=1.8$ to raise the problem difficulty. We train the model with train and test data of sizes of $1000, 100$ respectively and a batch size of $20$. The Adam optimiser with a learning rate of $0.001$ is used in all cases. We initialise the weighting parameters, $\{\alpha_1, \alpha_2\} = \{\alpha_{XX}, \alpha_{YY}\} = \{0.99, 0.01\}$ to bias the model towards the (pre-trained) equivariant XX model, and $\boldsymbol{\alpha}$ are also trainable. The density model is compared to its commuting $XX$ counterpart, plus a non-commuting model (a model which does not obey the commuting generator properties) and the quantum convolutional neural network (QCNN)~\cite{cong_quantum_2019} with the same structures as described in Ref.~\cite{bowles_backpropagation_2023}. The formulae to compute the number of shots required by each model (commuting $XX$, non-commuting, QCNN) is given in Ref.~\cite{bowles_backpropagation_2023}. The number of shots for the density QNN $XX + YY$ model being $2N_{XX} + 2N_{YY}$ where $N_{XX/YY}$ is the number of shots to train the XX, YY models separately (over $500$ epochs for each model). We adapt the code of~\cite{bowles_josephbowlesbackprop_scaling_2023} to generate the results for the commuting, non-commuting and QCNN models which is built using Pennylane~\cite{bergholm_pennylane_2022}.

\subsubsection{Orthogonal density QNN} \label{ssec:orthogonal}
For all the density orthogonal QNN results, we take $60,000$ train and $10,000$ test $28\times 28$ pixel images from the MNIST dataset. For the round-robin decomposition (\figref{fig:round_robin_density_decomposition}) each MNIST image is downscaled to $10$ qubits using principal component analysis.

For simplicity in simulation, we compute the action of each sub-unitary on the data state, $\ket{\boldsymbol{x}}$ (which is the data-loaded feature vector using the unitary $V(\boldsymbol{x})$), and classically combine them. This bypasses the need to explicitly reconstruct the density matrix, $\rho(\boldsymbol{\theta}, \boldsymbol{\alpha}, \boldsymbol{x})$, though we stress in a real quantum implementation one would apply the randomisation over sub-unitaries, or create the density sta te explicitly on the quantum computer as in~\figref{fig:density_qnn_mainfig}. There are various choices one could make to create the data state, $\ket{\boldsymbol{x}}$, for all the above results we assume a \emph{parallel} vector loader~\cite{johri_nearest_2021, cherrat_quantum_2022} but the choice of loader will depend on the available quantum hardware. 

With the addition of the data dependent mixture of experts (MoE) gating network to predict the distribution of sub-unitary coefficients in~\figref{fig:round_robin_density_decomposition}, the following parameterisation is chosen. Specifically, ``MoE'' refers to coefficients defined as a gating network with the following form $\alpha_k(\boldsymbol{x}) = \texttt{softmax}_k(\texttt{Linear}(\boldsymbol{x}))$. We investigate the addition of a nonlinearity in~Supp.~Mat. H.2). Also, we could replace the input to the gating network to be a more general feature vector, $\boldsymbol{x} \rightarrow \sigma(\boldsymbol{x})$, i.e. the output of another neural network.

For all numerical results involving orthogonal quantum neural networks, we used QCLearn, a custom in-house quantum machine learning software package with model tailored classical simulation, primarily built in JAX~\cite{bradbury_jax_2018}.

\subsubsection{Data reuploading density QNN} \label{ssec:reuploading_methods}

For the numerical results demonstrating overfitting mitigation in\figref{fig:data_reuploading_overfitting} with data reuploading, we generate training and test data from the family of Chebyshev polynomials of the first kind, $\{T_n(x)\}$, and compare each models ability to learn the underlying function, defined as:

\begin{multline}\label{eqn:chebyshev_polynomials}
    T_0(x) = 1, T_1(x) = x, \\ T_{n+1}(x) = 2xT_n(x) - T_{n-1}(x), 
\end{multline}

We choose the ground truth function to be $h(x) := T_{4}(x)$ and add Gaussian noise with mean $0$ and variance $0.25$ to the training data to nudge both models towards visible overfitting.  The arbitrary single qubit operation we use for reupload $\ell$ in both reuploading models (density/vanilla) is $U_{\ell}(\boldsymbol{\theta}_{\ell}) = R_z(\theta^1_{\ell})R_y(\theta^2_{\ell})R_z(\theta^3_{\ell})$. We also again use Pennylane~\cite{bergholm_pennylane_2022} for these results.

\section{Data Availability} \label{sec:data_availibility}
The data corresponding to this manuscript is available upon reasonable request.

\section{Code Availability} \label{sec:code_availibility}
The code corresponding to this manuscript is available upon reasonable request.

\section{Author contributions} \label{sec:contributions}
The idea was conceived by B.C.. All authors contributed to the code for numerical results. Numerical experiments performed by B.C., N. M. and S. R.. B. C., N. M., S. R. and I.K. contributed to the theoretical results. I.K. supervised the project.

\section{Competing interests} \label{sec:competing_interests}
The authors declare no competing interests.



\appendix

\onecolumn

\addcontentsline{toc}{section}{Appendix} 
\renewcommand{\thepart}{}
\renewcommand{\partname}{}

\part{Supplementary Material}

\parttoc %

\titleformat{\section}[block]{\normalfont\Large\bfseries}{\thesection}{1em}{}
\titleformat{\subsection}[block]{\normalfont\large\bfseries}{\thesubsection}{1em}{}
\titleformat{\subsubsection}[block]{\normalfont\normalsize\bfseries}{\thesubsubsection}{1em}{}

\section{Quantum machine learning frameworks}\label{app_ssec:other_qml_frameworks}

In the main text, we introduced the density QNN as a \emph{framework}. Before proceeding, it is useful to make a distinction between a quantum machine learning (QML) model \emph{framework}, \emph{specification} and \emph{instance} as these concepts are frequently used interchangeably. Depending on preference, frameworks may be more or less concrete. For example, in Ref.~\cite{caro_generalization_2022}, a quantum machine learning model (QMLM) refers to the collection of (discrete and continuous) trainable parameters, $\boldsymbol{\theta}$ with the only a-priori  restriction being the existence of \emph{some} mapping from parameters to a (completely positive trace preserving) quantum channel, $\boldsymbol{\theta}\rightarrow \mathcal{E}_{\boldsymbol{\theta}}$. However, it will be useful to define a \emph{framework} with slightly more structure. This allows for comparison between different model families, each of which may have complementary advantages. The \emph{specification} of the QML framework refers to the choice of quantum circuit \emph{ans\"{a}tze} - a mapping from the parameters to a set of circuit operations, $\boldsymbol{\theta} \rightarrow \Sigma(\boldsymbol{\theta})$. Usually, $\Sigma$ will contain a set of parametrised unitaries, $\{U_k(\phi_k)\}_{k=1}^K$, measurement observables, $\{\mathcal{O}_j(\psi_j)\}_{j=1}^J$ and functions, $f(\boldsymbol{\alpha}), g(\boldsymbol{\beta}), \dots$ which dictate quantities such as control flow and variable structure within the model. The combination of unitaries, observable and classical functions are combined with the restriction of forming the general channel, $\mathcal{E}_{\boldsymbol{\theta}}$, acting on quantum or classical data, $\mathcal{X} = \{\rho, \rho(\boldsymbol{x})\}$, to produce
$\mathcal{E}_{\boldsymbol{\theta}}(\mathcal{X})$. Finally, this is generally measured with some (potentially different to those above) observable to produce outputs $\Tr(\mathcal{O}\mathcal{E}_{\boldsymbol{\theta}}(\mathcal{X}))$. However, note in the case of classical data, the model will also include the operations which load the data into the state, $\boldsymbol{x} \rightarrow \rho(\boldsymbol{x})$, and these also may have trainable parameters. 

Finally, an \emph{instance} is the fixing of a particular parameter setting, $\boldsymbol{\theta}^* \rightarrow \Sigma(\boldsymbol{\theta}^*)$, usually used for inference modes of the model when trained.

\section{Proofs} \label{app_sec:proofs}
In this section we give the explicit proofs from the theoretical results in the main text. 

\subsection{Proof of\texorpdfstring{~\propref{prop:gradient_scaling_dqnn}}{}} \label{app_ssec:proof_dqnn_gradient_scaling}
\begin{propositionunnum}[Gradient scaling for density quantum neural networks (\textbf{\propref{prop:gradient_scaling_dqnn} repeated})]\label{app_thm:gradient_scaling_dqnn_app}
Given a density QNN as in~\eqref{eqn:density_qnns} composed of $K$ sub-unitaries, $\mathcal{U} = \{U_k(\boldsymbol{\theta}_k)\}_{k=1}^K$, implemented with distribution, $\boldsymbol{\alpha} = \{\alpha_k\}_{k=1}^K$, an unbiased estimator of the gradients of a loss function, $\mathcal{L}$, defined by a Hermitian observable, $\mathcal{H}$:
\begin{equation} \label{eqn:density_qnn_loss_fn_app}
    \mathcal{L}(\boldsymbol{\theta}, \boldsymbol{\alpha}, \boldsymbol{x}) = \Tr\Big(\mathcal{H}\rho(\boldsymbol{\theta}, \boldsymbol{\alpha}, \boldsymbol{x})\Big)
\end{equation}
can be computed by classically post-processing $\sum_{l=1}^K\sum_{k=1}^K T_{\ell k}$ circuits, where $T_{\ell k}$ is the number of circuits required to compute the gradient of sub-unitary $k$, $U(\boldsymbol{\theta}_k)$ with respect to the parameters in sub-unitary $\ell$, $\boldsymbol{\theta}_{\ell}$. Furthermore, these parameters can also be shared across the unitaries, $\boldsymbol{\theta}_k = \boldsymbol{\theta}_{k'}$ for some $k, k'$. 
\end{propositionunnum}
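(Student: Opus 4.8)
The plan is to exploit the linearity of the density state \eqref{eqn:density_qnns} in the mixture index $k$, which turns the whole gradient computation into a classically-weighted sum of per-sub-unitary gradient computations, each handled by whatever differentiation rule is available for that sub-unitary.

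First I would insert \eqref{eqn:density_qnns} into the loss \eqref{eqn:density_qnn_loss_fn_app} and use linearity of the trace to write $\mathcal{L}(\boldsymbol{\theta}, \boldsymbol{\alpha}, \boldsymbol{x}) = \sum_{k=1}^K \alpha_k\, \ell_k(\boldsymbol{\theta}_k)$ with $\ell_k(\boldsymbol{\theta}_k) := \Tr\big(\mathcal{H}\, U_k(\boldsymbol{\theta}_k)\rho(\boldsymbol{x})U_k^\dagger(\boldsymbol{\theta}_k)\big)$, so that $\ell_k$ is exactly the loss of the stand-alone sub-unitary $k$ measured with $\mathcal{H}$. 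Differentiating term by term, for any parameter block $\boldsymbol{\theta}_\ell$ we get $\partial_{\boldsymbol{\theta}_\ell}\mathcal{L} = \sum_{k=1}^K \alpha_k\, \partial_{\boldsymbol{\theta}_\ell}\ell_k(\boldsymbol{\theta}_k)$. By the very definition of $T_{\ell k}$, the quantity $\partial_{\boldsymbol{\theta}_\ell}\ell_k$ --- the gradient of sub-unitary $k$'s loss with respect to the parameters carried by sub-unitary $\ell$ --- can be estimated without bias from $T_{\ell k}$ circuit evaluations (parameter-shift, or a commuting-generator/commuting-block rule where applicable). Ranging over all $\ell$ assembles the full gradient vector, and the set of circuits needed is the union over all pairs $(\ell,k)$, of total size $\sum_{\ell=1}^K \sum_{k=1}^K T_{\ell k}$.

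To finish I would run each circuit, form the associated unbiased estimate $\widehat{\partial_{\boldsymbol{\theta}_\ell}\ell_k}$ via its classical post-processing, and output $\widehat{\partial_{\boldsymbol{\theta}_\ell}\mathcal{L}} := \sum_k \alpha_k\, \widehat{\partial_{\boldsymbol{\theta}_\ell}\ell_k}$; since the $\alpha_k$ are fixed classical constants, linearity of expectation gives unbiasedness. The specialisation to independent parameters, where $\partial_{\boldsymbol{\theta}_\ell}\ell_k = 0$ for $k\neq\ell$ so only the $K$ diagonal terms survive, immediately yields \corrref{corr:density_qnn_gradient_independent}.

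The one place that needs care rather than calculation is the shared-parameter case: when a scalar parameter appears in several sub-unitaries (or several times inside one), its derivative collects a contribution from every occurrence, so one must read $T_{\ell k}$ as counting circuits for all occurrences of the shared parameters of sub-unitary $\ell$ inside sub-unitary $k$, and verify that the per-occurrence shift coefficients simply add --- this keeps the count at $\sum_{\ell,k}T_{\ell k}$ and the combined estimator unbiased. I do not expect a genuine obstacle here; the substance of the proposition is the clean bookkeeping that linearity of quantum mechanics and of differentiation make available.
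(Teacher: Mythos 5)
Your proposal is correct and follows essentially the same route as the paper's proof: expand the loss by linearity of the trace into $\sum_k \alpha_k \ell_k$, differentiate term by term so that each pair $(\ell,k)$ contributes $T_{\ell k}$ circuits, and recombine the per-term unbiased estimators with the classical weights $\alpha_k$ (the paper merely packages the partial derivatives into a $K\times K\times N$ tensor and cites the stochastic-gradient construction for unbiasedness, which is your linearity-of-expectation step). Your explicit attention to the shared-parameter bookkeeping is a welcome clarification of what the paper leaves implicit in the off-diagonal entries of that tensor.
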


\begin{proof}
Assume for simplicity that the number of parameters in each sub-unitary is the same, $B_k = B ~ \forall k$ and $N_{B_k} = N_{B} = N$. Furthermore assume the number of blocks is $B=1$. Then for notational purposes we can write the following $K \times K \times N$ tensor, with the $j^{th}$ `slice' across the last dimension being:
\begin{equation} \label{eqn:sub_gradient_matrix}
[\Delta \mathcal{L}(\boldsymbol{\theta}, \boldsymbol{\alpha}, \boldsymbol{x})]_{j} := 
 \left(
\begin{array}{ccc}
    \partial_{j1}\mathcal{L}'(\boldsymbol{\theta}_1, \boldsymbol{x})&  \cdots &  \partial_{jK}\mathcal{L}'(\boldsymbol{\theta}_1, \boldsymbol{x}) \\
     \vdots & \ddots & \vdots \\
     \partial_{j1}\mathcal{L}'(\boldsymbol{\theta}_K, \boldsymbol{x}) & \cdots    &\partial_{jK}\mathcal{L}'(\boldsymbol{\theta}_K, \boldsymbol{x})
\end{array} \right) 
\end{equation}
where $\mathcal{L}'$ is the loss function evaluated using only a single term of the density sum:
\begin{align*}
\mathcal{L}'(\boldsymbol{\theta}_k, \boldsymbol{x}) &:= \Tr\Big(\mathcal{H}U_k(\boldsymbol{\theta}_k)\ketbra{\boldsymbol{x}}{\boldsymbol{x}}U^\dagger_k(\boldsymbol{\theta}_k)\Big) \\
\implies 
\partial_{\ell}\mathcal{L}'(\boldsymbol{\theta}_k, \boldsymbol{x}) &:= \frac{\partial \mathcal{L}'(\boldsymbol{\theta}_k, \boldsymbol{x})}{\partial \boldsymbol{\theta}_{\ell}}, 
\partial_{j\ell}\mathcal{L}'(\boldsymbol{\theta}_k, \boldsymbol{x}) := \frac{\partial \mathcal{L}'(\boldsymbol{\theta}_k, \boldsymbol{x})}{\partial \theta^j_{\ell}}
\end{align*}
In other words, the $j^{th}$ slice of $\Delta \mathcal{L}(\boldsymbol{\theta}, \boldsymbol{\alpha}, \boldsymbol{x})$ is a matrix, where the rows and columns are indexed by the terms in the model~\eqref{eqn:density_qnns} - the diagonal terms are the gradients of the $k^{th}$ sub-unitary with respect to the $k^{th}$ set of parameters, while the off-diagonal terms are the gradients of the $k^{th}$ term with respect to the $\ell^{th}$ ($\ell \neq k$) set of parameters.

Now, we can plug in the definition of the model~(\eqref{eqn:density_qnns}), taking the gradient w.r.t the $k^{th}$ sub-unitaries parameters (a vector of size $N$):
\begin{multline} \label{eqn:density_gradient_computation}
    \left[\frac{\partial \mathcal{L}(\boldsymbol{\theta}, \boldsymbol{\alpha}, \boldsymbol{x})}{\partial \boldsymbol{\theta}_{\ell}}\right]_j = \frac{\partial \Tr\Big(\mathcal{H}\rho(\boldsymbol{\theta}, \boldsymbol{\alpha}, \boldsymbol{x})\Big)}{\partial \theta^j_{\ell}} 
    = \sum_{k=1}^K \alpha_k  \frac{\partial \Tr\Big(\mathcal{H} U_k(\boldsymbol{\theta}_k)\ketbra{\boldsymbol{x}}{\boldsymbol{x}}U^\dagger_k(\boldsymbol{\theta}_k)\Big)}{\partial \theta^j_{\ell}}  \\
    = \sum_{k=1}^K \alpha_k \partial_{j\ell}\mathcal{L}'(\boldsymbol{\theta}_k, \boldsymbol{x}) = \sum_{k=1}^K \alpha_k [[\Delta \mathcal{L}(\boldsymbol{\theta}, \boldsymbol{\alpha}, \boldsymbol{x})]_{j}]_{\ell, k}
\end{multline} 
where $[[\Delta \mathcal{L}(\boldsymbol{\theta}, \boldsymbol{\alpha}, \boldsymbol{x})]_{j}]_{\ell, k}$ is the $\ell, k$ element of~\eqref{eqn:sub_gradient_matrix}.
Hence, assuming we can compute the $k^{th}$ column of $\Delta \mathcal{L}(\boldsymbol{\theta}, \boldsymbol{\alpha}, \boldsymbol{x})$ with $T_k$ circuits, we can estimate~\eqref{eqn:density_gradient_computation} by computing the gradient with respect to each sub-unitary $U_k$ and summing the resulting (weighted by $\alpha_k$) vectors. 

Finally, relating to the unbiasedness of the gradient estimator, this follows directly from Ref.~\cite{sweke_stochastic_2020} where an expectation value can be constructed for each $i, j, k$ from $S$ measurements which has mean $[[\Delta \mathcal{L}(\boldsymbol{\theta}, \boldsymbol{\alpha}, \boldsymbol{x})]_{j}]_{\ell, k}$.
\end{proof}
As discussed in the main text, there are two sub-cases one can consider. First, if all parameters between sub-unitaries are independent, $\boldsymbol{\theta}_k \neq \boldsymbol{\theta}_{\ell}, ~\forall k, \ell$. Here, the computation is simpler as taking a gradient with respect to the parameters of sub-unitary, $U_{k}$, results in all other columns of \eqref{eqn:sub_gradient_matrix} vanishing, i.e. $[[\Delta \mathcal{L}(\boldsymbol{\theta}, \boldsymbol{\alpha}, \boldsymbol{x})]_{j}]_{\ell, k} = 0,~\forall \ell \neq k$. Hence, we only need to extract the diagonal terms from~\eqref{eqn:sub_gradient_matrix}, $\partial_{jk}\mathcal{L}'(\boldsymbol{\theta}_k, \boldsymbol{x})$ for each $j$. This observation then gives~\corrref{corr:density_qnn_gradient_independent} in the main text.

\subsection{Proof of\texorpdfstring{~\lemref{lemma:supervised_mixing_corr}}{}} \label{app_sec:density_mixing_proof}
Here we prove~\lemref{lemma:supervised_mixing_corr} from the main text, repeated here for convenience:
\begin{lemmaunnum}[Mixing lemma for supervised learning (\textbf{\lemref{lemma:supervised_mixing_corr} repeated})]\label{thm:supervised_mixing_corr_app}
Let $h(\boldsymbol{x})$ be a target ground truth function, prepared via the application of a fixed unitary, $V$, $h(\boldsymbol{x}) := \Tr(\mathcal{O}V\rho(\boldsymbol{x}) V^{\dagger})$ on a data encoded state, $\rho(\boldsymbol{x})$ and measured with a fixed observable, $\mathcal{O}$. Suppose there exists $K$ unitaries $\{U_k(\boldsymbol{\theta})\}_{k=1}^K$ such that these each are $\delta_1$ good predictive models of $h(\boldsymbol{x})$:
    \begin{equation} \label{eqn:supervised_mixing_corrollary_condition_1_app}
        \mathbb{E}_{\boldsymbol{x}}|h(\boldsymbol{x}) - f_k(\boldsymbol{\theta}, \boldsymbol{x})| \leq 
        \delta_1, \forall k
    \end{equation}
    and a distribution $\{\alpha_k\}_{k=1}^K$ such that predictions according to the LCU model $f_{\textsf{LCU}}(\boldsymbol{\theta}, \boldsymbol{\alpha}, \boldsymbol{x}) := \Tr(\mathcal{O}\Big(\sum_k \alpha_k U_k(\boldsymbol{\theta})\Big) \rho(\boldsymbol{x})\Big(\sum_k \alpha_k U^{\dagger}_k(\boldsymbol{\theta})\Big))$ have error  bounded as:
    \begin{equation} \label{eqn:supervised_mixing_corrollary_condition_2_app}
        \mathbb{E}_{\boldsymbol{x}}|h(\boldsymbol{x}) - f_{\textsf{LCU}}(\boldsymbol{\theta}, \boldsymbol{\alpha}, \boldsymbol{x})| \leq \delta_2 
    \end{equation}
    for some $\delta_1, \delta_2 >0$. Then, the corresponding density QNN, $(\rho(\boldsymbol{\theta}, \boldsymbol{\alpha}, \boldsymbol{x})) = \sum_{k=1}^K\alpha_k U_k^\dagger(\boldsymbol{\theta}) \rho(\boldsymbol{x})U_k(\boldsymbol{\theta})$ can generate predictions for $h(\boldsymbol{x})$ with error:
    \begin{equation} \label{eqn:supervised_mixing_corollary_app}
         \mathbb{E}_{\boldsymbol{x}}|h(\boldsymbol{x}) - f(\boldsymbol{\theta}, \boldsymbol{\alpha}, \boldsymbol{x})| \leq \frac{\delta_1^2}{4\|\mathcal{O}\|_{\infty}} + 2\delta_2
    \end{equation} 
\end{lemmaunnum}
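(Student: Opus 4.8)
The plan is to transplant the Hastings--Campbell mixing-lemma argument from the channel/operator-norm setting down to the level of the scalar outputs. The starting observation is that, by linearity of the trace, the density QNN output is simply the convex combination of the individual predictions, $f(\boldsymbol{\theta},\boldsymbol{\alpha},\boldsymbol{x}) = \sum_k \alpha_k f_k(\boldsymbol{\theta},\boldsymbol{x})$, whereas the LCU output is the quadratic form $f_{\textsf{LCU}}(\boldsymbol{\theta},\boldsymbol{\alpha},\boldsymbol{x}) = \Tr(\mathcal{O}\,W\rho(\boldsymbol{x})W^{\dagger})$ with $W := \sum_k \alpha_k U_k(\boldsymbol{\theta})$. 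I would first split the error by the triangle inequality, $\mathbb{E}_{\boldsymbol{x}}|h - f| \le \mathbb{E}_{\boldsymbol{x}}|h - f_{\textsf{LCU}}| + \mathbb{E}_{\boldsymbol{x}}|f_{\textsf{LCU}} - f|$, so that the first term is $\le \delta_2$ by \eqref{eqn:supervised_mixing_corrollary_condition_2}, and concentrate all effort on bounding the gap between the density model and the LCU model.

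For that gap the key identity is the ``variance'' decomposition
\begin{equation*}
\sum_{k}\alpha_k U_k\,\rho(\boldsymbol{x})\,U_k^{\dagger} \;-\; W\rho(\boldsymbol{x})W^{\dagger} \;=\; \sum_{k}\alpha_k\,(U_k - W)\,\rho(\boldsymbol{x})\,(U_k - W)^{\dagger},
\end{equation*}
valid because $\sum_k\alpha_k U_k = W$ cancels the cross terms. Every summand on the right is positive semidefinite, so pairing with $\mathcal{O}$ and using $|\Tr(\mathcal{O}\sigma)| \le \|\mathcal{O}\|_{\infty}\Tr(\sigma)$ for $\sigma \succeq 0$ gives $|f_{\textsf{LCU}} - f| \le \|\mathcal{O}\|_{\infty}\sum_k \alpha_k \Tr((U_k - W)\rho(\boldsymbol{x})(U_k - W)^{\dagger})$. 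For $\rho(\boldsymbol{x}) = \ketbra{\boldsymbol{x}}{\boldsymbol{x}}$ the remaining trace telescopes, using unitarity of each $U_k$, to $\sum_k \alpha_k \|(U_k - W)\ket{\boldsymbol{x}}\|^2 = 1 - \|W\ket{\boldsymbol{x}}\|^2$ --- the squared spread of the states $U_k\ket{\boldsymbol{x}}$ about their weighted mean $W\ket{\boldsymbol{x}}$.

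Next I would control this spread by comparison with the target state $V\ket{\boldsymbol{x}}$: since the weighted mean minimises the weighted sum of squared distances, $\sum_k\alpha_k\|(U_k - W)\ket{\boldsymbol{x}}\|^2 \le \sum_k\alpha_k\|(U_k - V)\ket{\boldsymbol{x}}\|^2 \le \max_k \|(U_k - V)\ket{\boldsymbol{x}}\|^2$. The last ingredient is to convert the predictor-quality hypothesis \eqref{eqn:supervised_mixing_corrollary_condition_1} into a bound on this state distance. For a fixed observable the quantity that governs $|h - f_k|$ is, to leading order, $2\|\mathcal{O}\|_{\infty}\|(U_k - V)\ket{\boldsymbol{x}}\|$, so the natural reading of \eqref{eqn:supervised_mixing_corrollary_condition_1} is $\|(U_k - V)\ket{\boldsymbol{x}}\| \le \delta_1/(2\|\mathcal{O}\|_{\infty})$; this yields $\|\mathcal{O}\|_{\infty}(1 - \|W\ket{\boldsymbol{x}}\|^2) \le \delta_1^2/(4\|\mathcal{O}\|_{\infty})$, and combining with the $\delta_2$ term from the triangle split (one extra $\delta_2$ of slack being absorbed in the passage between pointwise and averaged bounds) and taking $\mathbb{E}_{\boldsymbol{x}}$ delivers \eqref{eqn:supervised_mixing_corollary}.

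The main obstacle is exactly this last conversion. A small average prediction error $\mathbb{E}_{\boldsymbol{x}}|h - f_k| \le \delta_1$ does not on its own force the states $U_k\ket{\boldsymbol{x}}$ and $V\ket{\boldsymbol{x}}$ to be close, since many states share a given $\mathcal{O}$-expectation; to make the argument rigorous one must either strengthen the hypothesis to an operator-norm (or fidelity) statement relating each $U_k$ to $V$ --- the form in which the original mixing lemma is phrased --- or restrict to the realisable, well-trained regime. One must also be careful about whether $\delta_1$ and $\delta_2$ are worst-case or average quantities over $\boldsymbol{x}$, since squaring $\delta_1$ interacts with the expectation through Jensen's inequality; the clean constants $1/4$ and $2$ in \eqref{eqn:supervised_mixing_corollary} indicate that the intended route passes through the pointwise/operator-norm estimates, with the averaged statement following by monotonicity of $\mathbb{E}_{\boldsymbol{x}}$.
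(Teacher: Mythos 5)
Your route is genuinely different from the paper's, and more self-contained. The paper invokes the Hastings--Campbell Mixing lemma as a black box: it assumes the operator-norm conditions $\|U_k - V\|\le\varepsilon_1$ and $\|\sum_k\alpha_k U_k - V\|\le\varepsilon_2$, uses H\"older's inequality to translate these into the prediction-error bounds with $\delta_i := 2\varepsilon_i\|\mathcal{O}\|_{\infty}$, and then applies the diamond-norm conclusion $\|\Lambda-\mathcal{V}\|_{\diamond}\le\varepsilon_1^2+2\varepsilon_2$ followed by one more H\"older step to reach the final bound. You instead re-derive the quadratic suppression directly at the level of states: the variance identity $\sum_k\alpha_k U_k\rho U_k^{\dagger}-W\rho W^{\dagger}=\sum_k\alpha_k(U_k-W)\rho(U_k-W)^{\dagger}$ with $W=\sum_k\alpha_k U_k$, positivity of each summand, the telescoping to $1-\|W\ket{\boldsymbol{x}}\|^2$, and the mean-minimises-variance comparison against $V\ket{\boldsymbol{x}}$. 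This is essentially Campbell's proof of the mixing lemma unpacked and specialised to a pure input, and it recovers the same constants (indeed $\delta_1^2/(4\|\mathcal{O}\|_{\infty})+\delta_2$, slightly tighter than the stated $+2\delta_2$). The paper's route buys brevity and a citation; yours buys transparency about where the quadratic gain comes from and a bound that holds pointwise in $\boldsymbol{x}$ before any expectation is taken.

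The obstacle you flag at the end is real, and it is not resolved by the paper either --- it is only hidden by the direction in which the paper's proof runs. The paper never derives $\|U_k-V\|\le\delta_1/(2\|\mathcal{O}\|_{\infty})$ from the hypothesis $\mathbb{E}_{\boldsymbol{x}}|h-f_k|\le\delta_1$; it proves the converse implication and then substitutes $\varepsilon_i=\delta_i/(2\|\mathcal{O}\|_{\infty})$ into the mixing-lemma conclusion, so the lemma as actually proved has the operator-norm conditions as its working hypotheses, with the displayed prediction-error conditions serving only to name the constants. Your observation that closeness of expectation values of a fixed observable does not force closeness of the underlying states is exactly why that converse fails in general. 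Your proposal is therefore as rigorous as the paper's own argument and correctly isolates the single step that requires either strengthening the hypothesis to an operator-norm or fidelity statement, or restricting to the realisable, well-trained regime.
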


\begin{proof}

Firstly, we have the Mixing lemma itself:
\begin{lemma}[Hastings-Campbell Mixing lemma]
    Let $V$ be a target unitary with corresponding channel $\mathcal{V}(\rho(\boldsymbol{x})) = V\rho V^{\dagger}$. Suppose there exists $K$ unitaries $\{U_k(\boldsymbol{\theta})\}_{k=1}^K$ such that:
    \begin{equation} \label{eqn:mixing_lemma_condition_1}
        \|U_k -V\| \leq \varepsilon_1
    \end{equation}
    and a distribution $\{\alpha_k\}_{k=1}^K$ such that:
    \begin{equation} \label{eqn:mixing_lemma_condition_2}
        \left\|\sum_k \alpha_k U_k - V\right\| \leq \varepsilon_2
    \end{equation}
    for some $\varepsilon_1, \varepsilon_2 >0$. Then, the corresponding channel, $\Lambda(\rho) = \sum_{k=1}^K\alpha_k U_k \rho U_k^\dagger$ approximates $\mathcal{V}$ as:
    \begin{equation} \label{eqn:mixing_lemma_result}
        \left\|\Lambda(\rho) - \mathcal{V} \right\|_{\diamond} \leq \varepsilon^2_1 + 2\varepsilon_2
    \end{equation}
\end{lemma}

In a general QML form, these conditions can be immediately uplifted to the following, assuming $\rho$ is data dependent, and the sub-unitaries are trainable:
    \begin{equation} \label{eqn:mixing_lemma_result_data_dep_trainable_unitary}
        \|U_k(\boldsymbol{\theta}_k) -V\| \leq \varepsilon_1, \left\|\sum_k \alpha_k U_k(\boldsymbol{\theta}_k) - V\right\| \leq \varepsilon_2 \implies \left\|\Lambda(\rho(\boldsymbol{x})) - \mathcal{V}(\rho(\boldsymbol{x})) \right\|_{\diamond} \leq \varepsilon^2_1 + 2\varepsilon_2, \forall \boldsymbol{x}
    \end{equation}
Now, in supervised learning scenarios we are interested in minimising the prediction or generalisation error, between the ground truth function, $h(\boldsymbol{x})$, and the actual prediction of our model, $f(\boldsymbol{\theta}, \boldsymbol{x})$. The error then for a single input is $|h(\boldsymbol{x}) - f(\boldsymbol{\theta}, \boldsymbol{x})| = |h(\boldsymbol{x}) - \Tr(\mathcal{O}\rho(\boldsymbol{\theta}, \boldsymbol{x}))|$, if we assume $ f(\boldsymbol{\theta}, \boldsymbol{x})$ is a standard linear quantum model. 

The full error is taken as the expectation of this quantity over the entire data distribution $\underset{\boldsymbol{x}}{\mathbb{E}}|h(\boldsymbol{x}) - \Tr(\mathcal{O}\rho(\boldsymbol{\theta}, \boldsymbol{x}))|$. We assume as above that $\rho(\boldsymbol{\theta}, \boldsymbol{x}) = U(\boldsymbol{\theta})\ketbra{\boldsymbol{x}}{\boldsymbol{x}}U^{\dagger}(\boldsymbol{\theta})$ where $\ket{\boldsymbol{x}}$ is some encoding of the data.

Now, assuming we can embed the ground truth also into the evaluation of an expectation value such that $h(\boldsymbol{x}) = \Tr(\mathcal{O}V \ketbra{\boldsymbol{x}}{\boldsymbol{x}}V^{\dagger})$ with the same data encoding as above, for some $V$. 

In this simplified scenario, a solution is to optimise the parameters, $\boldsymbol{\theta}$, such that $U(\boldsymbol{\theta}) = V$ which is exactly the quantum compilation strategy. Slightly more generally, we could have different observables for the ground truth and model in which case we optimise, $\boldsymbol{\theta}$ such that  $f(\boldsymbol{\theta}, \boldsymbol{x}) = \Tr(\mathcal{O}U(\boldsymbol{\theta})\ketbra{\boldsymbol{x}}{\boldsymbol{x}}U^{\dagger}(\boldsymbol{\theta}))$ is close to $ h(\boldsymbol{x}) = \Tr(\mathcal{O}'V \ketbra{\boldsymbol{x}}{\boldsymbol{x}}V^{\dagger})$. 

Returning to the simplified scenario, and assuming the first condition of the Mixing lemma is satisfied - we have trained, individually, an ensemble of models $\{U(\boldsymbol{\theta}_k)\} \rightarrow f_k(\boldsymbol{\theta}, \boldsymbol{x})$ such that $\|U_k - V\| \leq \varepsilon_1$, where $V$ represents the ground truth as above. Now, we have:
\begin{align}
     |h(\boldsymbol{x}) -f_k(\boldsymbol{\theta}, \boldsymbol{x})| = |h(\boldsymbol{x}) - \Tr(\mathcal{O}\rho_k(\boldsymbol{\theta}, \boldsymbol{x}))|  
    &= \left|\Tr(\mathcal{O}U_k(\boldsymbol{\theta})\ketbra{\boldsymbol{x}}{\boldsymbol{x}}U^{\dagger}_k(\boldsymbol{\theta})) - \Tr(\mathcal{O}V \ketbra{\boldsymbol{x}}{\boldsymbol{x}}V^{\dagger})\right| \label{eqn:sup_learn_error_proof_1} \\
    &=  \left|\Tr\left(\mathcal{O}\left(U_k(\boldsymbol{\theta})\ketbra{\boldsymbol{x}}{\boldsymbol{x}}U_k^{\dagger}(\boldsymbol{\theta}) - V \ketbra{\boldsymbol{x}}{\boldsymbol{x}}V^{\dagger}\right)\right)\right|\label{eqn:sup_learn_error_proof_2}\\
    &\leq \|\mathcal{O}\|_{\infty}\cdot \|U_k(\boldsymbol{\theta})\ketbra{\boldsymbol{x}}{\boldsymbol{x}}U^{\dagger}_k(\boldsymbol{\theta}) - V \ketbra{\boldsymbol{x}}{\boldsymbol{x}}V^{\dagger}\|_{1} \label{eqn:sup_learn_error_proof_3}\\
    &\leq 2 \|\mathcal{O}\|_{\infty}\cdot\|U_k(\boldsymbol{\theta}) - V\|_{1}\label{eqn:sup_learn_error_proof_4}\\
    &\leq 2 \varepsilon_1\|\mathcal{O}\|_{\infty}, \qquad \forall k \label{eqn:sup_learn_error_proof_5}
\end{align}
where the first inequality follows from H\"{o}lders inequality and the second can be proven using the expansion $U\rho U^{\dagger} - V\rho V^{\dagger} = U\rho U^{\dagger} - V\rho U^{\dagger} + V\rho U^{\dagger} - V\rho V^{\dagger}$, the triangle inequality and the fact that $\|\rho\|_1 = \Tr(|\rho|) \leq 1$ for density matrices. Alternatively see Lemma B.5 of~\cite{caro_generalization_2022}.

Next, we assume that we have found a suitable distribution $\boldsymbol{\alpha} = \{\alpha_k\}$ such that $\|\sum_{k=1}^K\alpha_k U_k(\boldsymbol{\theta}) - V\| \leq \varepsilon_2$. Using this linear combination of the unitaries (LCU) now as the `enhanced' model, $f_{\textsf{LCU}}(\boldsymbol{\theta}, \boldsymbol{\alpha}, \boldsymbol{x}) = \Tr(\mathcal{O}\left(\sum_{k=1}^K\alpha_k U_k(\boldsymbol{\theta})\right) \rho \left(\sum_{k'=1}^K\alpha_{k'} U^{\dagger}_{k'}(\boldsymbol{\theta})\right))$. Here, the (non-physical) state $\mathcal{U}_{\textsf{LCU}}(\rho) := \left(\sum_{k=1}^K\alpha_k U_k(\boldsymbol{\theta})\right) \rho(\boldsymbol{x}) \left(\sum_{k'=1}^K\alpha_{k'} U^{\dagger}_{k'}(\boldsymbol{\theta})\right)$ is the channel formed by applying the LCU to the pure state, $\sum_{k=1}^K\alpha_k U_k(\boldsymbol{\theta}) \ket{\boldsymbol{x}}$. This can be done directly using the standard LCU circuits from~\figref{fig:density_qnn_mainfig}a.

Using similar logic to the above, we have:
\begin{align}
    |h(\boldsymbol{x}) - f'(\boldsymbol{\theta}, \boldsymbol{\alpha}, \boldsymbol{x})| &= \left|h(\boldsymbol{x}) - \sum_{k, k'} \alpha_k\alpha_{k'}  \Tr(\mathcal{O}\rho_{k, k'}(\boldsymbol{\theta}, \boldsymbol{x}))\right|  \label{eqn:sup_learn_error_proof_combo_1}\\
    &\leq \|\mathcal{O}\|_{\infty}\cdot \left\|\left(\sum_k \alpha_k U_k(\boldsymbol{\theta})\right)\ketbra{\boldsymbol{x}}{\boldsymbol{x}} \left(\sum_{k'} \alpha_{k'} U^{\dagger}_k(\boldsymbol{\theta})\right) - V \ketbra{\boldsymbol{x}}{\boldsymbol{x}}V^{\dagger}\right\|_{1} \label{eqn:sup_learn_error_proof_combo_2}\\
     &= \|\mathcal{O}\|_{\infty}\cdot \left\|(\mathcal{U}_{\textsf{LCU}} - \mathcal{V}) (\ketbra{\boldsymbol{x}}{\boldsymbol{x}})\right\|_{1} \label{eqn:sup_learn_error_proof_combo_3}\\
     &\leq 2\|\mathcal{O}\|_{\infty} \cdot \left\| \sum_{k=1}^K\alpha_k U_k(\boldsymbol{\theta}) - V\right\|
     \leq 2\varepsilon_2 \|\mathcal{O}\|_{\infty}
\end{align}
Putting these two together, we have the two assumptions from the Mixing lemma recast into a supervised learning scenario:
\begin{align}
    |h(\boldsymbol{x}) - f_k(\boldsymbol{\theta}, \boldsymbol{x})| &\leq 2\varepsilon_1  \|\mathcal{O}\|_{\infty} =: \delta_1 , \forall k, \boldsymbol{x} \\
    |h(\boldsymbol{x}) - f_{\textsf{LCU}}(\boldsymbol{\theta}, \boldsymbol{\alpha}, \boldsymbol{x})| &\leq 2\varepsilon_2  \|\mathcal{O}\|_{\infty} =: \delta_2, \forall \boldsymbol{x}
\end{align}
Now, we can recast the result of the Mixing lemma itself into the supervised scenario. From the lemma we have that $\left\|\Lambda_{\boldsymbol{\theta}, \boldsymbol{\alpha}} - \mathcal{V} \right\|_{\diamond} \leq \varepsilon^2_1 + 2\varepsilon_2$ where $\Lambda_{\boldsymbol{\theta}, \boldsymbol{\alpha}}$ is the `density' channel, $\Lambda_{\boldsymbol{\theta}, \boldsymbol{\alpha}}(\rho(\boldsymbol{x})) = \rho(\boldsymbol{\theta}, \boldsymbol{\alpha}, \boldsymbol{x}) = \sum_{k=1}^K \alpha_k U_k(\boldsymbol{\theta}_k)\ketbra{\boldsymbol{x}}{\boldsymbol{x}}U^\dagger_k(\boldsymbol{\theta}_k)$. Taking the output of the density model to be $f(\boldsymbol{\theta}, \boldsymbol{\alpha}, \boldsymbol{x}) = \Tr\left(\mathcal{O}\Lambda_{\boldsymbol{\theta}, \boldsymbol{\alpha}}(\rho(\boldsymbol{x})\right))$, we have:
\begin{align}
    |h(\boldsymbol{x}) - f(\boldsymbol{\theta}, \boldsymbol{\alpha}, \boldsymbol{x})| &= \left|h(\boldsymbol{x}) - \sum_{k} \alpha_k  \Tr(\mathcal{O}\rho_{k}(\boldsymbol{\theta}, \boldsymbol{x}))\right| \\
    & \leq \|\mathcal{O}\|_{\infty}\cdot \left\|\left(\sum_k \alpha_k U_k(\boldsymbol{\theta})\ketbra{\boldsymbol{x}}{\boldsymbol{x}} U^{\dagger}_k(\boldsymbol{\theta})\right) - V \ketbra{\boldsymbol{x}}{\boldsymbol{x}}V^{\dagger}\right\|_{1} \\
    & \leq \|\mathcal{O}\|_{\infty} (\varepsilon^2_1 + 2\varepsilon_2)
\end{align}
The last inequality follows directly from the definition of the diamond norm and the Mixing lemma - since the lemma holds for the diamond norm, it holds also for all states, $\rho$ with $\|\rho\|_{1} \leq 1$:  $\|( \Lambda\otimes \mathds{I} - \mathcal{V} \otimes \mathds{I})(\rho)\|_{1} \leq \varepsilon_1^2 + 2\varepsilon_2$. Note the diamond norm includes a tensor product of the channel with the identity. As mentioned in~\cite{campbell_shorter_2017}, this does not change the logic of the proofs, so we could image all channels tensored with the identity and acting on a larger system. Finally, since all three derived conditions hold for all data, $\boldsymbol{x}$, taking expectations over $\boldsymbol{x}$ on both sides and plugging in $\delta_1, \delta_2$ for $\varepsilon_1, \varepsilon_2$ concludes the proof.
\end{proof}

\section{Linear combination of unitaries QNN numerics} \label{app_sec:lcu_QNN_numerics}
In the main text and the previous section, we have advocated the possibility of translating benefits from a linear combination of unitaries QNN (LCU QNN) to a density QNN via the Mixing lemma. However, the rationale for doing so rests on the assumption that an LCU QNN is more expressive than a single-unitary QNN. In other words, that it is possible to find suitable distributional coefficients $\{\alpha_k\}_{k=1}^K$ such that $\sum_k \alpha_k U_k(\boldsymbol{\theta}_k)$ is more expressive, and a better QML model, than any $U_k(\boldsymbol{\theta}_k)$ alone. 

An explicit example showing this is possible in the randomised compiling case was given in~\cite{campbell_shorter_2017}, and in this section we also show that is possible in a learning scenario. 
\begin{figure*}[!ht]
    \includegraphics[width=0.8\linewidth]{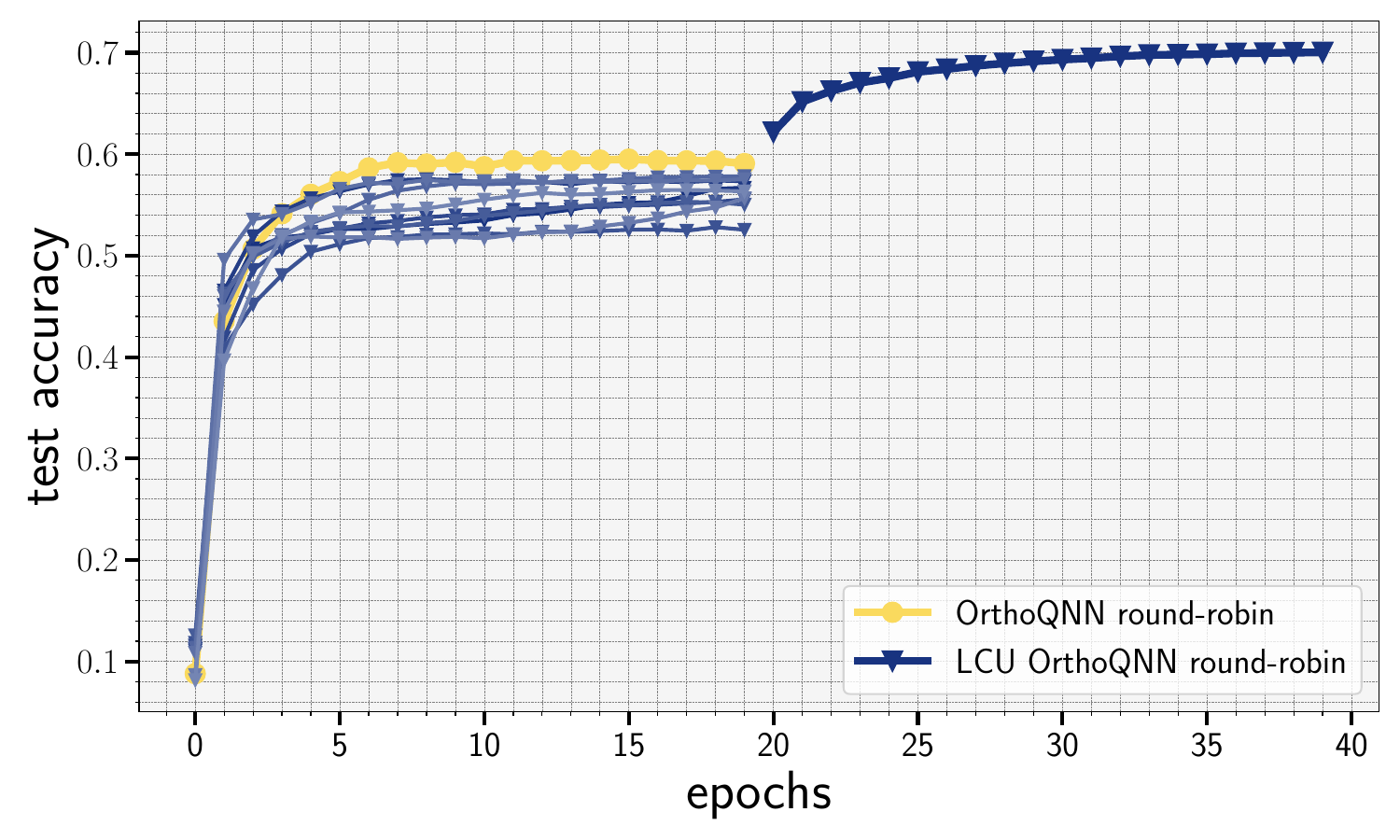}
  \caption{
  \textsf{
  \textbf{LCU QNN versus single unitary QNN.}
  A single circuit round-robin QNN (yellow) versus an initialised LCU QNN (blue). For the LCU QNN, we train $10-1=9$ sub-unitaries for $20$ epochs (to get initial accuracy $\delta_1$ in the Mixing lemma), then initialise the LCU QNN and continue training only the distributional coefficients $\{\alpha_k\}_{k=1}^K$, which are parameterised via a neural network. This gives the improved accuracy $\delta_2$ from the Mixing lemma, which can in priciple be translated to the density QNN.
  }
  }
  \label{fig:lcu_QNN_vs_QNN}
\end{figure*}

In \figref{fig:lcu_QNN_vs_QNN}, we show the results of an LCU QNN outperforming a single unitary QNN for classifying MNIST digits, downscaled to $10$ qubits. We first train all the sub-unitaries in the LCU QNN, then fix them to initialise the full LCU QNN, with a uniform distribution for $\{\alpha_k\}_{k=1}^K$. We then train a neural network to predict the coefficients, which is data dependent, exactly as in~\appref{app_sec:moe_qnns}. We see that the final LCU QNN can outperform any single circuit QNN in terms of MNIST test accuracy. For all sub-circuits, we use the fully parameterised $D=\frac{n(n-1)}{2}$ round-robin orthogonal circuits from~\figref{fig:ortho_qnn_architectures} and perform hyperparameter optimisation using \texttt{optuna}.

Note, in the above, since we perform simulation, we do not take into account the overhead required to postselect the ancilla registers into the $\ket{0}^{\otimes n}$ state and therefore we do not show total measurement shots needed. This is because we are only interested in the final expressivity of models and we leave the drawbacks of the LCU model with implementing the postselection to future work. Nevertheless, this should serve as further motivation to bypass the LCU QNN altogether in favour of the density QNN, which can offer similar performance boosts, but without requiring post-selection.

\section{Hyperparameters} \label{app_subsec:hyperparams}

For hyperparameter optimisation, we use the hyperparameter tuning package $\texttt{optuna}$~\cite{akiba_optuna_2019}. For the experiments in the main text relating to OrthoQNNs and their density counterparts, we use the following choice of hyperparameters:
\begin{itemize}
    \item $\texttt{batch\_size} \in \{32, 48, 64, 80, \dots, 256\}$,  $\texttt{learning\_rate} \in [1\times 10^{-4}, 1\times 10^{-2}]$, 
    \item $\texttt{optimiser} \in \{\texttt{Adam}, \texttt{SGD}, \texttt{RMSProp}\}$, $\texttt{regularisation} \in \{\ell_2, \texttt{None}\}$
\end{itemize}
The (initial) learning rate is sampled uniformly on a logarithmic scale, and the regularisation (either $\ell_2$ norm regularisation or no regularisation) is added to the loss function, which in all cases is the cross entropy loss between model outputs and the true labels. The specific plot (\figref{fig:density_versus_ortholinear_mainfig}b) shows the average over the best $\frac{x}{32}$ hyperparameter runs, those which achieve at least $80\%$ accuracy during training. For each model we have \figref{fig:density_versus_ortholinear_mainfig}ai) OrthoQNN: $\frac{30}{32}$, \figref{fig:density_versus_ortholinear_mainfig}aii) Density OrthoQNN: $\frac{28}{32}$, \figref{fig:density_versus_ortholinear_mainfig}aiii) Density OrthoQNN (compressed): $\frac{30}{32}$. Increasing the criterion to $90\%$ accuracy, the relative success drops to OrthoQNN: $\frac{26}{32}$,  Density OrthoQNN: $\frac{20}{32}$, Density OrthoQNN (compressed): $\frac{22}{32}$ respectively.

\section{Parameter-shift rules for Hamming weight preserving QNNs} \label{app:ortho_qnns}
As mentioned in the main text, orthogonality of weight matrices is a desirable feature in classical machine learning, but is difficult to maintain while training via gradient descent. To combat this, orthogonality preserving methods include: 1) projecting to the Stiefel manifold (the manifold of orthogonal matrices) via, e.g. singular value decompositions (SVDs), 2) performing gradient descent directly in the space of orthogonal matrices, 3) exponentiation and optimisation of an anti-symmetric generator matrix or 4) adding orthogonality regularisation terms to the loss function to be optimised. The first three techniques are theoretically expensive, typically using $\mathcal{O}(n^3)$ complexity to orthogonalise an $n\times n$ weight matrix, while the latter regularisation technique will only enforce approximate orthogonality. 

Orthogonal \emph{quantum} neural networks (OrthoQNNs) were proposed by Ref.~\cite{landman_quantum_2022} as an alleviation to this, and have two possible `modes' of operation for machine learning purposes. The first mode is in a \emph{quantum-inspired} (classical) mode where they can be used as completely classical models for orthogonal neural networks. This is due to the special nature of the gates used within the circuits - specifically all operations within an OrthoQNN are \emph{Hamming weight} (HW) preserving, inheriting the property from the underlying $RBS$ or $FBS$ gates. Applied on a \emph{unary} data encoding, $\ket{\psi}_{\mathsf{unary}} = \sum_{i=1}^n x_i \ket{e_i}, e_j:= 00\cdots 1_j \cdots 00$), an OrthoQNN, $\ket{\phi}_{\mathsf{unary}} = U_{\mathsf{pyr}}(\boldsymbol{\theta})\ket{\psi}_{\mathsf{unary}}$, will preserve the unary nature of the input so $\ket{\phi}_{\mathsf{unary}}$ will also be exclusively supported on the $n$ unary basis elements. This restriction to an $n$ dimensional subspace enables efficient classically simulability, depending on the input state. Therefore, they can also be trained in a purely classical mode without an exponential overhead. A method for performing \emph{layerwise} (on those subsets of gates which can be applied in a single timestep, or \emph{moment}, in parallel - see~\figref{fig:density_versus_ortholinear_mainfig}ai) for an example of such a decomposition) training classically was proposed also in~\cite{landman_quantum_2022} which enables the incorporation of such layers in backpropagation pipelines, with an overhead scaling with the number of layers ($\mathcal{O}(n)$ for the pyramid/round-robin circuits or $\mathcal{O}(\log(n))$ for the butterfly circuit).

The second mode is the fully `quantum' mode - where the orthogonal layers are evaluated and trained on quantum hardware. Here, automatic differentiation through layers is not possible as in the classical scenario, and one must resort to the $\mathcal{O}(n^2)$ parameter-shift rule, as discussed extensively in the main text. Is it possible to have a method scaling as the classical $\mathcal{O}(n)$ layerwise training for such models (without resorting to decomposing the circuits into a density model)? If we could apply the commuting-block argument of Ref.~\cite{bowles_backpropagation_2023}, this would be achievable.

However, this is not straightforward - the gates \emph{within} a layer obviously commute (since they act on different qubits), but $RBS$ gates \emph{between} layers do not obey the required \emph{fixed} commutation relation. This is because the generators of two $RBS$ which share a single qubit neither completely commute nor anticommute, which can be seen as follows with two $RBS$ gates, acting on qubits $0, 1$ ($G_a$) and $1, 2$ ($G_b$):
\begin{align*}
    G_a &:= Y_0\otimes X_1 \otimes \mathds{1}_2 -X_0 \otimes Y_1 \otimes \mathds{1}_2, \qquad  G_b := \mathds{1}_0 \otimes Y_1\otimes X_2 -\mathds{1}_0 \otimes X_1 \otimes Y_2 \\
    G_a\times G_b &= Y_0\otimes X_1 Y_1\otimes X_2  - X_0 \otimes Y_1Y_1 \otimes X_2  
     - Y_0\otimes X_1 X_1\otimes Y_2 + X_0 \otimes Y_1 X_1 \otimes Y_2 \\
     &= G_{\text{comm}} + G_{\text{anti-comm}}
\end{align*}
So we have a commuting part ($X_1X_1$ or $Y_1Y_1$) and an anti-commuting part ($X_1Y_1$ or $Y_1X_1$) on the shared qubit. As a result, $[G_a, G_b] = 2G_{\text{anti-comm}} \neq 0$ and similarly $\{G_a, G_b\}$ does not vanish. We leave a search for an efficient $U(1)$-equivariant-circuit-specific training protocol to future research, and in the next sections focus on the parameter-shift rule.

\subsection{Parameter-shift rule for orthogonal quantum neural networks} \label{app_ssec:ortho_qnns_parameter_shift}
Here, we show the circuits required for computing gradients of the Hamming weight preserving ans\"atze in~\figref{fig:ortho_qnn_architectures} in the main text. Since all gates in these circuits are simply $RBS$ gates, we need only to compute the parameter-shift for the $RBS$ gate. 

This result has been derived before~\cite{anselmetti_local_2021} in the context of fermionic quantum simulation, but we include it here for completeness, and to explicitly keep track of constant factors due to our slightly different parameterisation of the $RBS$ gate. Specifically the general gate in Ref.~\cite{anselmetti_local_2021} is defined as $\exp\left(-i\theta/2 Q\right)$ whereas we define the $RBS$ gate in the form $\exp\left(-i\theta Q\right)$. This factor does not matter in practice for training such models, as the optimiser can adapt the parameters accordingly, but it is important for debugging purposes to generate the correct formulae for the appropriate gates. We also assume the output of the model is the expectation values of Hermitian observables, $\mathcal{B}_0$ relative to a single pure state created by a sequence of unitaries, We can also assume a data encoding with unitary $V(\boldsymbol{x}), \ket{\boldsymbol{x}} :=V(\boldsymbol{x})\ket{0}^{\otimes n}$:
\begin{align} \label{eqn:function_eval_param_shift}
    f(\boldsymbol{\theta}) &= \bra{\psi(\theta, \boldsymbol{x})}\mathcal{B}_0\ket{\psi(\theta, \boldsymbol{x})} = \bra{\boldsymbol{x}} U^{\theta_{1}\dagger}\cdots U^{\theta_{j}\dagger}\cdots U^{\theta_{J-1}\dagger}U^{\theta_{J}\dagger} \mathcal{B}_0 U^{\theta_{J}}U^{\theta_{J-1}}\cdots U^{\theta_{i}}\cdots U^{\theta_{1}}\ket{\boldsymbol{x}} \\
    &= \bra{\boldsymbol{x}} \mathcal{U}^\dagger_{[1:j]}(\theta) \mathcal{B}_{[j+1:J]}\mathcal{U}_{[1:j]}(\theta)\ket{\boldsymbol{x}} \nonumber \\
    \mathcal{B}_{[j+1:J]}  &:=  U^{\theta_{j+1}\dagger} \cdots U^{\theta_{J-1}\dagger}U^{\theta_{J}\dagger}  \mathcal{B}_0 U^{\theta_{J}}U^{\theta_{J-1}}\cdots U^{\theta_{j+1}} \nonumber
\end{align}

Each $RBS$ gate is of the form $RBS(\theta) = U(\theta) = \exp\left(-i\theta Q\right)$ with $Q := 1/2\left(Y\otimes X -X \otimes Y\right)$. This generator has the property that $Q^3 = Q$ and has eigenvalues $\{0, \pm 1\}$. Therefore, according to~\cite{schuld_evaluating_2019, anselmetti_local_2021} we can write $U(\theta)$ as:
\begin{align*}
    U(\theta) = \mathds{1} + \left(\cos(\theta) - 1\right)Q^2 - i\sin\left(\theta\right)Q
\end{align*}
Following the logic of Ref.~\cite{anselmetti_local_2021}, the expression for the gradient with respect to $\theta$ will involve the commutator, $[\mathcal{B}, Q]$ (suppressing the indices on $\mathcal{B})$:
\begin{equation}
    \frac{\partial f(\boldsymbol{\theta})}{\partial \theta} = \bra{\boldsymbol{x}}\mathcal{U}^\dagger_{[1:j]}(\theta) \left(-i [\mathcal{B}_{[j+1:J]}, Q]\right)\mathcal{U}_{[1:j]}(\theta)\ket{\boldsymbol{x}}
\end{equation}
Defining, $U(\theta)(\mathcal{B}) := U^{\dagger}(\theta)\mathcal{B}U(\theta)$ so:
\begin{align*}
   U(\pm\theta)(\mathcal{B})   &= \left[\mathcal{B} + \left(\cos(\theta) - 1\right)Q^2\mathcal{B} \pm i\sin\left(\theta\right)Q\mathcal{B}\right]\left[\mathds{1} + \left(\cos(\theta) - 1\right)Q^2 \mp i\sin\left(\theta\right)Q\right]\\
   & := \left[\mathcal{B} + \delta Q^2\mathcal{B} \pm \gamma QB\right]\left[\mathds{1} + \delta Q^2 \mp \gamma Q\right]\\
   & = \mathcal{B} + \delta \mathcal{B}Q^2 \mp \gamma \mathcal{B}Q +\left[\delta Q^2\mathcal{B} + \delta^2 Q^2\mathcal{B} Q^2 \mp \gamma \delta Q^2\mathcal{B} Q\right] + \left[\pm \gamma Q\mathcal{B} \pm \delta \gamma Q\mathcal{B}Q^2 -\gamma^2 Q\mathcal{B}Q\right]\\
\end{align*}
It turns out that we can extract the commutator above by taking linear combinations of the following $U(\theta)(\mathcal{B}) - U(-\theta)(\mathcal{B})$ for different values of the angles. Computing this term gives:
\begin{align*}
U(\theta)(\mathcal{B}) - U(-\theta)(\mathcal{B}) & =   -2\gamma \mathcal{B}Q  -2\gamma \delta Q^2B Q + 2\gamma QB + 2\delta \gamma QBQ^2 \\
& = -2\gamma [\mathcal{B}, Q] - 2\delta\gamma [Q, Q\mathcal{B}Q]\\
& = -2i\sin\left(\theta\right) [\mathcal{B}, Q] -2i\sin\left(\theta\right) \left(\cos\left(\theta\right) -1\right) [Q, Q\mathcal{B}Q]
\end{align*}

Evaluating the commutator by taking a linear combination of the above expression with two different angles, $\pm\alpha, \pm\beta$ gives:
\begin{align*}
-i[\mathcal{B}, Q] =& d_1\left[U(\alpha)(\mathcal{B}) - U(-\alpha)(\mathcal{B})\right] - d_2\left[U(\beta)(\mathcal{B}) - U(-\beta)(\mathcal{B})\right]\\
=& d_1(-2i\sin\left(\alpha\right) [\mathcal{B}, Q] -2i\sin\left(\alpha\right) \left(\cos\left(\alpha\right) -1\right) [Q, Q\mathcal{B}Q])  \\
&+ d_2(2i\sin\left(\beta\right) [\mathcal{B}, Q] +2i\sin\left(\beta\right) \left(\cos\left(\beta\right) -1\right) [Q, Q\mathcal{B}Q])\\
=& -i[\mathcal{B}, Q]\left(2d_1\sin\left(\alpha\right) - 2d_2\sin\left(\beta\right)\right) 
-2i[Q, Q\mathcal{B}Q] \left[d_1\sin\left(\alpha\right) \left(\cos\left(\alpha\right) -1\right) - d_2\sin\left(\beta\right) \left(\cos\left(\beta\right) -1\right) \right]
\end{align*}
The coefficient of $[\mathcal{B}, Q]$ should $=1$, while the coeffecient of $[Q, Q\mathcal{B}Q]$ should $=0$. Therefore, we get the conditions:
\begin{align*}
   2d_1\sin\left(\alpha\right) - 2d_2\sin\left(\beta\right) = 1 &\implies  d_1\sin\left(\alpha\right) - d_2\sin\left(\beta\right) = \frac{1}{2} \\
    d_1\sin\left(\alpha\right) \left(\cos\left(\alpha\right) -1\right) - d_2\sin\left(\beta\right) \left(\cos\left(\beta\right) -1\right) = 0
    & \implies d_1\sin\left(2\alpha\right)  - d_2\sin\left(2\beta\right) = 1
\end{align*}
To solve these, we can take $d_1 = 1, d_2 = \frac{(\sqrt{2}-1)}{2},  \alpha = \frac{\pi}{4}$ and  $\beta = \frac{\pi}{2}$   
Then we arrive at a four term gradient rule for $RBS$ gates:
\begin{equation}\label{eqn:rbs_gradient_functional_parameter_shift}
    \frac{\partial f(\boldsymbol{\theta})}{\partial \theta_i} = \left[ f\left(\theta_i+\frac{\pi}{4}\right) - f\left(\theta_i- \frac{\pi}{4}\right)\right] - \frac{\sqrt{2}-1}{2}\left[f\left(\theta_i+ \frac{\pi}{2} \right) - f\left(\theta_i- \frac{\pi}{2}\right)\right]
\end{equation}
\subsection{Parameter-shift rule for \texorpdfstring{$FBS$}{} gates} \label{app_ssec:ortho_qnns_parameter_shift_fbs}
A useful generalisation of the $RBS$ gates defined in the main text is to so-called \emph{Fermionic} beam splitter~\cite{kerenidis_quantum_2022} (FBS) gates, which are defined as follows:
\begin{equation} \label{eqn:fbs_gate_definition}
    FBS(\theta)_{ij}\ket{\mathbf{s}} =  \\ 
    \left(
    \begin{array}{cccc}
    1 & 0 & 0 & 0 \\
    0 & \cos(\theta) & (-1)^{f_{i,j, \mathbf{s}}}\sin(\theta) & 0 \\
    0 & (-1)^{f_{i,j, \mathbf{s}}+1}\sin(\theta) & \cos(\theta) & 0 \\
    0 & 0 & 0 & 1
    \end{array}
    \right)
\end{equation}
The $FBS$ gate acts on two qubits, $i,j$, and is defined along with the overall $n$ computational basis state  $\mathbf{s} \in \{0, 1\}^n$ it acts on. Here, $f_{i,j, \mathbf{s}} := f(i, j, \mathbf{s}) := \sum_{i<k<j}s_k$. If the parity, $\bigoplus_{i<k<j} s_k$ between qubits $i, j$ is odd, we have $FBS(\theta) = RBS(\theta)$ gate, and equal to $RBS(-\theta)$ otherwise. $FBS$ gates, in contrast to the strictly two-local $RBS$ gates may generally be maximally non-local acting on all qubits at once, due to need to compute the parity term $f_{i,j, \mathbf{s}}$ quantumly. The $FBS$ gates in particular are useful in creating \emph{subspace} states~\cite{kerenidis_quantum_2022}, a useful primitive that can, for example, accelerate \emph{determinant} sampling machine learning methods~\cite{kazdaghli_improved_2023, thakkar_improved_2024}. Finally, as mentioned in the main text, applying such $FBS$ gates on higher order Hamming weight initial states, or superpositions of different Hamming weight states results in \emph{compound} matrices of order $k$ acting on the $\binom{n}{k}$ dimensional Hamming weight $k$ subspace.

How does one then train these $FBS$ gates with a version of the parameter-shift rule? This is perhaps not obvious at a first glance as from~\eqref{eqn:fbs_gate_definition} as it is not trivial to write as a single operation of the form $e^{i\theta G}$ for some Hermitian generator $G$, and if we did so, $G$ would be a multi-qubit operation. 

However, using the correspondence between the $FBS$ and the $RBS$ gate from Ref.~\cite{kerenidis_quantum_2022}, it becomes clear that one can simply derive the parameter-shift rule for an $FBS$ gate from that of an $RBS$ gate, and it turns out to have the same functional form as~\eqref{eqn:rbs_gradient_functional_parameter_shift}. This fact means that if, on quantum hardware, one has a native way to implement $FBS$ gates (which are generally non-local), evaluating the gradients of these gates does not require any extra circuit resources over simply evaluating the function itself. One may simply compute gradients using the same $FBS$ gates but with shifted parameters, even though the $FBS$ gate itself is not in an obvious form for the parameter-shift requirements. If the $FBS$ gate instead decomposed into a primitive with perhaps \emph{multiple} $RBS$ gates (but each with some fraction of the total angle), this fact may not materialise.

The correspondence is the following (Proposition 2.6 in Ref.~\cite{kerenidis_quantum_2022}):
\begin{equation} \label{eqn:fbs_rbs_relation}
    FBS(\theta)_{ij} = \mathcal{P}(i+1, j)CZ_{i+1, j} RBS(\theta)_{ij} CZ_{i+1, j}\mathcal{P}^{\dagger}(i+1, j)
\end{equation}
where $\mathcal{P}(i+1, j)$ is a circuit which computes the parity of the qubits $k$ qubits between $i$ and $j$ into qubit $i+1$, such that $|i-j| = k+1$. 

Then let $f_{FBS}(\boldsymbol{\theta})$ be the function implemented by a quantum circuit with trainable parameters in $FBS$ gates. We have 
\begin{align*}
    \frac{\partial f_{FBS}(\boldsymbol{\theta})}{\partial \theta_i} &= \left[f'_{RBS}\left(\theta_i+\frac{\pi}{4}\right) - f'_{RBS}\left(\theta_i- \frac{\pi}{4}\right)\right] - \frac{\sqrt{2}-1}{2}\left[f'_{RBS}\left(\theta_i+ \frac{\pi}{2} \right) - f'_{RBS}\left(\theta_i- \frac{\pi}{2}\right)\right]\\
    &= \left[f_{FBS}\left(\theta_i+\frac{\pi}{4}\right) - f_{FBS}\left(\theta_i- \frac{\pi}{4}\right)\right] - \frac{\sqrt{2}-1}{2}\left[f_{FBS}\left(\theta_i+ \frac{\pi}{2} \right) - f_{FBS}\left(\theta_i- \frac{\pi}{2}\right)\right]
\end{align*}
where $f'_{RBS}$ is the formally-equivalent circuit to $f_{FBS}(\boldsymbol{\theta}$, but replacing all $FBS(\theta)_{ij}$ with the expressions using $RBS$ gates~\eqref{eqn:fbs_rbs_relation}. In other words, we take the $FBS$ circuit, write all $FBS$ in terms of corresponding $RBS$ gates and gates with no parameters, evaluate the gradients with respect to the $RBS$ parameters - which end up as circuits with identical form, but with simply shifted parameters - then rewrite the $RBS$ circuits back in the form of $FBS$ gates (shifted by the same amount), again using the relation above.

\section{Measurement protocol \& gradients for orthogonal \texorpdfstring{QNNs}{}} \label{sssec:ortho_qnn_measurement}
In the main text, it was stated that the gradients of all orthogonal-inspired density QNNs could be evaluated more quickly than their OrthoQNN counterparts. However, to adapt Theorem 1 of Ref.~\cite{bowles_backpropagation_2023} fully, 
we require the generators commute \emph{and} the measurement operator obeys the commutation relation specified therein. We will see how this analysis also raises a subtlety not addressed in previous works.

Assume we have a quantum neural network ansatz generated with $N$ unitaries with generators, $G$ as follows:
\begin{equation} \label{eqn:qnn_generic}
    \ket{\psi(\boldsymbol{\theta}, \boldsymbol{x})} = \mathcal{U}(\boldsymbol{\theta})V(\boldsymbol{x})\ket{0}^{\otimes n} , \qquad
    \mathcal{U}(\boldsymbol{\theta}) = \prod_{j=1}^N U_j(\theta_j) = \prod_{j=1}^N e^{i\theta_j G_j}
\end{equation}
Next, recall from Ref.~\cite{bowles_backpropagation_2023} that circuits for which all the generators in~\eqref{eqn:qnn_generic} commute (i.e. commuting-generator circuits) with each other, $[G_i, G_j] = 0~\forall i, j$ and given a measurement observable, $\mathcal{H}$, each generator (assuming they mutually commute) defines a gradient observable, $\mathcal{O}_k := [G_k, \mathcal{H}]$:
\begin{equation} \label{eqn:gradient_equation}
    \frac{\partial \mathcal{L}}{\partial \theta_k} 
    = i \bra{\psi(\boldsymbol{\theta}, \boldsymbol{x})}[G_k, \mathcal{H}]\ket{\psi(\boldsymbol{\theta}, \boldsymbol{x})}
\end{equation}
In the following, we show that the appearance of the commutator in~\eqref{eqn:gradient_equation} has implications for creating commuting-generator circuits from orthogonal quantum neural networks.

\subsection{Orthogonal QNN measurement protocol} \label{ssssec:measurement_protocol_orthoqnn}

Recall from the main text, the output of an orthogonal quantum neural network, $\boldsymbol{y}$, is also a unary encoding, to fully extract the state we only need $n$ amplitudes - those corresponding to the unary bitstrings, $\mathbf{e}_j := 0\cdots 1_j \cdots 0$. This can be done using an $\ell_{\infty}$-norm tomography procedure~\cite{kerenidis_quantum_2020, landman_quantum_2022} as follows. Firstly, the probabilities of the unary states are extracted, $p(\mathbf{e}_j) := y_j^2$ via direct measurement of the circuit. Then two auxiliary circuits are evaluated to extract the signs of the amplitudes. These first/second appends a layer of $RBS(\frac{\pi}{4})$ gates onto the odd/even-controlled qubits after the OrthoQNN, (see Figure 18 in \cite{landman_quantum_2022} for details). The three circuits are measured in the Pauli $Z$ basis to extract the final amplitudes $y_j$. While stated in previous works that it is sufficient to measure all simultaneously in the $Z$ basis to extract the unary amplitudes, we show in the following section this is not true if one is \emph{also} interested in hardware trainability. We must be more careful in the measurement protocol.

Focusing on probability extraction, $p(\mathbf{e}_j) := y_j^2$ (the sign evaluation follows similar logic) we have three choices, which impact practicality and trainability. We can $1)$ measure $\mathds{1}^{\otimes (j-1)}  \otimes Z_j \otimes \mathds{1}^{\otimes (n-j)}$ on each qubit individually, $2)$ perform a global measurement, $Z^{\otimes (j-1)}  \otimes Z_j \otimes Z^{\otimes (n-j)}$ over all qubits at once or $3)$ measure non-overlapping qubit pairs $\mathds{1}_1 \otimes Z_2 \otimes \mathds{1}_3 \otimes Z_4 \dots \mathds{1}_{n-1} \otimes Z_n$ (or equivalently swapping $\mathds{1} \leftrightarrow Z$ for each pair). In terms of forward passes, option $(1)$ adds an $\mathcal{O}(n)$ complexity, while options $2/3)$ add an $\mathcal{O}(1)$ complexity (option $(2)$ requires only a single circuit to run, while option $(3)$ requires $2$ circuits, one to characterise $\mathds{1}_{j} \otimes Z_{j+1}$ and the other for $Z_{j} \otimes \mathds{1}_{j+1}$ for each pair, $\{j, j+1\}$. In all cases, the probabilities are estimated by counting the number of `$1$'s on each qubit, $j$. However, while each technically resolves the same information (the probabilities of the unary states), there is a fundamental difference with respect to trainability. 

 \subsection{Orthogonal QNN gradient extraction} \label{ssssec:density_orthoqnn_gradient_ext}
 
The commuting nature of the generators, $G_j \propto Y_{j_1}\otimes X_{j_2} -X_{j_1} \otimes Y_{j_2}$ enables us to apply the results of Ref.~\cite{bowles_backpropagation_2023} if the measurement Hamiltonian, $\mathcal{H}$ either commutes or anticommutes with $G_j$. If $\mathcal{H}$ commutes, $[\mathcal{H}, G_j] = 0$ and therefore the gradient expression in \eqref{eqn:gradient_equation} is zero. This will be the case if we measure the global version of the observable, $Z^{\otimes n}$ since $[Z_j^{\otimes 2}, Y_{j_1}\otimes X_{j_1} -X_{j_2} \otimes Y_{j_2}] = 0$. However, if we measure each qubit individually, we will have either $Z_{j_1} \otimes \mathds{1}_{j_2}$ or $\mathds{1}_{j_1} \otimes Z_{j_2}$, both of which anticommute with $G_j$ and we will have a non-zero gradient\footnote{In fact this is also true for the final layer of gates in a vanilla OrthoQNN circuit.}.

In the latter case, we have two observables, $\mathcal{O}_{j_1} := 2i(\mathds{1}_{j_1} \otimes Z_{j_2})\frac{1}{2}(Y_{j_1}\otimes X_{j_2} X_{j_1} \otimes Y_{j_2}) = Y_{j_1}\otimes Y_{j_2} + X_{j_1} \otimes X_{j_2}$ and $\mathcal{O}_{j_2} := 2i(Z_{j_1} \otimes \mathds{1})\frac{1}{2}(Y_{j_1}\otimes X_{j_2} -X_{j_1} \otimes Y_{j_2}) = -X_{j_1}\otimes X_{j_2} - Y_{j_1} \otimes Y_{j_2} = -\mathcal{O}_{j_1}$. So, by evaluating the gradient with respect to the $j^{th}$ element $y_{j}$ of the output vector $\boldsymbol{y}$, the gradient with respect to $y_{j+1}$ simply points in the opposite direction ($\partial y_j / \partial \theta_{j} = -\partial y_{j+1} / \partial \theta_{j}$). As a result, we only need to measure gradient observables for one of the qubits upon which the $RBS$ gate is supported. Therefore, an optimal choice is to use measurement scheme $(3)$, measuring  $Z$  on only a single qubit out of each pair. A forward pass then requires $2$ extra circuits (over the $Z^{\otimes n}$ measurement protocol), but a full gradient evaluation requires only a single extra circuit (since we only need to extract gradients for $\frac{n}{2}$ qubits, one from each pair).


It is also in fact sufficient to compute gradients for only a \emph{single} layer (specifically the widest) of the odd-even (uncompressed) decomposition. This is due to 1) the commuting nature of the generators, and 2) the fact the generators are all the same for each gate. The sub-circuit $U_1$ in~\figref{fig:density_versus_ortholinear_mainfig}aii) contains $n$ $RBS$ gates on the first two qubits in series. Each of these are generated by the same operator $G^1_{12} = G^2_{12} = \cdots = G^n_{12}  := G_{12}$. Since the generators commute, each gradient observable is of the form: $i \bra{\boldsymbol{x}}\mathcal{U}_1(\boldsymbol{\theta})[G_{12}, \mathcal{H}]\mathcal{U}_1^{\dagger}(\boldsymbol{\theta})\ket{\boldsymbol{x}}$ (\eqref{eqn:gradient_equation}), and are therefore identical. The same applies for the other generators $G_{j, j+1}$ in $U_1$ and $U_2$ and so the density QNN in~\figref{fig:density_versus_ortholinear_mainfig}aii) has gradients which all be computed using two different circuits (and classical post-processing), assuming no parameter sharing. 

\subsection{Orthogonal QNN diagonalisation} \label{ssssec:density_orthoqnn_diagon}

Finally, we need to compute the diagonalization unitary needed to extract parallel gradient information. Fortunately, for orthogonal inspired density QNNs, this is simple. As mentioned above, we measure gradient observables,  $\mathcal{O} :=  X \otimes X + Y \otimes Y$ where the forward measurement is (for example) $\mathds{1}\otimes Z$. Diagonalization then results in a matrix $P$ such that $P\mathcal{O}P^{-1} = D$ for a diagonal matrix, $D$:
\begin{equation*}
\mathcal{O} = {\left(\begin{array}{cccc}
    0 & 0 & 0 & 0 \\
    0 & 0 & 2 & 0 \\
    0 & 2 & 0 & 0 \\
    0 & 0 & 0 & 0 
    \end{array} \right)
    }
    \implies P = {\left(\begin{array}{cccc}
            1 & 0 & 0 & 0 \\
            0 & \frac{1}{\sqrt{2}} & \frac{1}{\sqrt{2}} & 0 \\
            0 & -\frac{1}{\sqrt{2}} & \frac{1}{\sqrt{2}} & 0 \\
            0 & 0 & 0 & 1 
            \end{array}\right)}
\end{equation*}

which is simply an $RBS(\frac{\pi}{4})$ gate appended to each $RBS$ gate in each orthogonal sub-unitary. An efficient implementation could use the fact that these diagonalization circuits are the same as those required for $\ell_{\infty}$ tomography above.

\section{Features of density quantum neural networks} \label{app_sec:misceallanea}
In this section, we expound some comments regarding features of density QNNs, and their extensions.

\subsection{Incorporating data} \label{app_sec:data_density_qnn}

First, the loss function in~\eqref{eqn:density_qnn_loss_fn} assumes only a single datapoint, $\ket{\boldsymbol{x}}$ is evaluated. We can also take expectations of this loss with respect to the training data: 
\begin{align} \label{eqn:density_qnn_loss_fn_over_data}
    \mathcal{L}(\boldsymbol{\theta}, \boldsymbol{\alpha}) = \underset{\boldsymbol{x}}{\mathbb{E}}\Big[\Tr\Big(\mathcal{H}\rho(\boldsymbol{\theta}, \boldsymbol{\alpha}, \boldsymbol{x})\Big)\Big] 
    &\approx \frac{1}{S}\sum_{s=1}^S \delta_i \Tr\Big(\mathcal{H}\rho(\boldsymbol{\theta}, \boldsymbol{\alpha}, \boldsymbol{x}_i)\Big)\\
    &= \frac{1}{S}\sum_{k=1}^K\sum_{s=1}^S \alpha_k \delta_i \Tr\Big(\mathcal{H} U_k(\boldsymbol{\theta}_k)\ketbra{\boldsymbol{x}_s}{\boldsymbol{x}_s}U^\dagger_k(\boldsymbol{\theta}_k)\Big)
\end{align}
for $S$ data samples. Therefore, we increase the number of circuits we must run by a factor of $S$, each of which will have a gradient cost of 
$\sum_{l=1}^K\sum_{k=1}^K T_{\ell k}$. One can view~\eqref{eqn:density_qnn_loss_fn_over_data} as creating an `average' data state $\sum_{s=1}^S  \delta_s \ketbra{\boldsymbol{x}_s}{\boldsymbol{x}_s}$, where $\{\delta_i\}$ is the empirical distribution over the data and then applying the density sub-unitaries with their corresponding distribution, $\{\alpha_k\}_k$, or by estimating the elements of the stochastic $K\times S$ matrix with elements $\Tr\Big(\mathcal{H} U_k(\boldsymbol{\theta}_k)\ketbra{\boldsymbol{x}_s}{\boldsymbol{x}_s}U^\dagger_k(\boldsymbol{\theta}_k)\Big)$. 

In reality, we also will estimate the trace term in~\eqref{eqn:density_qnn_loss_fn_over_data} with $M$ measurements shots from the circuit. Incorporating this with the above density model and its gradients, one could also define an extreme gradient descent optimiser in the spirit of~\cite{sweke_stochastic_2020}, sampling over datapoints, measurement shots, measurement observable terms, parameter-shift terms and in our case, sub-unitaries to estimate the loss function and its gradients in a single circuit run. In this extreme, an estimator for the gradients (assuming they are computed via a linear combination of pure state expectation values, i.e. the parameter-shift rule) can be written in the following form:
\begin{align} \label{eqn:density_qnn_loss_fn_over_data_gradient}
    \frac{\partial \mathcal{L}(\boldsymbol{\theta}, \boldsymbol{\alpha})}{\partial \boldsymbol{\theta}_j} = \sum_{p=1}^{P} \gamma_p \mathcal{L}(\boldsymbol{\theta} + \boldsymbol{\beta}^p_j, \boldsymbol{x}) =  \underset{\boldsymbol{x}}{\mathbb{E}}\Big[\sum_{i=1}^{P} \gamma_i \Big[\Tr\Big(\mathcal{H}\rho(\boldsymbol{\theta}  + \boldsymbol{\beta}^p_j, \boldsymbol{\alpha}, \boldsymbol{x})\Big)\Big] \Big]\\
    \approx \sum_{p=1}^{P} \sum_{k=1}^{K}\sum_{s=1}^{S} \gamma_p \alpha_k \delta_s \underbrace{\Tr\Big(\mathcal{H} U_k(\boldsymbol{\theta}_k + \boldsymbol{\beta}^p_j )\ketbra{\boldsymbol{x}_s}{\boldsymbol{x}_s}U^\dagger_k(\boldsymbol{\theta}_k + \boldsymbol{\beta}^p_j)\Big)}_{\text{estimated with } M \text{ measurements}}
\end{align}
where $\boldsymbol{\beta}^p_j := [0, \dots, \underbrace{\beta^p}_{j}, \dots, 0]^{\top} = \beta^p\boldsymbol{e}_j \in \mathbb{R}^N$ is a unit vector in direction $j$. In the parameter-shift rule, the coefficients $\boldsymbol{\gamma} = \{\gamma_p\}_{p=1}^P$ depend on the unitary, $U_j(\theta_j)$. If $G_j$ has two unique eigenvalues, e.g. $G_j \in\{X, Y, Z\}$ we have $P = 2$ and $\boldsymbol{\beta}^{\pm}_{j} = \pm \frac{\pi}{2}, \gamma_{\pm} = \frac{1}{2}$ for every parameter. Hence, for a QNN with $N$ trainable parameters, we assume a forward pass (single loss evaluation) as a constant time operation, $ \mathcal{T}(QNN(\boldsymbol{\theta})) = \mathcal{O}(1)$. However, the gradient requires $\mathcal{O}(N)$ extra `shifted' circuit evaluations so $ \mathcal{T}(QNN'(\boldsymbol{\theta})) = \mathcal{O}(N)$ (ignoring other parameters). This is similar to the forward AD gradient scaling, but far removed from the efficiency of backpropagation. 

\subsection{Model expressivity} \label{app_sec:model_expressivity}

Second, the density model is at least as expressive as its `pure' state counterpart. The standard unitary QNN model for corresponds to $K=1$, where we have $\alpha_1 = 1\ \&\ \alpha_j = 0\forall j\neq 1$ and    
\begin{align*}
\mathcal{L}(\boldsymbol{\theta}, \boldsymbol{x}) &= \Tr\Big(\mathcal{H}\rho(\boldsymbol{\theta}, \boldsymbol{x})\Big) \\
&= \Tr\Big(\mathcal{H}U_1(\boldsymbol{\theta}_1)\ketbra{\boldsymbol{x}}{\boldsymbol{x}}U^\dagger_1(\boldsymbol{\theta}_1)\Big)  
= \Tr\Big(\bra{\boldsymbol{x}}U^\dagger_1(\boldsymbol{\theta}_1)\mathcal{H}U_1(\boldsymbol{\theta}_1)\ket{\boldsymbol{x}}\Big)
= \bra{\psi(\boldsymbol{\theta}, \boldsymbol{x})}\mathcal{H}\ket{\psi(\boldsymbol{\theta}, \boldsymbol{x})}
\end{align*}

Now, if one assumes independent parameters per sub-unitary in~\eqref{eqn:density_qnns}, one might ask - what are the training dynamics of a model whose gradient for subsections of parameters are completely independent? There actually exists an analogous classical situation. Taking a very simple linear layer in a neural network \emph{without the activation}, $f(W, \mathbf{b}) = W\boldsymbol{x} + \mathbf{b}$. The gradients with respect to the parameters, $W, \mathbf{b}$, are independent of each other, $\partial_W f =\boldsymbol{x}, \partial_{\mathbf{b}} f = \mathds{1}$, as in the above. However, this no longer is the case when adding an activation and multiple layers in a deep network: $f(W, \mathbf{b}) = \sigma(W\boldsymbol{x} + \mathbf{b})$: $\partial_W f = \partial_{W\boldsymbol{x} + \mathbf{b}} (\sigma(W\boldsymbol{x} + \mathbf{b})) \times\boldsymbol{x}$ and $\partial_{\mathbf{b}} f = \partial_{W\boldsymbol{x} + \mathbf{b}}(\sigma(W\boldsymbol{x} + \mathbf{b})) \times \mathds{1}$, where (depending on the activation) the non-differentiated parameters still propagate into the gradients of the differentiated ones.

\subsection{Density reuploading quantum neural networks} \label{app_sec:reuploading_density}
From the main text, we can generalise for completeness data reuploading density models as Fourier series. In this case, each sub-unitary reuploading layer, $\ell$, is of the form:
\begin{equation}\label{eqn:single_reup_layer_density}
    \mathcal{E}^{\ell, k}_{\boldsymbol{\theta}^{\ell}, \boldsymbol{x}}(\rho) := U_k(\boldsymbol{\theta}^{\ell}_k)V_{\ell}(\boldsymbol{x}) \rho V_{\ell}^{\dagger}(\boldsymbol{x})U^\dagger_k(\boldsymbol{\theta}^{\ell}_k)
\end{equation}

where $\rho$ is again some data-independent initial state. Given this, we define $L$ layered density reuploading QNNs as follows:

\begin{definition}[Density QNNs with data reuploading] \label{def:density_qnns_reuploading_def}
Given a classical data point (vector or otherwise), $\boldsymbol{x}$, we can define a density QNN incorporating data reuploading as:
\begin{align} \label{eqn:density_qnns_reuploading}
    \rho^R(\boldsymbol{\theta}, \boldsymbol{\alpha}, \boldsymbol{x}) &= \sum_{k=1}^K \alpha_k \mathcal{E}^{L, k}_{\boldsymbol{\theta}^{\ell}, \boldsymbol{x}}(\ketbra{0}{0}^{\otimes n}) \\ 
    \mathcal{E}^{L, k}_{\boldsymbol{\theta}, \boldsymbol{x}}(\rho) := \underbrace{\mathcal{E}^{L, k}_{\boldsymbol{\theta_L}, \boldsymbol{x}}\left(\cdots\mathcal{E}^{2, k}_{\boldsymbol{\theta}_2, \boldsymbol{x}}\left(\mathcal{E}^{1, k}_{\boldsymbol{\theta}_1, \boldsymbol{x}}(\rho)\right)\right)}_{L\ \textsf{times}}, &\qquad
    \mathcal{E}^{r, k}_{\boldsymbol{\theta}, \boldsymbol{x}}(\rho) := U_k(\boldsymbol{\theta}^r_k)V{\ell}(\boldsymbol{x}) \rho V_{\ell}^{\dagger}(\boldsymbol{x})U^{\dagger, r}_k(\boldsymbol{\theta}_k)
\end{align}
\end{definition}
For two reuploads the density state is:
\begin{equation} \label{eqn:two_reupload_density_QNN}
    \rho^2(\boldsymbol{\theta}, \boldsymbol{\alpha}, \boldsymbol{x}) =
    \sum_{k=1}^K \alpha_k U(\boldsymbol{\theta}^2_k)V_{\boldsymbol{x}}U(\boldsymbol{\theta}^1_k)\ketbra{\boldsymbol{x}}{\boldsymbol{x}}U^\dagger(\boldsymbol{\theta}^1_k)V_{\boldsymbol{x}}U^\dagger(\boldsymbol{\theta}^2_k)
\end{equation}
Where we drop the subscript $k$ on the sub-unitaries for compactness, and $V_{\boldsymbol{x}}\ket{0}^{\otimes n} := V({\boldsymbol{x}})\ket{0}^{\otimes n} := \ket{\boldsymbol{x}}$. Now, in the case of sub-unitaries which are decomposed into (commuting-)blocks, somewhat confusingly, $\boldsymbol{\theta}$ is a rank $4$ tensor, $\boldsymbol{\theta} := \{\theta^{r, b}_{k, j}\}$ where $r, b, k, j$ indexes the upload, block, sub-unitary and parameter respectively.
Now unfortunately it is not clear in general how to efficiently train the above models in~\defref{def:density_qnns_reuploading_def} even if each unitary is of a commuting-block form of Ref.~\cite{bowles_backpropagation_2023}. This is because the arbitrary nature of the data encoding unitary, $V(\boldsymbol{x})$ prohibits a fixed commutation relation between unitaries in subsequent uploads. 

In the above, we assume that the unitaries applied are the \emph{same} in successive reuploads - i.e. if the random variable we employ chooses sub-unitary, $U_k$, to apply to the initial state for the first upload, the subsequent sub-unitaries applied are also $U_k$, although with potentially different parameters, $\boldsymbol{\theta}_k^1 \neq \boldsymbol{\theta}_k^2$. In this case the distribution $\{\alpha_k\}$ represents the distribution of reuploading `sequences'. One could also account for different distributions over each reupload. For example, we apply one upload to the state and get $U(\boldsymbol{\theta}^1_k)\ketbra{\boldsymbol{x}}{\boldsymbol{x}}U^\dagger(\boldsymbol{\theta}^1_k)$ with probability $\alpha_k$. If we choose a different (but independent) distribution, $\{\beta_{k'}\}$, to select $U(\boldsymbol{\theta}^2_{k'})$, the resulting state would be $U(\boldsymbol{\theta}^2_{k'})V_{\boldsymbol{x}}U(\boldsymbol{\theta}^1_k)\ketbra{\boldsymbol{x}}{\boldsymbol{x}}U^\dagger(\boldsymbol{\theta}^1_k)V_{\boldsymbol{x}}U^\dagger(\boldsymbol{\theta}^2_{k'})$ with probability $\alpha_k\beta_{k'}$, and so on.

The function produced from each sub-unitary, $k$, can be written as a partial Fourier series: 
\begin{align}\label{eqn:data_reuploading_full_fourier}
    f^k(\boldsymbol{\theta}, \boldsymbol{x}) &= \Tr\left(\mathcal{O}\mathcal{E}^{L, k}_{\boldsymbol{\theta}, \boldsymbol{x}}(\rho)\right) = \sum_{\boldsymbol{\omega}\in \Omega} c^k_{\boldsymbol{\omega}}(\boldsymbol{\theta}) e^{i \boldsymbol{\omega}_k^\top \boldsymbol{x}}
\end{align}
which is a sum of Fourier coefficients with frequencies, $\Omega_k := \{\boldsymbol{\omega}_k\}$. As in the main text we assume a Hamiltonian encoding for each $V_{\ell}(\boldsymbol{x})$.

Now, it is straightforward to see that a density QNN with this form of data reuploading is a model of the form:
\begin{equation}\label{eqn:density_fourier_term_full}
g(\boldsymbol{\theta}, \boldsymbol{\alpha}, \boldsymbol{x}) = \Tr\left(\mathcal{O}\rho^L(\boldsymbol{\theta}, \boldsymbol{\alpha}, \boldsymbol{x})\right) = \sum_{k=1}^K \alpha_k f^k(\boldsymbol{\theta}, \boldsymbol{x}) = \sum_{k=1}^K \alpha_k \sum_{\boldsymbol{\omega}_k\in \Omega_k} c_{\boldsymbol{\omega}_k}(\boldsymbol{\theta}) e^{i \boldsymbol{\omega}_k^\top \boldsymbol{x}}
\end{equation}
So the model is a \emph{randomised} linear combination of partial Fourier series. 

\section{Mixture of Experts} \label{app_sec:moe_qnns}

In the main text, we introduced \emph{data-dependent} coefficients for the sub-unitaries, $\boldsymbol{\alpha} \rightarrow \boldsymbol{\alpha}(\boldsymbol{x})$. We showed that this could dramatically improve performance, even outperforming an orthogonal quantum layer with the same number of parameters in the quantum circuit. As mentioned, we also referred to this data dependence as an interpretation of the density QNN framework in the \emph{mixture of experts} (MoE) formalism~\cite{jacobs_adaptive_1991, jordan_hierarchical_1993}. In this section, we study this MoE interpretation in more detail, and test it on the butterfly decomposition for a density QNN, along with the odd-even decomposition from the main text. 

The MoE framework contains a set of experts, $\{f_1, \dots, f_K\}$, each of which is `responsible' for a different training case. A \emph{gating} network, $w$, decides which expert should be used for a given input. In the simplest form, the MoE output, $F(\boldsymbol{x})$, is a weighted sum of the experts, according to the gating network output $F(\boldsymbol{x}) = \sum_k w_k(\boldsymbol{x})f_k(\boldsymbol{x})$. The specific implementation of these gating an expert networks has been the study of much classical research. For example, each expert could be a neural networks~\cite{shazeer_outrageously_2017} with millions of parameters, or shallow models such as support vector machines~\cite{collobert_parallel_2001} or Gaussian processes~\cite{theis_generative_2015, deisenroth_distributed_2015} for example. For example, in introducing a \emph{hierarchical} version of these networks, Ref.~\cite{jordan_hierarchical_1993} used a simple linear layer with trainable layer, $W_g$, for the gating network:
\begin{equation}\label{eqn:moe_gating_simple_softmax}
    w_k(\boldsymbol{x}) = \texttt{softmax}_k(W_g\boldsymbol{x})
\end{equation}
Ref.~\cite{jordan_hierarchical_1993} investigated stacking MoE layers with such gating networks controlling linear experts with an activation function. One could also explore more complex gating functions as Ref.~\cite{shazeer_outrageously_2017}, which used `noisy top-$k$ gating' - adding Gaussian noise to the linear gating followed by passing only the influence of the top $k$ experts to the \texttt{softmax}. Adding sparsity reduces the network computation and the Gaussian noise improves the `\emph{unbalanced expert utilisation}' problem - unbalanced expert loads occur when the gating learns to only rely a small handful of experts, an effect which is compounded via training, since this subset will receive more and more examples as training progresses. All of these extensions could be explored within the quantum scenario in future work.

\subsection{Density quantum neural network as a mixture of experts} \label{app_ssec:density_qnn_moe}
As mentioned in the main text, in the density QNN framework, the distribution of sub-unitaries, $\alpha_j$, act as a weighting over sub-unitaries. If the method of parameterising the distribution is efficient and (efficiently) trainable, the model will select the sub-unitary which is most effective at extracting information from the data. By adding the data-dependence to the $\alpha$ coefficients, these can be interpreted as the output of a gating network choosing the most suitable sub-unitary for a given input. Note, use of classical neural networks to drive the evolution of quantum neural networks is not a new concept in itself~\cite{verdon_learning_2019, wilson_optimizing_2021}, and has even now a relatively long history. However, we believe the interpretation we give here in the context of a MoE is novel.

For example, in quantum data applications, one could imagine classifying directly states corresponding to fixed $k$-body Hamiltonians, as in the quantum phase recognition problem~\cite{cong_quantum_2019}. A sequence of sub-unitaries (experts) could be defined with specific entangling characteristics: $U_1$ contains only $1$-body terms (single qubit rotations), $U_2$ contains $2$-body terms (two qubit gates), $U_3$ contains $3$-body terms and so on. We create a density QNN with probabilities $\{\alpha_1, \alpha_2, \alpha_3, \dots \}$. It is clear that if the state to be classified is a product state, then the model is sufficient to learn the weighting $\alpha_1 = 1, \alpha_j = 0~\forall j\neq 1$, which conversely will not be sufficient for more strongly entangled inputs.

\begin{figure*}[!ht]
    \includegraphics[width=\linewidth]{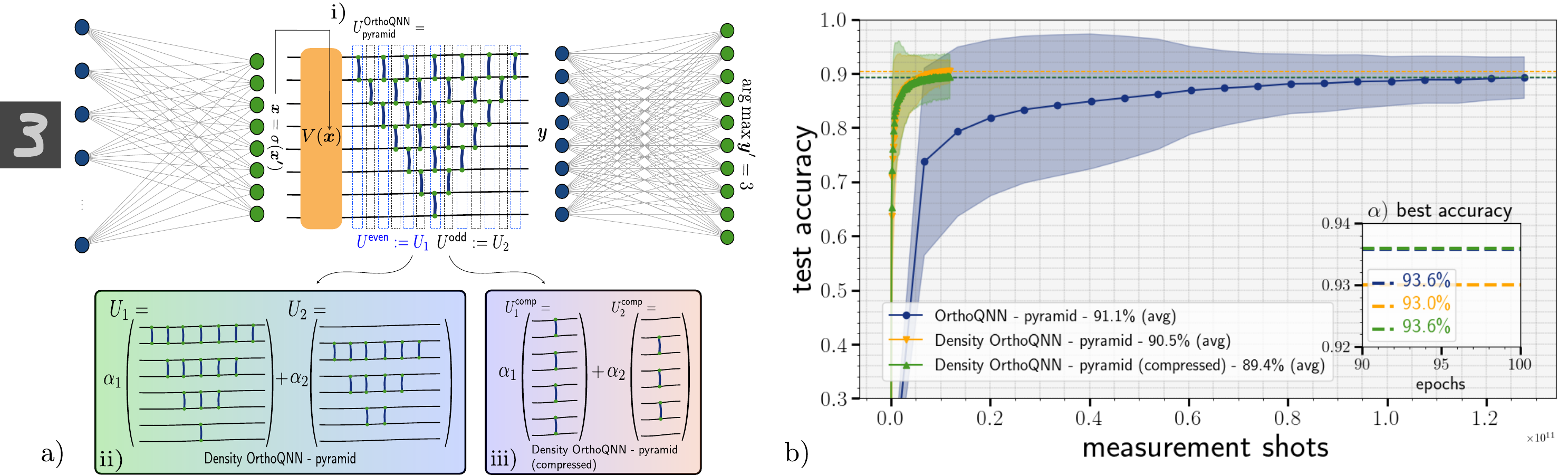}
  \caption{
  \textsf{
  \textbf{Pure state Hamming weight preserving (orthogonal) QNN with a pyramid ansatz, and the density QNNs derived from it.} a) The full models used for numerical results. MNIST data is flattened with a $784 \times 8$ linear layer outputting, $\boldsymbol{x}'$. A ReLU activation gives $\boldsymbol{x} := \sigma(\boldsymbol{x}') := \mathsf{ReLU}(\boldsymbol{x}')$. A data loader, $V(\boldsymbol{x})$, is used to encode $\boldsymbol{x}$ in a unary amplitude encoding (suitably normalised), $\ket{\boldsymbol{x}} \propto \sum_j x_j\ket{\mathbf{e}_j}, \mathbf{e}_j := 0\cdots 1_j \cdots 0$. This is then processed by either an ortholinear QNN $U^{\textsf{OrthoQNN}}$ with parameters $\boldsymbol{\theta}$, or a density version as in aii) \& aiii). An outcome vector, $\boldsymbol{y}$, is extracted, and postprocessed into $10$ label classes (digits $\ell \in \{0, 1, \dots, 9\}$) via a $8 \times 10$ linear layer and a \texttt{softmax} function. b) shows the comparison between three models, `OrthoQNN - pyramid', `Density OrthoQNN  - pyramid' and `Density OrthoQNN - pyramid  (compressed)' which correspond to (a) i), ii) or iii) respectively. The y-axis shows the test accuracy over all the $10,000$ MNIST test images, compared to the number of measurement shots required to train the model, i.e. using the parameter-shift rule for i) and the commuting-generator diagonalising circuit for ii) and iii). The main plot shows the mean (solid line) and standard deviation (shaded region) over the best fraction out of $32$ overall hyperparameter optimisation runs for all three models using $\texttt{optuna}$. We define `best' to be those runs which achieve $>80\%$ test accuracy, and we give more details in~\appref{app_subsec:hyperparams}. Inset ($\alpha$) shows the \emph{best} test accuracies by each model over all hyperparameters in the last $10$ epochs.
  }
  }
  \label{fig:density_versus_ortholinear_mainfig}
\end{figure*}

\subsubsection{Odd-even pyramid decomposition} \label{sssec:odd_even_pyramid}

Before including this data-dependence to uplift the density QNN to a full mixture of experts, let us first describe a simpler decomposition using the example of the pyramid circuit~\figref{fig:ortho_qnn_architectures}a). This example will give the greatest gradient query speedup, and is an alternative decomposition to the `layerwise' approach for the round-robin circuits in the main text. We also illustrate how one may ``dress'' such circuits with feature pre- and post-processing classical layers to generate competitive performance. Given the pyramid circuit~\figref{fig:ortho_qnn_architectures}a), we define the odd-even decomposition with $U_1:= U^{\textsf{even}}$ and $U_2 := U^{\textsf{odd}}$. $U^{\textsf{even}}$ contains the circuit moments where each gate within has an even-numbered qubit as its first qubit (the `control') and $U^{\textsf{odd}}$ contains odd qubit-controlled gates only. The resulting density QNN state is then (initialised with a uniform distribution weighting, $\boldsymbol{\alpha}$ - though we also allow these to be trainable):
\begin{equation} \label{eqn:odd_even_decomp_equation}
    \rho(\boldsymbol{\theta}, \boldsymbol{\alpha}=\left\{\frac{1}{2}, \frac{1}{2}\right\}, \boldsymbol{x}) = 
    \frac{1}{2}\left(U^{\textsf{even}}(\boldsymbol{\theta})\ketbra{\boldsymbol{x}}{\boldsymbol{x}}U^{\textsf{even}}(\boldsymbol{\theta})^{\dagger}\right) 
    + \frac{1}{2}\left(U^{\textsf{odd}}(\boldsymbol{\theta})\ketbra{\boldsymbol{x}}{\boldsymbol{x}}U^{\textsf{odd} }(\boldsymbol{\theta})^{\dagger}\right)
\end{equation}

 All gates in $U_1$ and $U_2$ mutually commute with each other trivially. The input state, $\ket{\boldsymbol{x}} = \sum_j x_j \ket{\mathbf{e}_j}$, is a unary ($\mathbf{e}_j$ is a basis vector with a single $1$ in position $j$ and zeros otherwise) amplitude encoding of the vector $\boldsymbol{x}$. Since the unitaries are Hamming weight preserving, the output states, $\ket{\boldsymbol{y}^{\textsf{odd}}}, \ket{\boldsymbol{y}^{\textsf{even}}}$ from each sub-unitary are of the form $\ket{\boldsymbol{y}} = \sum_j y_j \ket{\mathbf{e}_j}$ for some vector $\boldsymbol{y}$. This output is related to the input vector via some orthogonal matrix transformation $O^U$, $\boldsymbol{y} = O^U\boldsymbol{x}$ where the elements of $O^U$ can be computed via the angles of the reconfigurable beam splitter (RBS) gates in the circuit (see~\secref{sec:methods}). The typical output of such a layer is the vector $\boldsymbol{y}$ itself, for further processing in a deep learning pipeline. For our purposes in gradient-based training, due to the linearity and the purity of the individual output states, $ \ket{\boldsymbol{y}^{\textsf{even}/\textsf{odd}}}$, we can deal with both individually and classically combine the results.
 
\subsection{Logarithmic butterfly decomposition} \label{ssec:attention_orthoqnn_mnist_butterfly}
Similarly to the round-robin circuit from the main text, the `butterfly' circuit (seen in butterfly~\figref{fig:ortho_qnn_architectures}c) on $8$ qubits), can decompose layerwise into $\log_2(n)$ sub-unitaries in the density framework. As shown in~\tabref{tab:summary_density_comparison} this gives a gradient query advantage from $\mathcal{O}(n\log(n))$ circuits to $\mathcal{O}(\log(n))$. This is because we decompose the butterfly into $\log(n)$ layers, each with $n/2$ gates, each of which must be parameter-shifted in the original circuit. In the density framework, each `layer' can be gradient-evaluated independently.
\begin{figure*}[!ht]
    \centering
    \includegraphics[width=\linewidth]{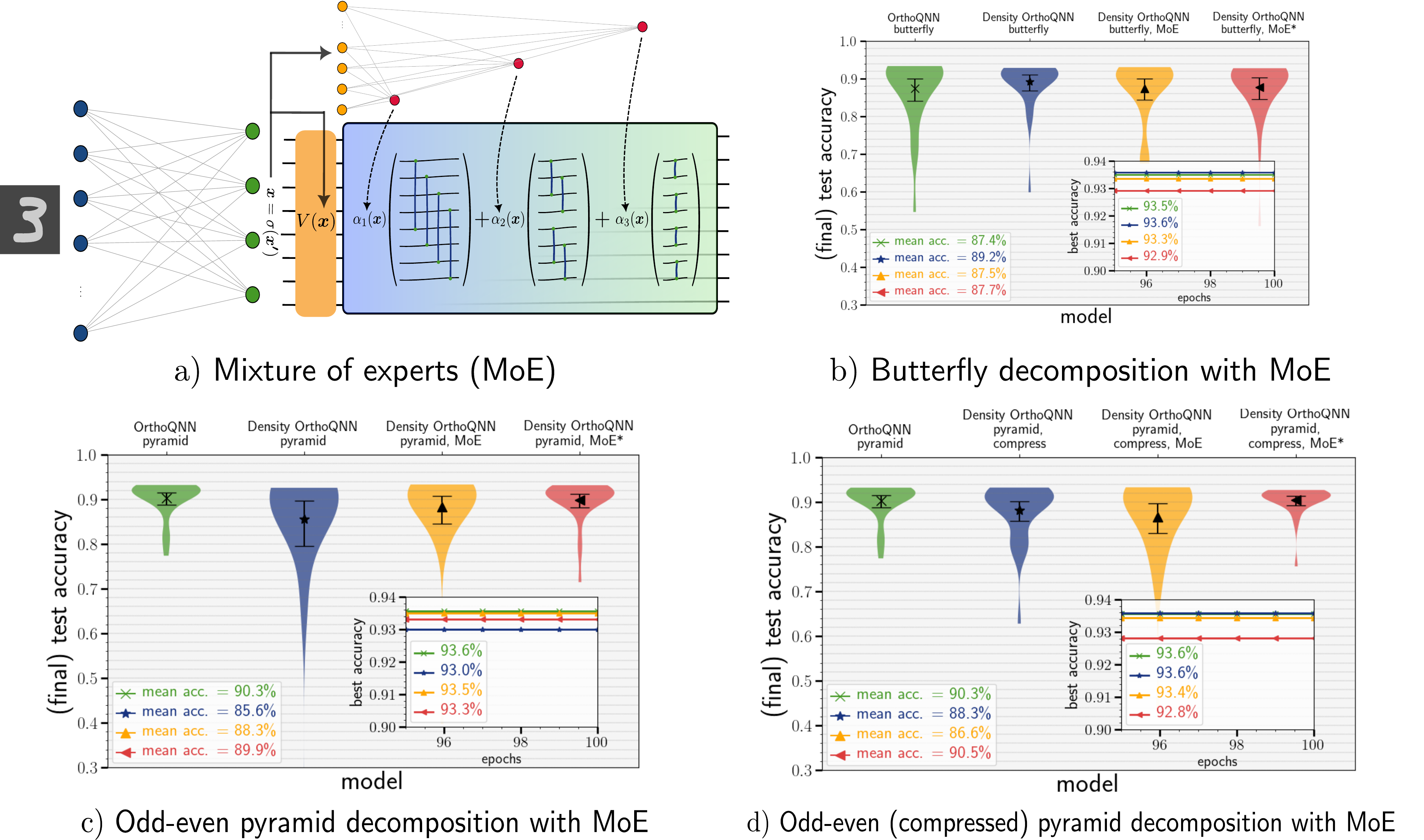}
    \caption{
  \textsf{
  \textbf{Density QNN with data dependent sub-unitary weighting parameters - mixture of experts.} \\
  a) Introducing data-dependence into sub-unitary decomposition via mixture of experts (MoE), for the decomposed butterfly $U(1)$ equivariant unitary (orthogonal QNN or OrthoQNN). A simple linear layer takes as input $\boldsymbol{x} := \sigma(\boldsymbol{x}')$, and outputs $\boldsymbol{\alpha}(\boldsymbol{x})= \{\boldsymbol{\alpha}_k(\boldsymbol{x})\}$, again with $\sum_k\boldsymbol{\alpha}_k(\boldsymbol{x}) = 1~\forall\boldsymbol{x}$. We omit the final (classical) post-processing layer. We test this for the b) \textbf{butterfly decomposition}, c) \textbf{odd-even decomposition}, and d) \textbf{odd-even compressed decomposition}. Again, we perform hyperparameter optimisation over $32$ trails on the MNIST dataset. Errorbars on violin plots show mean and $95\%$ confidence intervals bootstrapped using $1000$ samples. We notice, for those models which are very `shallow', or have few parameters - namely the butterfly and the compressed pyramid extraction, a mixture of expert gating distribution does not seem to help performance. However, if we have a larger number of parameters as in the odd-even extraction of the pyramid circuit (see~\figref{fig:density_qnn_mainfig}a) in the main text), an MoE (data-dependent learnable weighted distribution) \emph{does} appear to help the model learn, achieving both a higher absolute test accuracy over all hyperparameters, and a better average accuracy. 
  }
  }
  \label{fig:attention_density}
\end{figure*}

The specific gating distribution (or the sub-unitary coefficients) $\{\alpha_1(\boldsymbol{x}), \alpha_2(\boldsymbol{x}), \alpha_3(\boldsymbol{x})\}$ we use are as follows: 
\begin{align} \label{eqn:moe_gating_quantum_numerics}
    \textnormal{``MoE''} &\implies \alpha_k(\boldsymbol{x}) = \texttt{softmax}_k(\texttt{Linear}(\sigma(\boldsymbol{x}')),  \\  \textnormal{``MoE*''} &\implies \alpha_k(\boldsymbol{x}) = \texttt{softmax}_k(\mathsf{GELU}(\texttt{Linear}(\sigma(\boldsymbol{x}')))
\end{align}
We use the ``MoE'' version for the round-robin decomposition in the main text.

We again test using MNIST data for the butterfly (\figref{fig:attention_density}b),  odd-even (\figref{fig:attention_density}c) and compressed odd-even (\figref{fig:attention_density}d) respectively. In these figures, we plot the average accuracies achieved at the end of $100$ training epochs, over hyperparameter runs. The errorbars in the plots represented $95\%$ confidence intervals which are bootstrapped using $1000$ samples. The insets show the best accuracies achieved over all runs for each model.

We make some observations for these results. First, in some cases, the vanilla density model can outperform it's pure version, e.g. comparing best accuracy achieved by the density QNN-butterfly versus the OrthoQNN-butterfly in~\figref{fig:attention_density}b (inset). Secondly, for models with larger parameter counts per sub-unitary, e.g. the uncompressed odd-even decomposition, the data-dependent attention \emph{does} appear to help (with/without activation) - and can boost performance to the level of the original model. Thirdly, an explicit activation function does not have a conclusive impact in performance - it can make the model perform better on \emph{average} over hyperparameter optimisation as in Figs.~\ref{fig:attention_density}c, \ref{fig:attention_density}d, but ultimately the \emph{best} accuracies are found either by the original model or the density model without activation (insets). The final observation is that making $\boldsymbol{\alpha}(\boldsymbol{x})$ data dependent (including a MoE), and predictable by a neural network, reduces the variance of the training over hyperparameter runs, which can be observed in all scenarios.

\section{Dropout in quantum machine learning} \label{app:dropout_qml}
Dropout is a technique in classical machine learning~\cite{srivastava_dropout_2014} to effectively and efficiently combine the predictions of an exponentially large number of networks, avoiding the overhead of needing to train many networks individually and combine their results \textit{ex post facto}. It also regularises the output model and prevents overfitting by avoiding the network learning very complex and specific relationships between all neurons. With dropout, each neuron learns to solve the problem with only a random, small, collection of partner neurons at any time - leading to information being shared across the entire network.

At each training forward pass, dropout randomly removes every neuron in the (classical) network with probability $p$ by sampling a $\mathsf{Bernoulli}(p)$ $0/1$ random variable for each neuron. This effectively severs all input and output weight connections to this neuron so they do not contribute to the output. This effectively samples sub-networks from the parent network on each forward pass.

The closest analogue to this behaviour in quantum neural networks (specifically parameterised quantum circuits (PQCs) is to randomly drop \emph{gates} in a trainable circuit (equivalently randomly set their parameters to zero)~\cite{scala_general_2023, kobayashi_overfitting_2022} which has variations known as \emph{entangling} or \emph{rotation} dropout. As noted by~\cite{scala_general_2023}, even this notion of `quantum' dropout is not completely analogous to its classical counterpart, since it only removing single qubit gates make not sever temporal connections between qubits (as in the classical case) due to entanglement.
\subsection{Dropout interpretation of density \texorpdfstring{QNNs}{}} \label{app:dropout_density}

Nevertheless, as mentioned in the main text, the density QNN framework is also sometimes referred to a quantum version of dropout, since it bears some surface similarities. Given some `dropout' probabilities (the probabilities of the sub-unitaries, $\{\alpha_k\}_{k=1}^K$), each forward pass involves $K-1$ sub-unitaries $\{U_{k'}(\theta_{k'})\}_{k'\neq k}$ being `dropped out', and avoids the model relying too heavily on any specific trainable operation (or subset of parameters).

We argue that this model, as it is presented in the na\"ive form, is \emph{not} sufficiently close to mimic dropout for (at least) one crucial reason. A key feature of dropout is the different training and inference behaviour. Common deep learning packages such as \texttt{pytorch} have specific methods for models, \texttt{.train()} and \texttt{.eval()} which, when activated, imply different behaviour for layers such as dropout. More specifically, in the training phase, a dropout layer randomly drops neurons with probability $p$, However in the the \emph{evaluation}/inference phase, dropout has the behaviour that the \emph{full} network is applied, but with the adaptation that the weight matrix is scaled by the probability $p$. As a result, the actual output of inference through the network at test time, is the same as the \emph{expectation} of inference through the network in training.

In the density QNN framework, this presents a problem. In order to more correctly mimic the behaviour of dropout, we need the density QNN to have a \texttt{.eval()} mode where a single forward pass is equivalent to the \emph{on average} evaluation of the density QNN in an \texttt{.eval()} mode.

For evaluation, this means we literally need to prepare the density state, 
\begin{equation} \label{eqn:density_qnns_app_dropout_1}
    \rho(\boldsymbol{\theta}, \boldsymbol{\alpha}, \boldsymbol{x}) = \sum_{k=1}^K \alpha_k U_k(\boldsymbol{\theta}_k)\ketbra{\boldsymbol{x}}{\boldsymbol{x}}U^\dagger_k(\boldsymbol{\theta}_k)
\end{equation}
on a quantum computer, which may be highly non-trivial in general (and in general exponential), particularly in the case of quantum data. In the following section, we give an \texttt{.eval()} mode to do this in a specific case, so the model more closely resembles a dropout network, but taking such an interpretation will come with limitations, as we discuss in the following.

First, we separate the two operation modes of a density QNN explicitly as follows:
\begin{itemize}
    \item {\textbf{\textsf{Train:}} For each datapoint $\boldsymbol{x}$, create the state $\ket{\boldsymbol{x}}$ by applying the loading unitary, $V(\boldsymbol{x})$ to the initial state, $\ket{0}^{\otimes n}$. Then sample and index, $k \sim \alpha_k$, and apply sub-unitary $U_k(\boldsymbol{\theta}_k)$ with probability $\alpha_k$ to the state $\ket{\boldsymbol{x}}$. Measuring the output observable $\mathcal{H}$ will, on expectation, evaluate $\Tr(\mathcal{H} \rho(\boldsymbol{\theta}, \boldsymbol{\alpha}, \boldsymbol{x}))$. Each sub-unitary is trained individually, as described in the main text. However, including parameter sharing between the sub-unitaries may be more reminiscent of dropout. This is implemented via the \emph{randomised} density QNN in~\figref{fig:density_qnn_mainfig}c in the main text.
    }
    \item {\textbf{\textsf{Test:}} In evaluation mode, we must \emph{directly} create the state $\rho(\boldsymbol{\theta}, \boldsymbol{\alpha}, \boldsymbol{x})$ and then measure the observable $\mathcal{H}$. This is implemented via the \emph{deterministic} density QNN in~\figref{fig:density_qnn_mainfig}b in the main text which prepares the state via the linear combination of unitaries. We show two examples,~\figref{fig:matrix_loader_dropout_app}a shows a generic case of a density QNN which may have large depth and~\figref{fig:matrix_loader_dropout_app}b specialises to a generative application and Hamming weight preserving unitaries, which can have a much more conservative depth scaling. This uses a mixed unary/binary representation on the qubits to create the state $\rho(\boldsymbol{\theta}, \boldsymbol{\alpha}, \boldsymbol{x})$ on the bottom register, $\mathcal{B}$. Clearly, such an implementation sacrifices the efficient and shallow implementation from the training phase, but it also provides a relatively general method to implement density quantum neural networks in a less NISQ-friendly manner. We describe the details of the circuit operation in the following sections~\appref{app:dropout_density_matrix_loader},~\appref{app:dropout_density_matrix_loader_generative}.
}
\end{itemize}

\begin{figure*}[!ht]
        \includegraphics[width=\linewidth]{Fig13_matrix_loaders_unary_generic.pdf}
    \caption{
    \textsf{
    \textbf{Circuits for preparing density QNN state.}
    a) is the deterministic state preparation from~\figref{fig:density_qnn_mainfig}b while b) is the simplification in the case the sub-unitaries are Hamming weight preserving and behaves as a matrix loader, potentially suitable for generative modelling.
    }
    }
    \label{fig:matrix_loader_dropout_app}
\end{figure*}

\subsection{Evaluation circuit for density \texorpdfstring{QNNs}{}} \label{app:dropout_density_matrix_loader}
Here we describe how the deterministic circuit in~\figref{fig:density_qnn_mainfig}b prepares the state~\eqref{eqn:density_qnns_app_dropout_1}. In the special case we desribe below, such circuits can function as \emph{matrix loaders}. The matrix loader~\cite{cherrat_quantum_2022} was originally intended to load an $n\times d$ matrix, $\boldsymbol{x}$, into an overall Hamming weight $2$ state using two unary qubit registers; one to index the matrix rows and the other to index the columns as follows: $\ket{\boldsymbol{x}} = \frac{1}{\|\boldsymbol{x}\|}\sum_{i=1}^n\sum_{j=1}^d \boldsymbol{x}_{i, j}\ket{\mathbf{e}_i}\ket{\mathbf{e}_j}$. Intuitively, this works by loading first the column indices to the top register. Controlled on these `row' indices, `row' loaders are applied (Figure 5 in~\cite{cherrat_quantum_2022}) to the qubits in the bottom register ($\mathcal{B}$) in the figure. Due to the unary encoding on the top register, each control will be only activated corresponding to the row that qubit is indexing. 

We adapt this idea here to prepare the state $\rho(\boldsymbol{\theta}, \boldsymbol{\alpha}, \boldsymbol{x})$. First, we load the distribution of sub-unitaries, $\{\alpha_k\}$ onto the top $K$ qubits. This produces the Hamming weight $1$ state in the register $\mathcal{A}$. We can simultaneously prepare the initial data state $\ket{\boldsymbol{x}}$ by applying $V(\boldsymbol{x})$ on the register $\mathcal{B}$:
\begin{equation}
    \mathsf{Load}\left(\sqrt{\boldsymbol{\alpha}}\right)\ket{0}_{\mathcal{A}}^{\otimes n}V(\boldsymbol{x})\ket{0}_{\mathcal{B}}^{\otimes n} = \sum_{k=1}^K \sqrt{\alpha_k} \ket{\mathbf{e}_k}_{\mathcal{A}} \ket{\boldsymbol{x}}_{\mathcal{B}}
\end{equation}
Now, iterating through the top $K$ qubits and applying $U_k(\theta_k)$ on the register $\mathcal{B}$ controlled on qubit $k$ in register $\mathcal{A}$ results in:
\begin{align}
     \sum_{k=1}^K \sqrt{\alpha_k} \ket{\mathbf{e_k}}_{\mathcal{A}} \ket{\boldsymbol{x}}_{\mathcal{B}} \rightarrow & \sum_{k=1}^K \sqrt{\alpha_k} \ket{\mathbf{e}_k}_{\mathcal{A}} U_k(\boldsymbol{\theta})\ket{\boldsymbol{x}}_{\mathcal{B}}, \\
     \implies & \rho_{\mathcal{A}\mathcal{B}} = \sum_{k=1}^K\sum_{j=1}^K \sqrt{\alpha_k} \sqrt{\alpha_j}\ketbra{\mathbf{e}_k}{\mathbf{e}_j}_{\mathcal{A}} \left[U_k(\boldsymbol{\theta})\ketbra{\boldsymbol{x}}{\boldsymbol{x}}U^\dagger_k(\boldsymbol{\theta})\right]_{\mathcal{B}}
\end{align}
Finally, $\rho(\boldsymbol{\theta}, \boldsymbol{\alpha}, \boldsymbol{x}) = \tr_{\mathcal{A}}\left(\rho_{\mathcal{A}\mathcal{B}}\right)$ in \eqref{eqn:density_qnns_app_dropout_1} is prepared by tracing out register $\mathcal{A}$, leaving only the trace-full diagonal elements $\ketbra{\mathbf{e}_k}{\mathbf{e}_k}$ with trace $=1$. 

There are some final notes on this point:
\begin{enumerate}
    \item This technique of applying controlled unitaries a circuit is well-known as the \emph{linear combination of unitaries} (LCU) method, which is a primary method of performing quantum simulation on a quantum computer. The LCU method emulates the effect of a non-unitary matrix $A$ on a state which can be decomposed as a linear combination of unitary operations, $A = \sum_i \alpha_i U_i$. While this work was in preparation, we became aware of~\cite{heredge_non-unitary_2024} which proposes exactly the LCU method for quantum machine learning. However, this differs from the proposal in this work as we are interested in trading off efficiency and trainability for already defined models as discussed in the main text. 
    \item Dealing with the density state as here means we do not require post selection on the top register, $\mathcal{A}$. Post-selecting on a particular outcome, e.g. $\ket{0}^{\otimes n}$ adds an addition overhead to the overall model, but is necessary for correctly applying the desired matrix, $A$, to the input.
    \item If we need to reuse the ancillary qubits in register $\mathcal{A}$ for another purpose after the creation of the density state, we will need to uncompute the qubits with the operation $\mathsf{Load}^\dagger\left(\sqrt{\boldsymbol{\alpha}}\right)$.
\end{enumerate}

\subsection{Sampling from density \texorpdfstring{QNNs}{}}
\label{app:dropout_density_matrix_loader_generative}
If we have a case where there is no data to be encoded into the circuit, i.e. $V(\boldsymbol{x}) = \mathds{1}$ and the sub-unitaries $\mathcal{U} = \{U_k\}_{k=1}^K$ are all Hamming weight preserving, we can use a closer analogue to the matrix loader of~\cite{cherrat_quantum_2022} to prepare the state $\rho(\boldsymbol{\theta}, \boldsymbol{\alpha})$ (notice $\boldsymbol{x}$ independence) as in~\figref{fig:matrix_loader_dropout_app}b. Instead of directly controlling on the unitaries $U_k$, we instead interleave a CNOT gate between the unitary and its inverse. Since $U_k$ is Hamming weight preserving, it will be activated with probability $\alpha_k$ as before, but since the initial state is Hamming weight $0$, the inverses, $U^\dagger_k(\theta_k)$, will not apply for that particular $k$, only the CNOT which activates an initial unary state followed by the unitary $U_k(\theta_k)$, which preserves the Hamming weight $1$ state on the $\mathcal{B}$ register. If these Hamming weight preserving unitaries are vector loaders~\cite{cherrat_quantum_2022}, the output state $\rho(\boldsymbol{\theta}, \boldsymbol{\alpha})$ will correspond exactly the reduced state of \emph{some}, unknown, matrix loader state $\ket{\boldsymbol{x}^*}$ generated by the angles $\boldsymbol{\theta}$:
\begin{align*}
    \ket{\boldsymbol{x}^*} &= \sum_{i=1}^n\sum_{k=1}^d \sqrt{\alpha_k}\boldsymbol{x}^*_{i, k}\ket{\mathbf{e}_k}_{\mathcal{A}}\ket{\mathbf{e}_i}_{\mathcal{B}}\\
    \rho_{\mathcal{A}\mathcal{B}} &= \sum_{i=1}^n\sum_{i'=1}^n\sum_{k=1}^d \sum_{k'=1}^d \sqrt{\alpha_k\alpha_{k'}}\boldsymbol{x}^*_{i, k}\boldsymbol{x}^*_{i', k'}\ketbra{\mathbf{e}_k}{\mathbf{e}_{k'}}_{\mathcal{A}}\ketbra{\mathbf{e}_i}{\mathbf{e}_i}_{\mathcal{B}}\\
    \implies \rho(\boldsymbol{\theta}, \boldsymbol{\alpha}) &= \sum_{i=1}^n\sum_{i'=1}^n\sum_{k=1}^d \alpha_k\boldsymbol{x}^*_{i, k}\boldsymbol{x}^*_{i', k}\ketbra{\mathbf{e}_i}{\mathbf{e}_{i'}}_{\mathcal{B}}
\end{align*}

Here, we can view the density QNN as preparing a generative state (akin to a Born machine~\cite{cheng_information_2018, liu_differentiable_2018, benedetti_generative_2019, coyle_born_2020}) and sampling the state in the computational basis can correspond to sampling an index $i \in [n]$ with probability weighted by $\sum_k\alpha_k^2\left(\boldsymbol{x}^*_{i, k}\right)^2$, where $\boldsymbol{x}^*_{*, k}$
is a vector (suitably normalised) derived from the angles $\boldsymbol{\theta}$. One could also generate a more efficient circuit in the unary space which \emph{includes} an initial state preparation unitary using recent techniques from quantum fourier networks~\cite{jain_quantum_2024}.

Here, we note the connection to the recently proposed \emph{variational} measurement-based quantum computing (MBQC), which was applied to generative modelling~\cite{majumder_variational_2023}. There, a distribution over ``sub-unitaries'' appears naturally due the the nature of measurement-driven quantum computation. Specifically, each time a qubit is measured in the MBQC model, its output result is used to fork the next level of computation. In order to deterministically implement a single (yet arbitrary) unitary, MBQC corrects the `wrong' path in the fork by applying corrective rotations on subsequent qubits. Rather than being motivated by implementing a single (known) unitary via deterministic correction, Ref.~\cite{majumder_variational_2023} proposes to use this inherent MBQC measurement randomness for generative modelling purposes, as the effect of \emph{not} correcting outcomes results exactly in a mixed-unitary (density) channel as we have above. Similarly to our proposal, the authors demonstrated superior learning capabilities of the mixed channel over a single unitary ansatz. We hope that the parallel tracks traversed in our work, along with variational MBQC, and post-variational quantum machine learning can be unified to ultimately advance the field.

\section{Combating overfitting with density QNNs}
\label{app:overfitting_density_QNN}

In~\appref{app_sec:reuploading_density}, we found that using a data reuploading density QNN model with a Hamiltonian data encoding strategy results in the model output being a randomised linear combination of Fourier series in the data as below:
\begin{equation}\label{eqn:single_reup_layer_fourier_term_dropout_section}
g(\boldsymbol{\theta}, \boldsymbol{\alpha}, \boldsymbol{x}) = \Tr\left(\mathcal{O}\rho^R(\boldsymbol{\theta}, \boldsymbol{\alpha}, \boldsymbol{x})\right) = \sum_{k=1}^K \alpha_k f^k(\boldsymbol{\theta}, \boldsymbol{x}) = \sum_{k=1}^K \alpha_k \sum_{\boldsymbol{\omega}_k\in \Omega_k} c_{\boldsymbol{\omega}_k}(\boldsymbol{\theta}) e^{i \boldsymbol{\omega}_k^\top \boldsymbol{x}}
\end{equation}
One of the primary features of dropout classically is to prevent individual model complexity which in turn combats overfitting. While the density QNN may \emph{appear} as a dropout mechanism, in order to truly be compared to dropout, it must do what dropout does - i.e., regulate overfitting also.

We conclude the discussion on this topic by demonstrating that a density QNN \emph{can} indeed prevent overfitting with a simple and interpretable example. Specifically, we use only a single qubit, $n=1$ and compare a density QNN model, outputting functions $g(\boldsymbol{\theta}, \boldsymbol{\alpha}, \boldsymbol{x})$ to a pure state version, which would just learn a single term in the sum~\eqref{eqn:single_reup_layer_fourier_term_dropout_section}, $f^k(\boldsymbol{\theta}, \boldsymbol{x})$ (without lack of generality we drop the index $k$ hereafter). This will also tie into the mixture of experts (MoE) as discussed in~\appref{app_sec:moe_qnns}.

\begin{figure}
    \centering
    \includegraphics[width=0.95\textwidth]{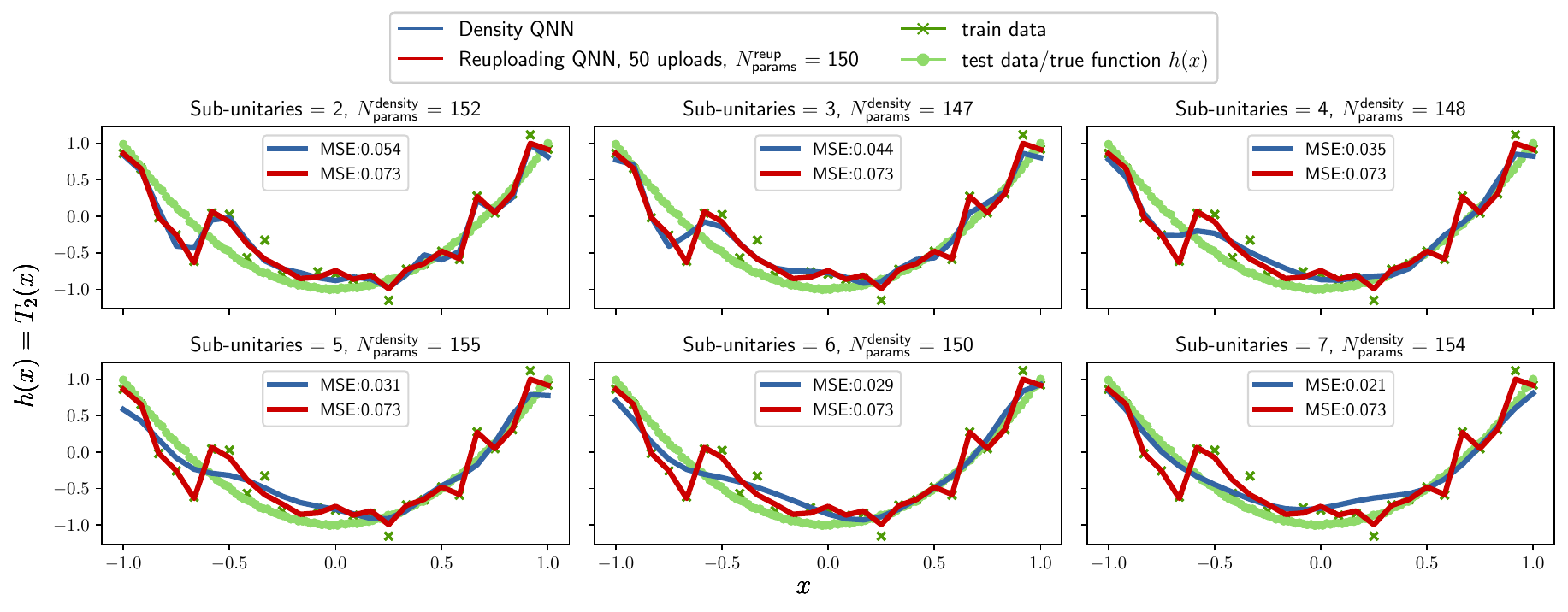}
  \caption{
  \textsf{\textbf{Regulating overfitting with a density QNN.} \\
  Each panel compares a single qubit density QNN (blue) with some number of sub-unitaries ($K \in \{2, 7\}$) to a vanilla (red) single qubit data reuploading model with $50$ layers or reuploads of the data. The underlying function to fit is the second Chebyshev polynomial of the first kind, $h(x) = T_2(x)$. The number of quantum circuit parameters are kept approximately the same in each case. For a fixed parameter budget, the density QNN is less prone to overfitting to training data than the vanilla counterpart, and generalisation improves (test MSE decreases) as the number of sub-unitaries increases.
  }
  }
    \label{subfig:chebyshev_2_num_experts}
\end{figure}

\begin{figure}
\centering
    \includegraphics[width=0.75\textwidth]{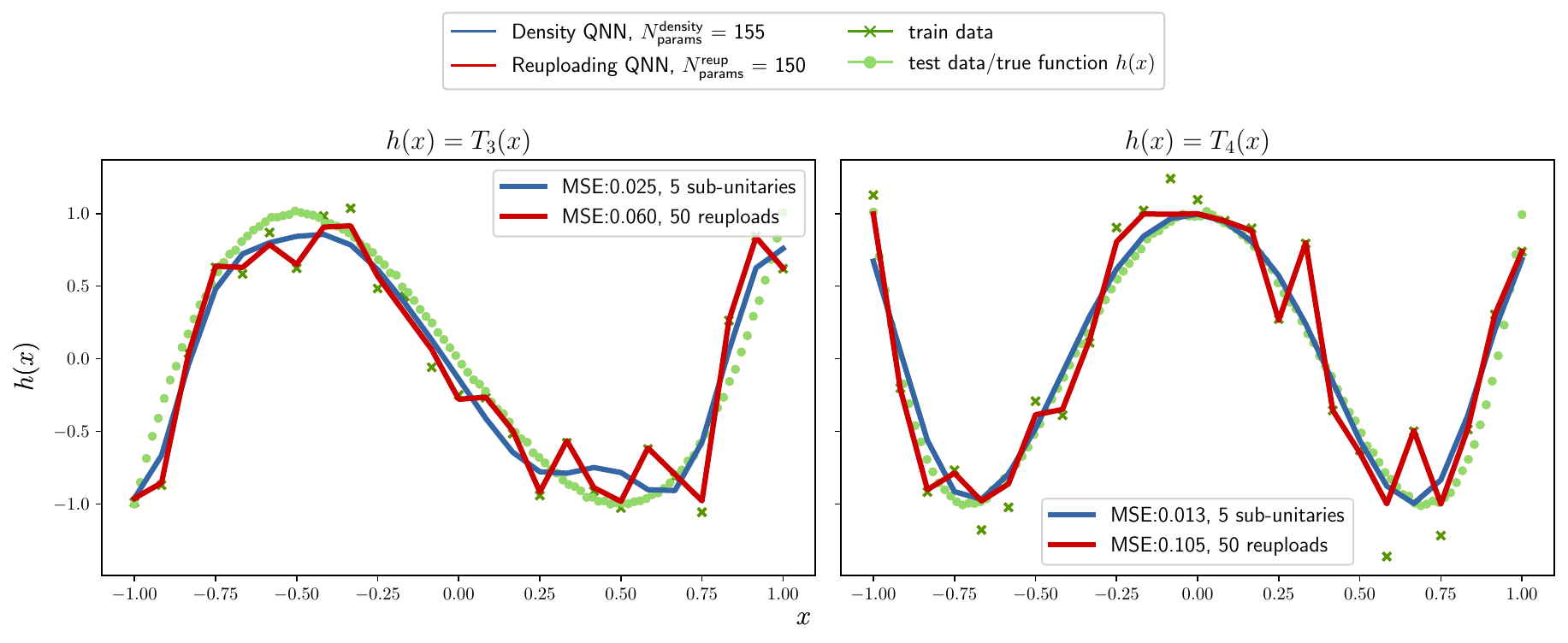}
  \caption{
  \textsf{\textbf{Regulating overfitting with a density QNN.} \\
   Higher degree Chebyshev polynomials, $h(x) = T_{\{3, 4\}}(x)$ learned with $5$ sub-unitaries in the density QNN (blue) versus the vanilla data reuploading model (red).
  }
  }
    \label{fig:chebyshev_3_4_num_experts}
\end{figure}

\subsection{Data and model training}
\label{app:overfitting_data}

For a toy example, we choose a regression problem, where both models, $g(\boldsymbol{\theta}, \boldsymbol{\alpha}, \boldsymbol{x}), f(\boldsymbol{\theta}, \boldsymbol{x})$ are tasked to predict the output of Chebyshev polynomials of the first kind, defined by:
\begin{equation}\label{eqn:chebyshev_polynomials_app}
    T_0(x) = 1, \qquad T_1(x) = x,  \qquad T_{n+1}(x) = 2xT_n(x) - T_{n-1}(x), 
\end{equation}
The data is generated to highlight a scenario where the original reuploading model is encouraged to overfit - we generate $25$ training points from $h(x) := T_n(x)$ and $110$ test points in the interval $[-1, 1]$. We then add noise to the training data from a zero mean normal distribution with standard deviation $0.2$. The metric we use is the mean squared error (MSE) between the test data, and the predictions from each model, $g(\boldsymbol{\theta}, \boldsymbol{\alpha}, \boldsymbol{x}), f(\boldsymbol{\theta}, \boldsymbol{x})$. This serves as a proxy for underlying fitting.
In all the below, we use a batch size of $5$ to train and the Adam optimiser with a learning rate of $0.001$.

\begin{figure}
    \centering
    \includegraphics[width=0.75\textwidth]{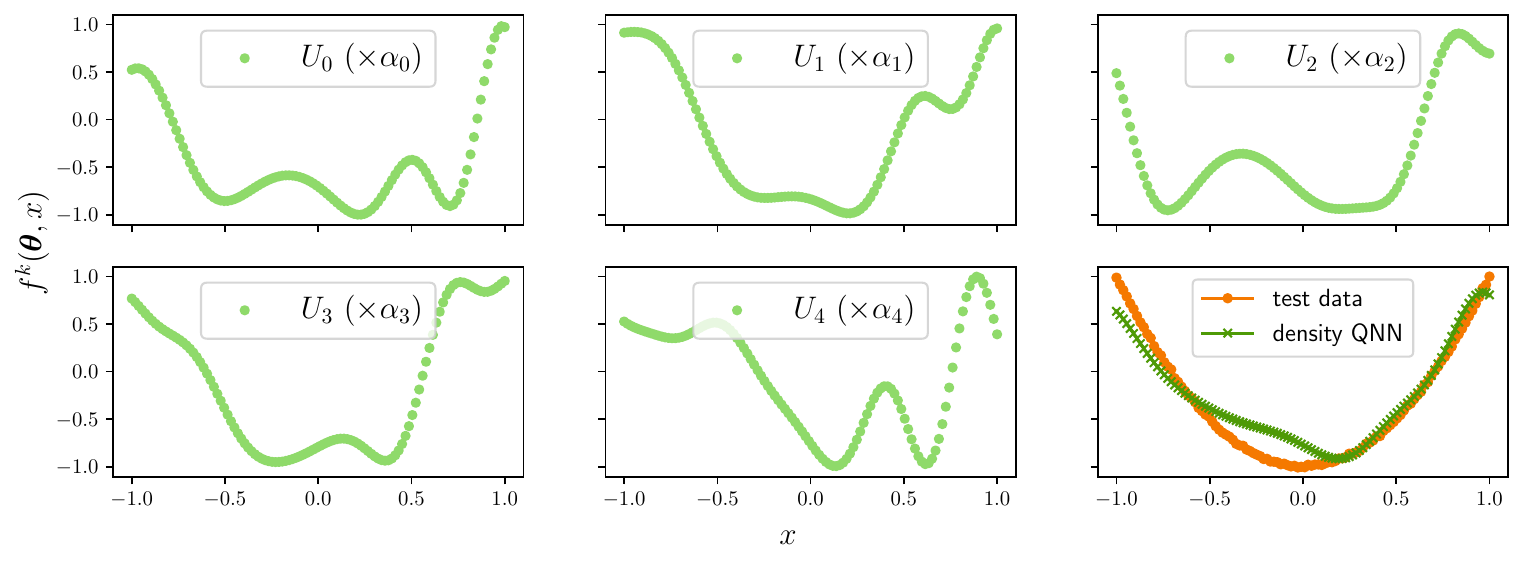}
      \caption{
  \textsf{\textbf{Partial Fourier series learned for each sub-unitary (illustrated in~\figref{fig:data_reuploading_overfitting}a).}\\
  Each panel shows the Fourier output of a single sub-unitary ($U_k(\boldsymbol{\theta})$) on the test datapoints from the second Chebyshev polynomial, $T_2(x)$. The final (bottom right) panel shows the complete model, $f(\boldsymbol{\theta}, \boldsymbol{\alpha}, x)$ with the output of each sub-unitary, $k$, weighted by the corresponding $\alpha_k$. We also plot the true test data.
  }
  }
\label{fig:density_partial_fourier}
\end{figure}

\subsection{Results}
\label{app:overfitting_results}

We begin with $h(x) = T_2(x)$. \figref{subfig:chebyshev_2_num_experts} shows the results. We compare a `vanilla' data reuploading model on a single qubit, i.e. the output function is the following:
\begin{align} \label{eqn:vanilla_reupload}
    \text{Vanilla} &\implies f(\boldsymbol{\theta}, x) = \bra{0}\left(\prod_{\ell=1}^L U_l(\boldsymbol{\theta}, x)\right)^{\dagger}Z\left(\prod_{j=1}^L U_j(\boldsymbol{\theta}, x)\right) \ket{0}, \\ 
    \label{eqn:density_qnn_upload}
    \text{Density QNN} &\implies f(\boldsymbol{\theta}, \boldsymbol{\alpha}, x) = \sum_{k=1}^K \alpha_k \bra{0}\left(\prod_{\ell=1}^{L'} U^k_l(\boldsymbol{\theta}, x)\right)^{\dagger}Z\left(\prod_{j=1}^{L'} U^k_j(\boldsymbol{\theta}, x)\right) \ket{0}\\
    U^{(k)}_l(\boldsymbol{\theta}, x)  &:= R^{\ell}_y(x)R^{\ell}_x(x)R^{\ell}_z(\phi)R^{\delta}_y(x)R^{\ell}_z(\omega), \boldsymbol{\theta}_{\ell} = \{\phi_{\ell}, \delta_{\ell}, \omega_{\ell}\}
\end{align}
Where each layer (reupload) has $3$ trainable parameters. Each panel in the figure contains a different number of sub-unitaries in the density QNN. Here we use $L=50$ reuploading layers so we have $150$ parameters in the vanilla model. As a result of the large depth, the model is more prone to overfitting because of the complexity of the resulting Fourier series. In the density versions we choose a number of reuploads, $L' < L$ such that the number of trainable parameters in the sub-unitaries ($K$) is approximately the same in each case ($\approx 150$). We see that as the number of sub-unitaries increases, the density QNN model generalises to the test data more accurately (MSE decreases as $K$ increases). This is simply because each component partial Fourier series of $f(\boldsymbol{\theta}, \boldsymbol{\alpha}, \boldsymbol{x})$ is less complex than that of the full vanilla model, and hence less prone to overfitting. This further indicates that the density QNN model may be a better choice than a vanilla pure state PQC model, given a fixed budget of parameter allocation.

Next, we test on higher degree Chebyshev polynomials, $T_3(x)$ and $T_4(x)$ in~\figref{fig:chebyshev_3_4_num_experts}. Again the density model is able to fit the test data well and again each combats overfitting by reallocating parameters from an overparameterised vanilla reuploading model. 

Finally, in~\figref{fig:density_partial_fourier} we plot the output of each sub-unitary within a density model, which are combined using the weighting parameters, $\boldsymbol{\alpha}$, in the final model.


\begin{thebibliography}{10}
\providecommand{\url}[1]{#1}
\csname url@samestyle\endcsname
\providecommand{\newblock}{\relax}
\providecommand{\bibinfo}[2]{#2}
\providecommand{\BIBentrySTDinterwordspacing}{\spaceskip=0pt\relax}
\providecommand{\BIBentryALTinterwordstretchfactor}{4}
\providecommand{\BIBentryALTinterwordspacing}{\spaceskip=\fontdimen2\font plus
\BIBentryALTinterwordstretchfactor\fontdimen3\font minus \fontdimen4\font\relax}
\providecommand{\BIBforeignlanguage}[2]{{%
\expandafter\ifx\csname l@#1\endcsname\relax
\typeout{** WARNING: IEEEtran.bst: No hyphenation pattern has been}%
\typeout{** loaded for the language `#1'. Using the pattern for}%
\typeout{** the default language instead.}%
\else
\language=\csname l@#1\endcsname
\fi
#2}}
\providecommand{\BIBdecl}{\relax}
\BIBdecl

\bibitem{johri_nearest_2021}
\BIBentryALTinterwordspacing
S.~Johri, S.~Debnath, A.~Mocherla, A.~Singk, A.~Prakash, J.~Kim, and I.~Kerenidis, ``\BIBforeignlanguage{en}{Nearest centroid classification on a trapped ion quantum computer},'' \emph{\BIBforeignlanguage{en}{npj Quantum Inf}}, vol.~7, no.~1, pp. 1--11, Aug. 2021. [Online]. Available: \url{https://www.nature.com/articles/s41534-021-00456-5}
\BIBentrySTDinterwordspacing

\bibitem{rosenblatt_perceptron_1958}
F.~Rosenblatt, ``The perceptron: {A} probabilistic model for information storage and organization in the brain,'' \emph{Psychological Review}, vol.~65, no.~6, pp. 386--408, 1958.

\bibitem{lecun_deep_2015}
\BIBentryALTinterwordspacing
Y.~LeCun, Y.~Bengio, and G.~Hinton, ``\BIBforeignlanguage{en}{Deep learning},'' \emph{\BIBforeignlanguage{en}{Nature}}, vol. 521, no. 7553, pp. 436--444, May 2015. [Online]. Available: \url{https://www.nature.com/articles/nature14539}
\BIBentrySTDinterwordspacing

\bibitem{bahdanau_neural_2016}
\BIBentryALTinterwordspacing
D.~Bahdanau, K.~Cho, and Y.~Bengio, ``Neural {Machine} {Translation} by {Jointly} {Learning} to {Align} and {Translate},'' May 2016. [Online]. Available: \url{http://arxiv.org/abs/1409.0473}
\BIBentrySTDinterwordspacing

\bibitem{vaswani_attention_2017}
\BIBentryALTinterwordspacing
A.~Vaswani, N.~Shazeer, N.~Parmar, J.~Uszkoreit, L.~Jones, A.~N. Gomez, L.~Kaiser, and I.~Polosukhin, ``Attention is {All} you {Need},'' in \emph{Advances in {Neural} {Information} {Processing} {Systems}}, I.~Guyon, U.~V. Luxburg, S.~Bengio, H.~Wallach, R.~Fergus, S.~Vishwanathan, and R.~Garnett, Eds., vol.~30.\hskip 1em plus 0.5em minus 0.4em\relax Curran Associates, Inc., 2017. [Online]. Available: \url{https://proceedings.neurips.cc/paper_files/paper/2017/file/3f5ee243547dee91fbd053c1c4a845aa-Paper.pdf}
\BIBentrySTDinterwordspacing

\bibitem{silver_mastering_2017}
\BIBentryALTinterwordspacing
D.~Silver, T.~Hubert, J.~Schrittwieser, I.~Antonoglou, M.~Lai, A.~Guez, M.~Lanctot, L.~Sifre, D.~Kumaran, T.~Graepel, T.~Lillicrap, K.~Simonyan, and D.~Hassabis, ``Mastering {Chess} and {Shogi} by {Self}-{Play} with a {General} {Reinforcement} {Learning} {Algorithm},'' Dec. 2017. [Online]. Available: \url{http://arxiv.org/abs/1712.01815}
\BIBentrySTDinterwordspacing

\bibitem{brown_language_2020}
T.~B. Brown, B.~Mann, N.~Ryder, M.~Subbiah, J.~Kaplan, P.~Dhariwal, A.~Neelakantan, P.~Shyam, G.~Sastry, A.~Askell, S.~Agarwal, A.~Herbert-Voss, G.~Krueger, T.~Henighan, R.~Child, A.~Ramesh, D.~M. Ziegler, J.~Wu, C.~Winter, C.~Hesse, M.~Chen, E.~Sigler, M.~Litwin, S.~Gray, B.~Chess, J.~Clark, C.~Berner, S.~McCandlish, A.~Radford, I.~Sutskever, and D.~Amodei, ``Language models are few-shot learners,'' in \emph{Proceedings of the 34th {International} {Conference} on {Neural} {Information} {Processing} {Systems}}, ser. {NIPS} '20.\hskip 1em plus 0.5em minus 0.4em\relax Red Hook, NY, USA: Curran Associates Inc., Dec. 2020, pp. 1877--1901.

\bibitem{ramesh_zero-shot_2021}
\BIBentryALTinterwordspacing
A.~Ramesh, M.~Pavlov, G.~Goh, S.~Gray, C.~Voss, A.~Radford, M.~Chen, and I.~Sutskever, ``\BIBforeignlanguage{en}{Zero-{Shot} {Text}-to-{Image} {Generation}},'' in \emph{\BIBforeignlanguage{en}{Proceedings of the 38th {International} {Conference} on {Machine} {Learning}}}.\hskip 1em plus 0.5em minus 0.4em\relax PMLR, Jul. 2021, pp. 8821--8831. [Online]. Available: \url{https://proceedings.mlr.press/v139/ramesh21a.html}
\BIBentrySTDinterwordspacing

\bibitem{rumelhart_learning_1986}
\BIBentryALTinterwordspacing
D.~E. Rumelhart, G.~E. Hinton, and R.~J. Williams, ``\BIBforeignlanguage{en}{Learning representations by back-propagating errors},'' \emph{\BIBforeignlanguage{en}{Nature}}, vol. 323, no. 6088, pp. 533--536, Oct. 1986. [Online]. Available: \url{https://www.nature.com/articles/323533a0}
\BIBentrySTDinterwordspacing

\bibitem{sivak_real-time_2023}
\BIBentryALTinterwordspacing
V.~V. Sivak, A.~Eickbusch, B.~Royer, S.~Singh, I.~Tsioutsios, S.~Ganjam, A.~Miano, B.~L. Brock, A.~Z. Ding, L.~Frunzio, S.~M. Girvin, R.~J. Schoelkopf, and M.~H. Devoret, ``\BIBforeignlanguage{en}{Real-time quantum error correction beyond break-even},'' \emph{\BIBforeignlanguage{en}{Nature}}, vol. 616, no. 7955, pp. 50--55, Apr. 2023. [Online]. Available: \url{https://www.nature.com/articles/s41586-023-05782-6}
\BIBentrySTDinterwordspacing

\bibitem{acharya_quantum_2024}
\BIBentryALTinterwordspacing
R.~Acharya, L.~Aghababaie-Beni, I.~Aleiner, T.~I. Andersen, M.~Ansmann, F.~Arute, K.~Arya, A.~Asfaw, N.~Astrakhantsev, J.~Atalaya, R.~Babbush, D.~Bacon, B.~Ballard, J.~C. Bardin, J.~Bausch, A.~Bengtsson, A.~Bilmes, S.~Blackwell, S.~Boixo, G.~Bortoli, A.~Bourassa, J.~Bovaird, L.~Brill, M.~Broughton, D.~A. Browne, B.~Buchea, B.~B. Buckley, D.~A. Buell, T.~Burger, B.~Burkett, N.~Bushnell, A.~Cabrera, J.~Campero, H.-S. Chang, Y.~Chen, Z.~Chen, B.~Chiaro, D.~Chik, C.~Chou, J.~Claes, A.~Y. Cleland, J.~Cogan, R.~Collins, P.~Conner, W.~Courtney, A.~L. Crook, B.~Curtin, S.~Das, A.~Davies, L.~D. Lorenzo, D.~M. Debroy, S.~Demura, M.~Devoret, A.~D. Paolo, P.~Donohoe, I.~Drozdov, A.~Dunsworth, C.~Earle, T.~Edlich, A.~Eickbusch, A.~M. Elbag, M.~Elzouka, C.~Erickson, L.~Faoro, E.~Farhi, V.~S. Ferreira, L.~F. Burgos, E.~Forati, A.~G. Fowler, B.~Foxen, S.~Ganjam, G.~Garcia, R.~Gasca, √.~Genois, W.~Giang, C.~Gidney, D.~Gilboa, R.~Gosula, A.~G. Dau, D.~Graumann, A.~Greene, J.~A. Gross, S.~Habegger, J.~Hall, M.~C. Hamilton,
  M.~Hansen, M.~P. Harrigan, S.~D. Harrington, F.~J.~H. Heras, S.~Heslin, P.~Heu, O.~Higgott, G.~Hill, J.~Hilton, G.~Holland, S.~Hong, H.-Y. Huang, A.~Huff, W.~J. Huggins, L.~B. Ioffe, S.~V. Isakov, J.~Iveland, E.~Jeffrey, Z.~Jiang, C.~Jones, S.~Jordan, C.~Joshi, P.~Juhas, D.~Kafri, H.~Kang, A.~H. Karamlou, K.~Kechedzhi, J.~Kelly, T.~Khaire, T.~Khattar, M.~Khezri, S.~Kim, P.~V. Klimov, A.~R. Klots, B.~Kobrin, P.~Kohli, A.~N. Korotkov, F.~Kostritsa, R.~Kothari, B.~Kozlovskii, J.~M. Kreikebaum, V.~D. Kurilovich, N.~Lacroix, D.~Landhuis, T.~Lange-Dei, B.~W. Langley, P.~Laptev, K.-M. Lau, L.~L. Guevel, J.~Ledford, K.~Lee, Y.~D. Lensky, S.~Leon, B.~J. Lester, W.~Y. Li, Y.~Li, A.~T. Lill, W.~Liu, W.~P. Livingston, A.~Locharla, E.~Lucero, D.~Lundahl, A.~Lunt, S.~Madhuk, F.~D. Malone, A.~Maloney, S.~Mandr√°, L.~S. Martin, S.~Martin, O.~Martin, C.~Maxfield, J.~R. McClean, M.~McEwen, S.~Meeks, A.~Megrant, X.~Mi, K.~C. Miao, A.~Mieszala, R.~Molavi, S.~Molina, S.~Montazeri, A.~Morvan, R.~Movassagh, W.~Mruczkiewicz,
  O.~Naaman, M.~Neeley, C.~Neill, A.~Nersisyan, H.~Neven, M.~Newman, J.~H. Ng, A.~Nguyen, M.~Nguyen, C.-H. Ni, T.~E. O'Brien, W.~D. Oliver, A.~Opremcak, K.~Ottosson, A.~Petukhov, A.~Pizzuto, J.~Platt, R.~Potter, O.~Pritchard, L.~P. Pryadko, C.~Quintana, G.~Ramachandran, M.~J. Reagor, D.~M. Rhodes, G.~Roberts, E.~Rosenberg, E.~Rosenfeld, P.~Roushan, N.~C. Rubin, N.~Saei, D.~Sank, K.~Sankaragomathi, K.~J. Satzinger, H.~F. Schurkus, C.~Schuster, A.~W. Senior, M.~J. Shearn, A.~Shorter, N.~Shutty, V.~Shvarts, S.~Singh, V.~Sivak, J.~Skruzny, S.~Small, V.~Smelyanskiy, W.~C. Smith, R.~D. Somma, S.~Springer, G.~Sterling, D.~Strain, J.~Suchard, A.~Szasz, A.~Sztein, D.~Thor, A.~Torres, M.~M. Torunbalci, A.~Vaishnav, J.~Vargas, S.~Vdovichev, G.~Vidal, B.~Villalonga, C.~V. Heidweiller, S.~Waltman, S.~X. Wang, B.~Ware, K.~Weber, T.~White, K.~Wong, B.~W.~K. Woo, C.~Xing, Z.~J. Yao, P.~Yeh, B.~Ying, J.~Yoo, N.~Yosri, G.~Young, A.~Zalcman, Y.~Zhang, N.~Zhu, and N.~Zobrist, ``Quantum error correction below the surface code
  threshold,'' Aug. 2024. [Online]. Available: \url{http://arxiv.org/abs/2408.13687}
\BIBentrySTDinterwordspacing

\bibitem{silva_demonstration_2024}
\BIBentryALTinterwordspacing
M.~P.~d. Silva, C.~Ryan-Anderson, J.~M. Bello-Rivas, A.~Chernoguzov, J.~M. Dreiling, C.~Foltz, F.~Frachon, J.~P. Gaebler, T.~M. Gatterman, L.~Grans-Samuelsson, D.~Hayes, N.~Hewitt, J.~Johansen, D.~Lucchetti, M.~Mills, S.~A. Moses, B.~Neyenhuis, A.~Paz, J.~Pino, P.~Siegfried, J.~Strabley, A.~Sundaram, D.~Tom, S.~J. Wernli, M.~Zanner, R.~P. Stutz, and K.~M. Svore, ``Demonstration of logical qubits and repeated error correction with better-than-physical error rates,'' Apr. 2024. [Online]. Available: \url{http://arxiv.org/abs/2404.02280}
\BIBentrySTDinterwordspacing

\bibitem{benedetti_parameterized_2019}
\BIBentryALTinterwordspacing
M.~Benedetti, E.~Lloyd, S.~Sack, and M.~Fiorentini, ``\BIBforeignlanguage{en}{Parameterized quantum circuits as machine learning models},'' \emph{\BIBforeignlanguage{en}{Quantum Sci. Technol.}}, vol.~4, no.~4, p. 043001, Nov. 2019. [Online]. Available: \url{https://dx.doi.org/10.1088/2058-9565/ab4eb5}
\BIBentrySTDinterwordspacing

\bibitem{bharti_noisy_2022}
\BIBentryALTinterwordspacing
K.~Bharti, A.~Cervera-Lierta, T.~H. Kyaw, T.~Haug, S.~Alperin-Lea, A.~Anand, M.~Degroote, H.~Heimonen, J.~S. Kottmann, T.~Menke, W.-K. Mok, S.~Sim, L.-C. Kwek, and A.~Aspuru-Guzik, ``Noisy intermediate-scale quantum algorithms,'' \emph{Rev. Mod. Phys.}, vol.~94, no.~1, p. 015004, Feb. 2022. [Online]. Available: \url{https://link.aps.org/doi/10.1103/RevModPhys.94.015004}
\BIBentrySTDinterwordspacing

\bibitem{cerezo_variational_2021}
\BIBentryALTinterwordspacing
M.~Cerezo, A.~Arrasmith, R.~Babbush, S.~C. Benjamin, S.~Endo, K.~Fujii, J.~R. McClean, K.~Mitarai, X.~Yuan, L.~Cincio, and P.~J. Coles, ``\BIBforeignlanguage{en}{Variational quantum algorithms},'' \emph{\BIBforeignlanguage{en}{Nat Rev Phys}}, vol.~3, no.~9, pp. 625--644, Sep. 2021. [Online]. Available: \url{https://www.nature.com/articles/s42254-021-00348-9}
\BIBentrySTDinterwordspacing

\bibitem{cerezo_challenges_2022}
\BIBentryALTinterwordspacing
M.~Cerezo, G.~Verdon, H.-Y. Huang, L.~Cincio, and P.~J. Coles, ``\BIBforeignlanguage{en}{Challenges and opportunities in quantum machine learning},'' \emph{\BIBforeignlanguage{en}{Nat Comput Sci}}, vol.~2, no.~9, pp. 567--576, Sep. 2022. [Online]. Available: \url{https://www.nature.com/articles/s43588-022-00311-3}
\BIBentrySTDinterwordspacing

\bibitem{abbas_quantum_2023}
\BIBentryALTinterwordspacing
A.~Abbas, R.~King, H.-Y. Huang, W.~J. Huggins, R.~Movassagh, D.~Gilboa, and J.~R. McClean, ``\BIBforeignlanguage{en}{On quantum backpropagation, information reuse, and cheating measurement collapse},'' \emph{\BIBforeignlanguage{en}{arXiv.org}}, May 2023. [Online]. Available: \url{https://arxiv.org/abs/2305.13362v1}
\BIBentrySTDinterwordspacing

\bibitem{mitarai_quantum_2018}
\BIBentryALTinterwordspacing
K.~Mitarai, M.~Negoro, M.~Kitagawa, and K.~Fujii, ``Quantum circuit learning,'' \emph{Phys. Rev. A}, vol.~98, no.~3, p. 032309, Sep. 2018. [Online]. Available: \url{https://link.aps.org/doi/10.1103/PhysRevA.98.032309}
\BIBentrySTDinterwordspacing

\bibitem{crooks_gradients_2019}
\BIBentryALTinterwordspacing
G.~E. Crooks, ``Gradients of parameterized quantum gates using the parameter-shift rule and gate decomposition,'' May 2019. [Online]. Available: \url{http://arxiv.org/abs/1905.13311}
\BIBentrySTDinterwordspacing

\bibitem{vidal_calculus_2018}
\BIBentryALTinterwordspacing
J.~G. Vidal and D.~O. Theis, ``Calculus on parameterized quantum circuits,'' Dec. 2018. [Online]. Available: \url{http://arxiv.org/abs/1812.06323}
\BIBentrySTDinterwordspacing

\bibitem{schuld_evaluating_2019}
\BIBentryALTinterwordspacing
M.~Schuld, V.~Bergholm, C.~Gogolin, J.~Izaac, and N.~Killoran, ``Evaluating analytic gradients on quantum hardware,'' \emph{Phys. Rev. A}, vol.~99, no.~3, p. 032331, Mar. 2019. [Online]. Available: \url{https://link.aps.org/doi/10.1103/PhysRevA.99.032331}
\BIBentrySTDinterwordspacing

\bibitem{sweke_stochastic_2020}
\BIBentryALTinterwordspacing
R.~Sweke, F.~Wilde, J.~Meyer, M.~Schuld, P.~K. Faehrmann, B.~Meynard-Piganeau, and J.~Eisert, ``\BIBforeignlanguage{en-GB}{Stochastic gradient descent for hybrid quantum-classical optimization},'' \emph{\BIBforeignlanguage{en-GB}{Quantum}}, vol.~4, p. 314, Aug. 2020. [Online]. Available: \url{https://quantum-journal.org/papers/q-2020-08-31-314/}
\BIBentrySTDinterwordspacing

\bibitem{kyriienko_generalized_2021}
\BIBentryALTinterwordspacing
O.~Kyriienko and V.~E. Elfving, ``Generalized quantum circuit differentiation rules,'' \emph{Phys. Rev. A}, vol. 104, no.~5, p. 052417, Nov. 2021. [Online]. Available: \url{https://link.aps.org/doi/10.1103/PhysRevA.104.052417}
\BIBentrySTDinterwordspacing

\bibitem{mcclean_barren_2018}
\BIBentryALTinterwordspacing
J.~R. McClean, S.~Boixo, V.~N. Smelyanskiy, R.~Babbush, and H.~Neven, ``\BIBforeignlanguage{en}{Barren plateaus in quantum neural network training landscapes},'' \emph{\BIBforeignlanguage{en}{Nat Commun}}, vol.~9, no.~1, p. 4812, Nov. 2018. [Online]. Available: \url{https://www.nature.com/articles/s41467-018-07090-4}
\BIBentrySTDinterwordspacing

\bibitem{landman_classically_2023}
\BIBentryALTinterwordspacing
J.~Landman, S.~Thabet, C.~Dalyac, H.~Mhiri, and E.~Kashefi, ``Classically {Approximating} {Variational} {Quantum} {Machine} {Learning} with {Random} {Fourier} {Features},'' in \emph{The {Eleventh} {International} {Conference} on {Learning} {Representations}}, 2023. [Online]. Available: \url{https://openreview.net/forum?id=ymFhZxw70uz}
\BIBentrySTDinterwordspacing

\bibitem{rudolph_classical_2023}
\BIBentryALTinterwordspacing
M.~S. Rudolph, E.~Fontana, Z.~Holmes, and L.~Cincio, ``Classical surrogate simulation of quantum systems with {LOWESA},'' Aug. 2023. [Online]. Available: \url{http://arxiv.org/abs/2308.09109}
\BIBentrySTDinterwordspacing

\bibitem{bermejo_quantum_2024}
\BIBentryALTinterwordspacing
P.~Bermejo, P.~Braccia, M.~S. Rudolph, Z.~Holmes, L.~Cincio, and M.~Cerezo, ``Quantum {Convolutional} {Neural} {Networks} are ({Effectively}) {Classically} {Simulable},'' Aug. 2024. [Online]. Available: \url{http://arxiv.org/abs/2408.12739}
\BIBentrySTDinterwordspacing

\bibitem{cerezo_does_2024}
\BIBentryALTinterwordspacing
M.~Cerezo, M.~Larocca, D.~Garc√≠a-Mart√≠n, N.~L. Diaz, P.~Braccia, E.~Fontana, M.~S. Rudolph, P.~Bermejo, A.~Ijaz, S.~Thanasilp, E.~R. Anschuetz, and Z.~Holmes, ``Does provable absence of barren plateaus imply classical simulability? {Or}, why we need to rethink variational quantum computing,'' Mar. 2024. [Online]. Available: \url{http://arxiv.org/abs/2312.09121}
\BIBentrySTDinterwordspacing

\bibitem{holmes_connecting_2022}
\BIBentryALTinterwordspacing
Z.~Holmes, K.~Sharma, M.~Cerezo, and P.~J. Coles, ``Connecting {Ansatz} {Expressibility} to {Gradient} {Magnitudes} and {Barren} {Plateaus},'' \emph{PRX Quantum}, vol.~3, no.~1, p. 010313, Jan. 2022. [Online]. Available: \url{https://link.aps.org/doi/10.1103/PRXQuantum.3.010313}
\BIBentrySTDinterwordspacing

\bibitem{schuld_supervised_2021}
\BIBentryALTinterwordspacing
M.~Schuld, ``Supervised quantum machine learning models are kernel methods,'' Apr. 2021. [Online]. Available: \url{http://arxiv.org/abs/2101.11020}
\BIBentrySTDinterwordspacing

\bibitem{jerbi_quantum_2023}
\BIBentryALTinterwordspacing
S.~Jerbi, L.~J. Fiderer, H.~Poulsen~Nautrup, J.~M. K√ºbler, H.~J. Briegel, and V.~Dunjko, ``\BIBforeignlanguage{en}{Quantum machine learning beyond kernel methods},'' \emph{\BIBforeignlanguage{en}{Nat Commun}}, vol.~14, no.~1, p. 517, Jan. 2023. [Online]. Available: \url{https://www.nature.com/articles/s41467-023-36159-y}
\BIBentrySTDinterwordspacing

\bibitem{jerbi_shadows_2024}
\BIBentryALTinterwordspacing
S.~Jerbi, C.~Gyurik, S.~C. Marshall, R.~Molteni, and V.~Dunjko, ``\BIBforeignlanguage{en}{Shadows of quantum machine learning},'' \emph{\BIBforeignlanguage{en}{Nat Commun}}, vol.~15, no.~1, p. 5676, Jul. 2024. [Online]. Available: \url{https://www.nature.com/articles/s41467-024-49877-8}
\BIBentrySTDinterwordspacing

\bibitem{huang_predicting_2020}
\BIBentryALTinterwordspacing
H.-Y. Huang, R.~Kueng, and J.~Preskill, ``\BIBforeignlanguage{en}{Predicting many properties of a quantum system from very few measurements},'' \emph{\BIBforeignlanguage{en}{Nat. Phys.}}, vol.~16, no.~10, pp. 1050--1057, Oct. 2020. [Online]. Available: \url{https://www.nature.com/articles/s41567-020-0932-7}
\BIBentrySTDinterwordspacing

\bibitem{huang_post-variational_2023}
\BIBentryALTinterwordspacing
P.-W. Huang and P.~Rebentrost, ``Post-variational quantum neural networks,'' Jul. 2023. [Online]. Available: \url{http://arxiv.org/abs/2307.10560}
\BIBentrySTDinterwordspacing

\bibitem{huang_near-term_2021}
\BIBentryALTinterwordspacing
H.-Y. Huang, K.~Bharti, and P.~Rebentrost, ``\BIBforeignlanguage{en}{Near-term quantum algorithms for linear systems of equations with regression loss functions},'' \emph{\BIBforeignlanguage{en}{New J. Phys.}}, vol.~23, no.~11, p. 113021, Nov. 2021. [Online]. Available: \url{https://dx.doi.org/10.1088/1367-2630/ac325f}
\BIBentrySTDinterwordspacing

\bibitem{heredge_non-unitary_2024}
\BIBentryALTinterwordspacing
J.~Heredge, M.~West, L.~Hollenberg, and M.~Sevior, ``Non-{Unitary} {Quantum} {Machine} {Learning},'' May 2024. [Online]. Available: \url{http://arxiv.org/abs/2405.17388}
\BIBentrySTDinterwordspacing

\bibitem{brassard_quantum_2002}
\BIBentryALTinterwordspacing
G.~Brassard, P.~Hoyer, M.~Mosca, and A.~Tapp, ``Quantum {Amplitude} {Amplification} and {Estimation},'' 2002. [Online]. Available: \url{http://arxiv.org/abs/quant-ph/0005055}
\BIBentrySTDinterwordspacing

\bibitem{huggins_nearly_2022}
\BIBentryALTinterwordspacing
W.~J. Huggins, K.~Wan, J.~McClean, T.~E. O'Brien, N.~Wiebe, and R.~Babbush, ``Nearly {Optimal} {Quantum} {Algorithm} for {Estimating} {Multiple} {Expectation} {Values},'' \emph{Phys. Rev. Lett.}, vol. 129, no.~24, p. 240501, Dec. 2022. [Online]. Available: \url{https://link.aps.org/doi/10.1103/PhysRevLett.129.240501}
\BIBentrySTDinterwordspacing

\bibitem{bowles_backpropagation_2023}
\BIBentryALTinterwordspacing
J.~Bowles, D.~Wierichs, and C.-Y. Park, ``Backpropagation scaling in parameterised quantum circuits,'' Jun. 2023. [Online]. Available: \url{http://arxiv.org/abs/2306.14962}
\BIBentrySTDinterwordspacing

\bibitem{kandala_hardware-efficient_2017}
\BIBentryALTinterwordspacing
A.~Kandala, A.~Mezzacapo, K.~Temme, M.~Takita, M.~Brink, J.~M. Chow, and J.~M. Gambetta, ``\BIBforeignlanguage{en}{Hardware-efficient variational quantum eigensolver for small molecules and quantum magnets},'' \emph{\BIBforeignlanguage{en}{Nature}}, vol. 549, no. 7671, pp. 242--246, Sep. 2017. [Online]. Available: \url{https://www.nature.com/articles/nature23879}
\BIBentrySTDinterwordspacing

\bibitem{landman_quantum_2022}
\BIBentryALTinterwordspacing
J.~Landman, N.~Mathur, Y.~Y. Li, M.~Strahm, S.~Kazdaghli, A.~Prakash, and I.~Kerenidis, ``\BIBforeignlanguage{en-GB}{Quantum {Methods} for {Neural} {Networks} and {Application} to {Medical} {Image} {Classification}},'' \emph{\BIBforeignlanguage{en-GB}{Quantum}}, vol.~6, p. 881, Dec. 2022. [Online]. Available: \url{https://quantum-journal.org/papers/q-2022-12-22-881/}
\BIBentrySTDinterwordspacing

\bibitem{cherrat_quantum_2022}
\BIBentryALTinterwordspacing
E.~A. Cherrat, I.~Kerenidis, N.~Mathur, J.~Landman, M.~Strahm, and Y.~Y. Li, ``Quantum {Vision} {Transformers},'' Sep. 2022. [Online]. Available: \url{http://arxiv.org/abs/2209.08167}
\BIBentrySTDinterwordspacing

\bibitem{hamze_parallelized_2021}
\BIBentryALTinterwordspacing
F.~Hamze, ``Parallelized {Computation} and {Backpropagation} {Under} {Angle}-{Parametrized} {Orthogonal} {Matrices},'' May 2021. [Online]. Available: \url{http://arxiv.org/abs/2106.00003}
\BIBentrySTDinterwordspacing

\bibitem{hastings_turning_2017}
M.~B. Hastings, ``Turning gate synthesis errors into incoherent errors,'' \emph{Quantum Info. Comput.}, vol.~17, no. 5-6, pp. 488--494, Mar. 2017.

\bibitem{campbell_shorter_2017}
\BIBentryALTinterwordspacing
E.~Campbell, ``Shorter gate sequences for quantum computing by mixing unitaries,'' \emph{Phys. Rev. A}, vol.~95, no.~4, p. 042306, Apr. 2017. [Online]. Available: \url{https://link.aps.org/doi/10.1103/PhysRevA.95.042306}
\BIBentrySTDinterwordspacing

\bibitem{khatri_quantum-assisted_2019}
\BIBentryALTinterwordspacing
S.~Khatri, R.~LaRose, A.~Poremba, L.~Cincio, A.~T. Sornborger, and P.~J. Coles, ``\BIBforeignlanguage{en-GB}{Quantum-assisted quantum compiling},'' \emph{\BIBforeignlanguage{en-GB}{Quantum}}, vol.~3, p. 140, May 2019. [Online]. Available: \url{https://quantum-journal.org/papers/q-2019-05-13-140/}
\BIBentrySTDinterwordspacing

\bibitem{sharma_noise_2020}
\BIBentryALTinterwordspacing
K.~Sharma, S.~Khatri, M.~Cerezo, and P.~J. Coles, ``\BIBforeignlanguage{en}{Noise resilience of variational quantum compiling},'' \emph{\BIBforeignlanguage{en}{New J. Phys.}}, vol.~22, no.~4, p. 043006, Apr. 2020. [Online]. Available: \url{https://dx.doi.org/10.1088/1367-2630/ab784c}
\BIBentrySTDinterwordspacing

\bibitem{srivastava_dropout_2014}
N.~Srivastava, G.~Hinton, A.~Krizhevsky, I.~Sutskever, and R.~Salakhutdinov, ``Dropout: a simple way to prevent neural networks from overfitting,'' \emph{J. Mach. Learn. Res.}, vol.~15, no.~1, pp. 1929--1958, Jan. 2014.

\bibitem{baldi_understanding_2013}
\BIBentryALTinterwordspacing
P.~Baldi and P.~J. Sadowski, ``Understanding {Dropout},'' in \emph{Advances in {Neural} {Information} {Processing} {Systems}}, C.~J. Burges, L.~Bottou, M.~Welling, Z.~Ghahramani, and K.~Q. Weinberger, Eds., vol.~26.\hskip 1em plus 0.5em minus 0.4em\relax Curran Associates, Inc., 2013. [Online]. Available: \url{https://proceedings.neurips.cc/paper_files/paper/2013/file/71f6278d140af599e06ad9bf1ba03cb0-Paper.pdf}
\BIBentrySTDinterwordspacing

\bibitem{nguyen_theory_2022}
\BIBentryALTinterwordspacing
Q.~T. Nguyen, L.~Schatzki, P.~Braccia, M.~Ragone, P.~J. Coles, F.~Sauvage, M.~Larocca, and M.~Cerezo, ``Theory for {Equivariant} {Quantum} {Neural} {Networks},'' Oct. 2022. [Online]. Available: \url{http://arxiv.org/abs/2210.08566}
\BIBentrySTDinterwordspacing

\bibitem{cong_quantum_2019}
\BIBentryALTinterwordspacing
I.~Cong, S.~Choi, and M.~D. Lukin, ``\BIBforeignlanguage{en}{Quantum convolutional neural networks},'' \emph{\BIBforeignlanguage{en}{Nat. Phys.}}, vol.~15, no.~12, pp. 1273--1278, Dec. 2019. [Online]. Available: \url{https://www.nature.com/articles/s41567-019-0648-8}
\BIBentrySTDinterwordspacing

\bibitem{jacobs_adaptive_1991}
\BIBentryALTinterwordspacing
R.~A. Jacobs, M.~I. Jordan, S.~J. Nowlan, and G.~E. Hinton, ``Adaptive {Mixtures} of {Local} {Experts},'' \emph{Neural Computation}, vol.~3, no.~1, pp. 79--87, Mar. 1991. [Online]. Available: \url{https://ieeexplore.ieee.org/document/6797059}
\BIBentrySTDinterwordspacing

\bibitem{jordan_hierarchical_1993}
M.~Jordan and R.~Jacobs, ``Hierarchical mixtures of experts and the {EM} algorithm,'' in \emph{Proceedings of 1993 {International} {Conference} on {Neural} {Networks} ({IJCNN}-93-{Nagoya}, {Japan})}, vol.~2, 1993, pp. 1339--1344 vol.2.

\bibitem{jiang_mixtral_2024}
\BIBentryALTinterwordspacing
A.~Q. Jiang, A.~Sablayrolles, A.~Roux, A.~Mensch, B.~Savary, C.~Bamford, D.~S. Chaplot, D.~d.~l. Casas, E.~B. Hanna, F.~Bressand, G.~Lengyel, G.~Bour, G.~Lample, L.~R. Lavaud, L.~Saulnier, M.-A. Lachaux, P.~Stock, S.~Subramanian, S.~Yang, S.~Antoniak, T.~L. Scao, T.~Gervet, T.~Lavril, T.~Wang, T.~Lacroix, and W.~E. Sayed, ``Mixtral of {Experts},'' Jan. 2024. [Online]. Available: \url{http://arxiv.org/abs/2401.04088}
\BIBentrySTDinterwordspacing

\bibitem{eigen_learning_2014}
\BIBentryALTinterwordspacing
D.~Eigen, M.~Ranzato, and I.~Sutskever, ``Learning {Factored} {Representations} in a {Deep} {Mixture} of {Experts},'' Mar. 2014. [Online]. Available: \url{http://arxiv.org/abs/1312.4314}
\BIBentrySTDinterwordspacing

\bibitem{shazeer_outrageously_2017}
\BIBentryALTinterwordspacing
N.~Shazeer, A.~Mirhoseini, K.~Maziarz, A.~Davis, Q.~Le, G.~Hinton, and J.~Dean, ``Outrageously {Large} {Neural} {Networks}: {The} {Sparsely}-{Gated} {Mixture}-of-{Experts} {Layer},'' Jan. 2017. [Online]. Available: \url{http://arxiv.org/abs/1701.06538}
\BIBentrySTDinterwordspacing

\bibitem{cao_support_2003}
\BIBentryALTinterwordspacing
L.~Cao, ``Support vector machines experts for time series forecasting,'' \emph{Neurocomputing}, vol.~51, pp. 321--339, Apr. 2003. [Online]. Available: \url{https://www.sciencedirect.com/science/article/pii/S0925231202005775}
\BIBentrySTDinterwordspacing

\bibitem{lima_hybridizing_2007}
\BIBentryALTinterwordspacing
C.~A.~M. Lima, A.~L.~V. Coelho, and F.~J. Von~Zuben, ``Hybridizing mixtures of experts with support vector machines: {Investigation} into nonlinear dynamic systems identification,'' \emph{Information Sciences}, vol. 177, no.~10, pp. 2049--2074, May 2007. [Online]. Available: \url{https://www.sciencedirect.com/science/article/pii/S0020025507000382}
\BIBentrySTDinterwordspacing

\bibitem{perez-salinas_data_2020}
\BIBentryALTinterwordspacing
A.~P√©rez-Salinas, A.~Cervera-Lierta, E.~Gil-Fuster, and J.~I. Latorre, ``\BIBforeignlanguage{en-GB}{Data re-uploading for a universal quantum classifier},'' \emph{\BIBforeignlanguage{en-GB}{Quantum}}, vol.~4, p. 226, Feb. 2020. [Online]. Available: \url{https://quantum-journal.org/papers/q-2020-02-06-226/}
\BIBentrySTDinterwordspacing

\bibitem{larocca_theory_2023}
\BIBentryALTinterwordspacing
M.~Larocca, N.~Ju, D.~Garc√≠a-Mart√≠n, P.~J. Coles, and M.~Cerezo, ``\BIBforeignlanguage{en}{Theory of overparametrization in quantum neural networks},'' \emph{\BIBforeignlanguage{en}{Nat Comput Sci}}, vol.~3, no.~6, pp. 542--551, Jun. 2023. [Online]. Available: \url{https://www.nature.com/articles/s43588-023-00467-6}
\BIBentrySTDinterwordspacing

\bibitem{peters_generalization_2023}
\BIBentryALTinterwordspacing
E.~Peters and M.~Schuld, ``\BIBforeignlanguage{en-GB}{Generalization despite overfitting in quantum machine learning models},'' \emph{\BIBforeignlanguage{en-GB}{Quantum}}, vol.~7, p. 1210, Dec. 2023. [Online]. Available: \url{https://quantum-journal.org/papers/q-2023-12-20-1210/}
\BIBentrySTDinterwordspacing

\bibitem{lee-thorp_sparse_2022}
\BIBentryALTinterwordspacing
J.~Lee-Thorp and J.~Ainslie, ``Sparse {Mixers}: {Combining} {MoE} and {Mixing} to build a more efficient {BERT},'' in \emph{Findings of the {Association} for {Computational} {Linguistics}: {EMNLP} 2022}, Y.~Goldberg, Z.~Kozareva, and Y.~Zhang, Eds.\hskip 1em plus 0.5em minus 0.4em\relax Abu Dhabi, United Arab Emirates: Association for Computational Linguistics, Dec. 2022, pp. 58--75. [Online]. Available: \url{https://aclanthology.org/2022.findings-emnlp.5/}
\BIBentrySTDinterwordspacing

\bibitem{riquelme_scaling_2021}
\BIBentryALTinterwordspacing
C.~Riquelme, J.~Puigcerver, B.~Mustafa, M.~Neumann, R.~Jenatton, A.~Susano~Pinto, D.~Keysers, and N.~Houlsby, ``Scaling {Vision} with {Sparse} {Mixture} of {Experts},'' in \emph{Advances in {Neural} {Information} {Processing} {Systems}}, M.~Ranzato, A.~Beygelzimer, Y.~Dauphin, P.~S. Liang, and J.~W. Vaughan, Eds., vol.~34.\hskip 1em plus 0.5em minus 0.4em\relax Curran Associates, Inc., 2021, pp. 8583--8595. [Online]. Available: \url{https://proceedings.neurips.cc/paper_files/paper/2021/file/48237d9f2dea8c74c2a72126cf63d933-Paper.pdf}
\BIBentrySTDinterwordspacing

\bibitem{chi_representation_2022}
\BIBentryALTinterwordspacing
Z.~Chi, L.~Dong, S.~Huang, D.~Dai, S.~Ma, B.~Patra, S.~Singhal, P.~Bajaj, X.~Song, X.-L. Mao, H.~Huang, and F.~Wei, ``On the {Representation} {Collapse} of {Sparse} {Mixture} of {Experts},'' in \emph{Advances in {Neural} {Information} {Processing} {Systems}}, A.~H. Oh, A.~Agarwal, D.~Belgrave, and K.~Cho, Eds., 2022. [Online]. Available: \url{https://openreview.net/forum?id=mWaYC6CZf5}
\BIBentrySTDinterwordspacing

\bibitem{larocca_diagnosing_2022}
\BIBentryALTinterwordspacing
M.~Larocca, P.~Czarnik, K.~Sharma, G.~Muraleedharan, P.~J. Coles, and M.~Cerezo, ``\BIBforeignlanguage{en-GB}{Diagnosing {Barren} {Plateaus} with {Tools} from {Quantum} {Optimal} {Control}},'' \emph{\BIBforeignlanguage{en-GB}{Quantum}}, vol.~6, p. 824, Sep. 2022. [Online]. Available: \url{https://quantum-journal.org/papers/q-2022-09-29-824/}
\BIBentrySTDinterwordspacing

\bibitem{cherrat_quantum_2023}
\BIBentryALTinterwordspacing
E.~A. Cherrat, S.~Raj, I.~Kerenidis, A.~Shekhar, B.~Wood, J.~Dee, S.~Chakrabarti, R.~Chen, D.~Herman, S.~Hu, P.~Minssen, R.~Shaydulin, Y.~Sun, R.~Yalovetzky, and M.~Pistoia, ``\BIBforeignlanguage{en-GB}{Quantum {Deep} {Hedging}},'' \emph{\BIBforeignlanguage{en-GB}{Quantum}}, vol.~7, p. 1191, Nov. 2023. [Online]. Available: \url{https://quantum-journal.org/papers/q-2023-11-29-1191/}
\BIBentrySTDinterwordspacing

\bibitem{monbroussou_trainability_2023}
\BIBentryALTinterwordspacing
L.~Monbroussou, J.~Landman, A.~B. Grilo, R.~Kukla, and E.~Kashefi, ``Trainability and {Expressivity} of {Hamming}-{Weight} {Preserving} {Quantum} {Circuits} for {Machine} {Learning},'' Sep. 2023. [Online]. Available: \url{http://arxiv.org/abs/2309.15547}
\BIBentrySTDinterwordspacing

\bibitem{kiani_projunn_2022}
\BIBentryALTinterwordspacing
B.~Kiani, R.~Balestriero, Y.~LeCun, and S.~Lloyd, ``{projUNN}: efficient method for training deep networks with unitary matrices,'' Oct. 2022. [Online]. Available: \url{http://arxiv.org/abs/2203.05483}
\BIBentrySTDinterwordspacing

\bibitem{schuld_effect_2021}
\BIBentryALTinterwordspacing
M.~Schuld, R.~Sweke, and J.~J. Meyer, ``Effect of data encoding on the expressive power of variational quantum-machine-learning models,'' \emph{Phys. Rev. A}, vol. 103, no.~3, p. 032430, Mar. 2021. [Online]. Available: \url{https://link.aps.org/doi/10.1103/PhysRevA.103.032430}
\BIBentrySTDinterwordspacing

\bibitem{mhiri_constrained_2024}
\BIBentryALTinterwordspacing
H.~Mhiri, L.~Monbroussou, M.~Herrero-Gonzalez, S.~Thabet, E.~Kashefi, and J.~Landman, ``Constrained and {Vanishing} {Expressivity} of {Quantum} {Fourier} {Models},'' Mar. 2024. [Online]. Available: \url{http://arxiv.org/abs/2403.09417}
\BIBentrySTDinterwordspacing

\bibitem{bowles_josephbowlesbackprop_scaling_2023}
\BIBentryALTinterwordspacing
J.~Bowles, ``josephbowles/backprop\_scaling,'' Jun. 2023. [Online]. Available: \url{https://github.com/josephbowles/backprop_scaling}
\BIBentrySTDinterwordspacing

\bibitem{bergholm_pennylane_2022}
\BIBentryALTinterwordspacing
V.~Bergholm, J.~Izaac, M.~Schuld, C.~Gogolin, S.~Ahmed, V.~Ajith, M.~S. Alam, G.~Alonso-Linaje, B.~AkashNarayanan, A.~Asadi, J.~M. Arrazola, U.~Azad, S.~Banning, C.~Blank, T.~R. Bromley, B.~A. Cordier, J.~Ceroni, A.~Delgado, O.~Di~Matteo, A.~Dusko, T.~Garg, D.~Guala, A.~Hayes, R.~Hill, A.~Ijaz, T.~Isacsson, D.~Ittah, S.~Jahangiri, P.~Jain, E.~Jiang, A.~Khandelwal, K.~Kottmann, R.~A. Lang, C.~Lee, T.~Loke, A.~Lowe, K.~McKiernan, J.~J. Meyer, J.~A. Monta√±ez-Barrera, R.~Moyard, Z.~Niu, L.~J. O'Riordan, S.~Oud, A.~Panigrahi, C.-Y. Park, D.~Polatajko, N.~Quesada, C.~Roberts, N.~S√°, I.~Schoch, B.~Shi, S.~Shu, S.~Sim, A.~Singh, I.~Strandberg, J.~Soni, A.~Sz√°va, S.~Thabet, R.~A. Vargas-Hern√°ndez, T.~Vincent, N.~Vitucci, M.~Weber, D.~Wierichs, R.~Wiersema, M.~Willmann, V.~Wong, S.~Zhang, and N.~Killoran, ``{PennyLane}: {Automatic} differentiation of hybrid quantum-classical computations,'' Jul. 2022. [Online]. Available: \url{http://arxiv.org/abs/1811.04968}
\BIBentrySTDinterwordspacing

\bibitem{bradbury_jax_2018}
\BIBentryALTinterwordspacing
J.~Bradbury, R.~Frostig, P.~Hawkins, M.~J. Johnson, C.~Leary, D.~Maclaurin, G.~Necula, A.~Paszke, J.~VanderPlas, S.~Wanderman-Milne, and Q.~Zhang, ``{JAX}: composable transformations of {Python}+{NumPy} programs,'' 2018. [Online]. Available: \url{http://github.com/google/jax}
\BIBentrySTDinterwordspacing

\bibitem{caro_generalization_2022}
\BIBentryALTinterwordspacing
M.~C. Caro, H.-Y. Huang, M.~Cerezo, K.~Sharma, A.~Sornborger, L.~Cincio, and P.~J. Coles, ``\BIBforeignlanguage{en}{Generalization in quantum machine learning from few training data},'' \emph{\BIBforeignlanguage{en}{Nat Commun}}, vol.~13, no.~1, p. 4919, Aug. 2022. [Online]. Available: \url{https://www.nature.com/articles/s41467-022-32550-3}
\BIBentrySTDinterwordspacing

\bibitem{akiba_optuna_2019}
T.~Akiba, S.~Sano, T.~Yanase, T.~Ohta, and M.~Koyama, ``Optuna: {A} {Next}-{Generation} {Hyperparameter} {Optimization} {Framework},'' in \emph{The 25th {ACM} {SIGKDD} {International} {Conference} on {Knowledge} {Discovery} \& {Data} {Mining}}, 2019, pp. 2623--2631.

\bibitem{anselmetti_local_2021}
\BIBentryALTinterwordspacing
G.-L.~R. Anselmetti, D.~Wierichs, C.~Gogolin, and R.~M. Parrish, ``\BIBforeignlanguage{en}{Local, expressive, quantum-number-preserving {VQE} ans√§tze for fermionic systems},'' \emph{\BIBforeignlanguage{en}{New J. Phys.}}, vol.~23, no.~11, p. 113010, Nov. 2021. [Online]. Available: \url{https://dx.doi.org/10.1088/1367-2630/ac2cb3}
\BIBentrySTDinterwordspacing

\bibitem{kerenidis_quantum_2022}
\BIBentryALTinterwordspacing
I.~Kerenidis and A.~Prakash, ``Quantum machine learning with subspace states,'' Feb. 2022. [Online]. Available: \url{http://arxiv.org/abs/2202.00054}
\BIBentrySTDinterwordspacing

\bibitem{kazdaghli_improved_2023}
\BIBentryALTinterwordspacing
S.~Kazdaghli, I.~Kerenidis, J.~Kieckbusch, and P.~Teare, ``Improved clinical data imputation via classical and quantum determinantal point processes,'' Dec. 2023. [Online]. Available: \url{http://arxiv.org/abs/2303.17893}
\BIBentrySTDinterwordspacing

\bibitem{thakkar_improved_2024}
\BIBentryALTinterwordspacing
S.~Thakkar, S.~Kazdaghli, N.~Mathur, I.~Kerenidis, A.~J. Ferreira-Martins, and S.~Brito, ``Improved {Financial} {Forecasting} via {Quantum} {Machine} {Learning},'' Apr. 2024. [Online]. Available: \url{http://arxiv.org/abs/2306.12965}
\BIBentrySTDinterwordspacing

\bibitem{kerenidis_quantum_2020}
\BIBentryALTinterwordspacing
I.~Kerenidis, J.~Landman, and A.~Prakash, ``Quantum {Algorithms} for {Deep} {Convolutional} {Neural} {Networks},'' in \emph{International {Conference} on {Learning} {Representations}}, 2020. [Online]. Available: \url{http://arxiv.org/abs/1911.01117}
\BIBentrySTDinterwordspacing

\bibitem{collobert_parallel_2001}
\BIBentryALTinterwordspacing
R.~Collobert, S.~Bengio, and Y.~Bengio, ``A {Parallel} {Mixture} of {SVMs} for {Very} {Large} {Scale} {Problems},'' in \emph{Advances in {Neural} {Information} {Processing} {Systems}}, T.~Dietterich, S.~Becker, and Z.~Ghahramani, Eds., vol.~14.\hskip 1em plus 0.5em minus 0.4em\relax MIT Press, 2001. [Online]. Available: \url{https://proceedings.neurips.cc/paper_files/paper/2001/file/36ac8e558ac7690b6f44e2cb5ef93322-Paper.pdf}
\BIBentrySTDinterwordspacing

\bibitem{theis_generative_2015}
\BIBentryALTinterwordspacing
L.~Theis and M.~Bethge, ``Generative {Image} {Modeling} {Using} {Spatial} {LSTMs},'' Sep. 2015. [Online]. Available: \url{http://arxiv.org/abs/1506.03478}
\BIBentrySTDinterwordspacing

\bibitem{deisenroth_distributed_2015}
\BIBentryALTinterwordspacing
M.~Deisenroth and J.~W. Ng, ``Distributed {Gaussian} {Processes},'' in \emph{Proceedings of the 32nd {International} {Conference} on {Machine} {Learning}}, ser. Proceedings of {Machine} {Learning} {Research}, F.~Bach and D.~Blei, Eds., vol.~37.\hskip 1em plus 0.5em minus 0.4em\relax Lille, France: PMLR, Jul. 2015, pp. 1481--1490. [Online]. Available: \url{https://proceedings.mlr.press/v37/deisenroth15.html}
\BIBentrySTDinterwordspacing

\bibitem{verdon_learning_2019}
\BIBentryALTinterwordspacing
G.~Verdon, M.~Broughton, J.~R. McClean, K.~J. Sung, R.~Babbush, Z.~Jiang, H.~Neven, and M.~Mohseni, ``Learning to learn with quantum neural networks via classical neural networks,'' Jul. 2019. [Online]. Available: \url{http://arxiv.org/abs/1907.05415}
\BIBentrySTDinterwordspacing

\bibitem{wilson_optimizing_2021}
\BIBentryALTinterwordspacing
M.~Wilson, R.~Stromswold, F.~Wudarski, S.~Hadfield, N.~M. Tubman, and E.~G. Rieffel, ``\BIBforeignlanguage{en}{Optimizing quantum heuristics with meta-learning},'' \emph{\BIBforeignlanguage{en}{Quantum Mach. Intell.}}, vol.~3, no.~1, p.~13, Apr. 2021. [Online]. Available: \url{https://doi.org/10.1007/s42484-020-00022-w}
\BIBentrySTDinterwordspacing

\bibitem{scala_general_2023}
\BIBentryALTinterwordspacing
F.~Scala, A.~Ceschini, M.~Panella, and D.~Gerace, ``A {General} {Approach} to {Dropout} in {Quantum} {Neural} {Networks},'' \emph{Adv Quantum Tech}, p. 2300220, Dec. 2023. [Online]. Available: \url{http://arxiv.org/abs/2310.04120}
\BIBentrySTDinterwordspacing

\bibitem{kobayashi_overfitting_2022}
\BIBentryALTinterwordspacing
M.~Kobayashi, K.~Nakaji, and N.~Yamamoto, ``\BIBforeignlanguage{en}{Overfitting in quantum machine learning and entangling dropout},'' \emph{\BIBforeignlanguage{en}{Quantum Mach. Intell.}}, vol.~4, no.~2, p.~30, Nov. 2022. [Online]. Available: \url{https://doi.org/10.1007/s42484-022-00087-9}
\BIBentrySTDinterwordspacing

\bibitem{cheng_information_2018}
\BIBentryALTinterwordspacing
S.~Cheng, J.~Chen, and L.~Wang, ``\BIBforeignlanguage{en}{Information {Perspective} to {Probabilistic} {Modeling}: {Boltzmann} {Machines} versus {Born} {Machines}},'' \emph{\BIBforeignlanguage{en}{Entropy}}, vol.~20, no.~8, p. 583, Aug. 2018. [Online]. Available: \url{https://www.mdpi.com/1099-4300/20/8/583}
\BIBentrySTDinterwordspacing

\bibitem{liu_differentiable_2018}
\BIBentryALTinterwordspacing
J.-G. Liu and L.~Wang, ``Differentiable learning of quantum circuit {Born} machines,'' \emph{Phys. Rev. A}, vol.~98, no.~6, p. 062324, Dec. 2018. [Online]. Available: \url{https://link.aps.org/doi/10.1103/PhysRevA.98.062324}
\BIBentrySTDinterwordspacing

\bibitem{benedetti_generative_2019}
\BIBentryALTinterwordspacing
M.~Benedetti, D.~Garcia-Pintos, O.~Perdomo, V.~Leyton-Ortega, Y.~Nam, and A.~Perdomo-Ortiz, ``\BIBforeignlanguage{en}{A generative modeling approach for benchmarking and training shallow quantum circuits},'' \emph{\BIBforeignlanguage{en}{npj Quantum Inf}}, vol.~5, no.~1, pp. 1--9, May 2019. [Online]. Available: \url{https://www.nature.com/articles/s41534-019-0157-8}
\BIBentrySTDinterwordspacing

\bibitem{coyle_born_2020}
\BIBentryALTinterwordspacing
B.~Coyle, D.~Mills, V.~Danos, and E.~Kashefi, ``\BIBforeignlanguage{en}{The {Born} supremacy: quantum advantage and training of an {Ising} {Born} machine},'' \emph{\BIBforeignlanguage{en}{npj Quantum Inf}}, vol.~6, no.~1, pp. 1--11, Jul. 2020. [Online]. Available: \url{https://www.nature.com/articles/s41534-020-00288-9}
\BIBentrySTDinterwordspacing

\bibitem{jain_quantum_2024}
\BIBentryALTinterwordspacing
N.~Jain, J.~Landman, N.~Mathur, and I.~Kerenidis, ``\BIBforeignlanguage{en}{Quantum {Fourier} networks for solving parametric {PDEs}},'' \emph{\BIBforeignlanguage{en}{Quantum Sci. Technol.}}, vol.~9, no.~3, p. 035026, May 2024. [Online]. Available: \url{https://dx.doi.org/10.1088/2058-9565/ad42ce}
\BIBentrySTDinterwordspacing

\bibitem{majumder_variational_2023}
\BIBentryALTinterwordspacing
A.~Majumder, M.~Krumm, T.~Radkohl, H.~P. Nautrup, S.~Jerbi, and H.~J. Briegel, ``Variational measurement-based quantum computation for generative modeling,'' Oct. 2023. [Online]. Available: \url{http://arxiv.org/abs/2310.13524}
\BIBentrySTDinterwordspacing

\end{thebibliography}
\end{document}